\documentclass[10pt, draftclsnofoot, journal, onecolumn]{IEEEtran}
\pdfoutput=1

\usepackage{algorithm} 
\usepackage[noend]{algpseudocode} 
\usepackage{amssymb, amsmath, amsthm} 

\usepackage[english]{babel}
\usepackage{bbm} 
\usepackage[url=false, doi=false, eprint=false, isbn=false,style=ieee, backend=biber]{biblatex}
\usepackage{booktabs} 

\usepackage{caption}
\usepackage[aboveskip=-3pt]{subcaption}
\usepackage[babel]{csquotes}


\usepackage{enumerate} 

\usepackage{glossaries} 

\usepackage{hyperref} 

\usepackage{mathtools} 
\usepackage{multirow} 

\usepackage{placeins}

\usepackage{siunitx} 
\usepackage{subcaption} 
\usepackage{stackengine}

\usepackage{tikz}

\usepackage{upgreek} 
\newtheorem{conj}{Conjecture}
\newtheorem{theorem}{Theorem}

\def\presuper#1#2%
{\mathop{}%
	\mathopen{\vphantom{#2}}^{#1}%
	\kern-\scriptspace%
	#2}

\newacronym{ar}{AR}{autoregressive}

\newacronym{bh}{BH}{Benjamini-Hochberg procedure}

\newacronym{cca}{CCA}{canonical correlation analysis}
\newacronym{cdf}{CDF}{cumulative distribution function}
\newacronym{lfdrmultcost}{LFDR-MULT-COST}{\textbf{\gls{lfdr}}-based \textbf{mu}ltiple hypothesis \textbf{t}esting procedure for complete \textbf{co}rrelation \textbf{st}ructure identification}


\newacronym{edf}{EDF}{empirical distribution function}
\newacronym{evd}{EVD}{eigenvalue decomposition}

\newacronym{fmri}{fMRI}{functional magnetic resonance imaging}
\newacronym{fdr}{FDR}{false discovery rate}
\newacronym{fwer}{FWER}{family-wise error rate}


\newacronym{iid}{i.i.d.}{independent and identically distributed}


\newacronym{lfdr}{lfdr}{local false discovery rate}

\newacronym{mcca}{mCCA}{multiset \gls{cca}}
\newacronym{mht}{MHT}{multiple hypothesis testing}


\newacronym{os}{OSP}{one-step procedure}

\newacronym{pdf}{PDF}{probability density function}


\newacronym{rmt}{RMT}{random matrix theory}

\newacronym{snr}{SNR}{signal-to-noise ratio}

\newacronym{ts}{TSP}{two-step procedure}
\newcommand{\actMat}{\boldsymbol{M}}
\newcommand{\actMatEst}{\widehat{\mathbf{M}}}
\newcommand{\actEl}[2]{M_{#1}^{(#2)}}
\newcommand{\actElEst}[2]{\hat{\mathrm{M}}_{#1}^{(#2)}}
\newcommand{\alpFAStOne}{{}_\text{I}\alpha_{\text{FA}}}
\newcommand{\alpFAStTwo}{{}_\text{II}\alpha_{\text{FA}}}
\DeclareMathOperator*{\argmax}{argmax}


\newcommand{\btIdx}{b}

\newcommand{\canNumActSrc}{d}
\newcommand{\chkNormExp}[2][\setIdx]{\mu_{#1}^{(#2)}}

\newcommand{\cmpCohMat}{\boldsymbol{C}}
\newcommand{\cmpCorMat}{\boldsymbol{R}}
\newcommand{\cmpCorMatEst}{\corMatEst}
\newcommand{\cmpBlkCorMat}{\boldsymbol{R}_\text{D}}
\newcommand{\cmpBlkCorMatEst}{\corMatEst[\text{D}]}
\newcommand{\cmpCohMatEst}{\widehat{\mathbf{C}}}
\newcommand{\cmpCohMatEstBt}[1][\btIdx]{{}^{#1}\cmpCohMatEst^\ast}
\newcommand{\corCoeffSym}{\rho}
\newcommand{\corCoeffSrc}[2][\setIdx, \setIdx^\prime]{\corCoeffSym_{#1}^{(#2)}}

\newcommand{\corMat}[1][]{\boldsymbol{R}_{#1}}
\newcommand{\corMatEst}[1][]{\widehat{\mathbf{R}}_{#1}}
\newcommand{\corSet}[1][\srcIdx]{\overline{\mathcal{K}}^{(#1)}}

\newcommand{\E}{\mathsf{E}}
\newcommand{\eigVal}[1][\srcIdx]{\lambda^{(#1)}}
\newcommand{\eigValEst}[1][\srcIdx]{\hat{{\uplambda}}^{(#1)}}
\newcommand{\eigValEstBt}[2][\btIdx]{{}^{#1}\hat{{\uplambda}}^{(#2)^\ast}}
\newcommand{\eigValPdf}[1][\srcIdx]{f_{\hat{{\uplambda}}^{(#1)}}({\lambda})}
\newcommand{\eigVec}[1][\srcIdx]{{\boldsymbol{\eigVecSym}^{(#1)}}}
\newcommand{\eigVecEst}[1][\srcIdx]{{\hat{\mathbf{\eigVecSym}}^{(#1)}}}
\newcommand{\eigVecEstBt}[2][\btIdx]{{}^{#1}{\hat{\mathbf{\eigVecSym}}^{(#2)^\ast}}}
\newcommand{\eigVecChk}[2][\setIdx]{{\boldsymbol{\eigVecSym}_{#1}^{(#2)}}}
\newcommand{\eigVecChkNorm}[2][\setIdx]{{\mathrm{c}_{#1}^{(#2)}}}
\newcommand{\eigVecChkNormBt}[3][\btIdx]{{^{#1}\mathrm{c}_{#2}^{(#3)^\ast}}}
\newcommand{\eigVecChkEst}[2][\setIdx]{{\hat{\mathbf{\eigVecSym}}_{#1}^{(#2)}}}
\newcommand{\eigVecChkEstBt}[3][\btIdx]{{}^{#1}{\hat{\mathbf{\eigVecSym}}_{#2}^{(#3)^\ast}}}
\newcommand{\eigVecChkPdf}[2][\setIdx]{f_{\hat{\mathbf{\eigVecSym}}_{#1}^{(#2)}}(\boldsymbol{\eigVecSym})}
\newcommand{\eigVecSym}{u}

\newcommand{\fdr}{\mathsf{FDR}}
\newcommand{\fdrCmp}{\mathsf{FDR}_\text{cmp}}
\newcommand{\fdrThrAtom}{\alpha}
\newcommand{\fdrThrCmp}{\alpha_\text{cmp}}

\newcommand{\HAltAtom}[2][\setIdx]{\overline{H}_{{#1}}^{(#2)}}
\newcommand{\HNulAtom}[2][\setIdx]{H_{{#1}}^{(#2)}}
\newcommand{\HAltCmp}[1][\srcIdx]{\overline{H}^{(#1)}}
\newcommand{\HNulCmp}[1][\srcIdx]{H^{(#1)}}
\newcommand{\HSetAlt}[2][\setIdx]{{}_\text{II}\overline{H}_{#1}^{(#2)}}
\newcommand{\HSetNul}[2][\setIdx]{{}_\text{II}H_{#1}^{(#2)}}

\newcommand{\HSigAlt}[1][\canNumActSrc]{{}_\text{I}\overline{H}^{(#1)}}
\newcommand{\HSigNul}[1][\canNumActSrc]{{}_\text{I}H^{(#1)}}

\newcommand{\idc}[1]{\mathbbm{1}\!\left\{#1\right\}}
\newcommand{\iMat}[1]{\boldsymbol{I}_{#1\times#1}}


\newcommand{\mixMat}[1][\setIdx]{\boldsymbol{A}_{#1}}

\newcommand{\mixMatEl}[3][\setIdx]{a_{#1}^{(#2, #3)}}

\newcommand{\nonCorSet}[1][\srcIdx]{{\mathcal{K}}^{(#1)}}
\newcommand{\numActSrc}{D}
\newcommand{\numActSrcEst}{\hat{\numActSrc}}
\newcommand{\numBtSam}{B}
\newcommand{\numRej}{\mathrm{R}}
\newcommand{\numRejCmp}{\mathrm{R}_\text{cmp}}
\newcommand{\numSam}{N}
\newcommand{\numSrc}{J}
\newcommand{\numSet}{K}
\newcommand{\numSub}{I}
\newcommand{\numSubPerSet}[1][\setIdx]{I_{\setIdx}}
\newcommand{\numFalPos}{\mathrm{V}}
\newcommand{\numFalPosCmp}{\mathrm{V}_\text{cmp}}
\newcommand{\numTruPos}{\mathrm{S}}
\newcommand{\numTruPosCmp}{\mathrm{S}_\text{cmp}}
\newcommand{\nulFrc}{\pi_0}

\newcommand{\obsCorMat}[1][\setIdx]{\corMat[{\obsRVec[#1]\obsRVec[#1]}]}
\newcommand{\obsCorMatEst}[1][\setIdx]{\corMatEst[{\obsRVec[#1]\obsRVec[#1]}]}
\newcommand{\obsEl}[3][\setIdx]{{\obsSym_{#1, #2}^{(#3)}}}
\newcommand{\obsIdx}[1][]{i_{#1}}
\newcommand{\obsSym}{x}
\newcommand{\obsRVec}[1][\setIdx]{\mathbf{\obsSym}_{#1}}

\newcommand{\obsRvEl}[2][\setIdx]{{\mathrm{\obsSym}_{#1}^{(#2)}}}
\newcommand{\obsVec}[2][\setIdx]{\boldsymbol{\obsSym}_{#1, #2}}

\newcommand{\obsMat}[1][\setIdx]{\boldsymbol{X}_{#1}}
\newcommand{\obsMatBt}[2][\btIdx]{{}^{#1}\boldsymbol{X}^\ast_{#2}}

\newcommand{\p}[2][\setIdx]{p_{#1}^{(#2)}}
\newcommand{\pval}{$p$-value}
\newcommand{\pSet}[2][\setIdx]{p_{#1}^{(#2)}}
\newcommand{\pSig}[1][\canNumActSrc]{p^{(#1)}}


\newcommand{\samIdx}{n}
\newcommand{\setIdx}{k}
\newcommand{\srcEl}[3][\setIdx]{{\srcSym_{#1, #2}^{(#3)}}}
\newcommand{\srcCorMat}[1][\setIdx]{\corMat[{\srcRVec[#1]\srcRVec[#1]}]}
\newcommand{\srcCrossCorMat}{\corMat[{\srcRVec\srcRVec[\setIdx^\prime]}]}
\newcommand{\srcRvEl}[2][\setIdx]{{\mathrm{\srcSym}_{#1}^{(#2)}}}
\newcommand{\srcIdx}{j}
\newcommand{\srcMat}[1][\setIdx]{\boldsymbol{S}_{#1}}
\newcommand{\srcRVec}[1][\setIdx]{\mathbf{\srcSym}_{#1}}
\newcommand{\srcVec}[2][\setIdx]{\boldsymbol{\srcSym}_{#1, #2}}

\newcommand{\srcSym}{s}

\newcommand{\T}[2][\setIdx]{{\TSymRVar}_{{#1}}^{(#2)}}
\newcommand{\TBt}[3][\btIdx]{^{#1}{\TSymRVar}_{{#2}}^{(#3)}}
\newcommand{\TPdf}[2][\setIdx]{f_{\T[#1]{#2}}(\TSym)}
\newcommand{\TPdfAlt}[2][\setIdx]{f_{\T[#1]{#2}|\HAltAtom{\srcIdx}}(\TSym)}
\newcommand{\TPdfNull}[2][\setIdx]{f_{\T[#1]{#2}|\HNulAtom{\srcIdx}}(\TSym)}

\newcommand{\TCdfNullEst}[3][\setIdx]{\hat{F}_{\T[#1]{#2}|\HNulAtom{\srcIdx}}\left(#3\right)}
\newcommand{\TSetRVar}[2][\setIdx]{{_{\text{II}}}{\TSymRVar}_{{#1}}^{(#2)}}

\newcommand{\TSetPdfNull}[2][\setIdx]{f_{\eigVecChkNorm[#1]{#2}|\HSetNul{\srcIdx}}(c)}
\newcommand{\TSigRVar}[1][\canNumActSrc]{{\TSymRVar}^{(#1)}}

\newcommand{\TSigCdfNull}[1][\canNumActSrc]{F_{\TSigRVar[#1]|\HSigNul}(\TSym)}
\newcommand{\TSigPdfNull}[1][\canNumActSrc]{f_{\TSigRVar[#1]|\HSigNul}(\TSym)}
\newcommand{\TSym}{t}
\newcommand{\TSymRVar}{\mathrm{T}}

\includeonly{sections/prop-fw}

\addbibresource{include/references.bib}

\graphicspath{{./images/},{./figures/}}

\begin{document}
	\title{Identifying the Complete Correlation Structure in Large-Scale High-Dimensional Data Sets with Local False Discovery Rates}
	
	\author{Martin~Gölz, Tanuj Hasija, Michael Muma and Abdelhak M. Zoubir %
		\thanks{This research was supported by the German Research Foundation (DFG) under grants SCHR 1384/3-2 and ZO 215/17-2}
		\thanks{M. Gölz and A. M. Zoubir are with the Signal Processing Group, TU Darmstadt, Germany. T. Hasija is with the Signal and Systems Group, Paderborn University, Germany. Michael Muma is with the Robust Data Science Group, TU Darmstadt, Germany.\\
		corresponding author e-mail: {goelz}@spg.tu-darmstadt.de}}
	
	\markboth{Submitted to Signal Processing}%
	{Gölz \MakeLowercase{\textit{et al.}}: TBD}
	
	\maketitle
	
	
	\begin{abstract}
		The identification of the dependent components in multiple data sets is a fundamental problem in many practical applications. The challenge in these applications is that often the data sets are high-dimensional with few observations or available samples and contain latent components with unknown probability distributions. A novel mathematical formulation of this problem is proposed, which enables the inference of the underlying correlation structure with strict false positive control. In particular, the false discovery rate is controlled at a pre-defined threshold on two levels simultaneously. The deployed test statistics originate in the sample coherence matrix. The required probability models are learned from the data using the bootstrap. Local false discovery rates are used to solve the multiple hypothesis testing problem. Compared to the existing techniques in the literature, the developed technique does not assume an a priori correlation structure and work well when the number of data sets is large while the number of observations is small. The simulation results underline that it can handle the presence of distributional uncertainties, heavy-tailed noise, and outliers.
	\end{abstract}

	\begin{IEEEkeywords}
		correlation structure, correlated subspace, multiple hypothesis testing, false discovery rate, bootstrap, small sample support
	\end{IEEEkeywords}

	\section{Notation}
\label{sec:not}
	Italic normal font letters $a$ and $A$ denote deterministic scalar quantities. Deterministic vectors and matrices are represented by bold italic $\boldsymbol{a}$ and $\boldsymbol{A}$, respectively. Upright $\mathrm{A}$, $\mathbf{a}$ and $\mathbf{A}$ symbolize random variables, vectors and matrices, respectively. $f_\mathrm{A}(a)$, $F_\mathrm{A}(a)$ denote the \gls{pdf} and \gls{cdf} of random variable $\mathrm{A}$ in dependence its realization $a$; $\E[\mathrm{A}]$ is its expected value. We write $g(\mathrm{A})$ for a genericc function of $\mathrm{A}$. Calligraphic $\mathcal{A}$ denotes an arbitrary set with complement $\overline{\mathcal{A}}$ and $[A]$ is the set of non-negative integers $1, \dots, A$. Multiletter abbreviations representing mathematical quantities come sans-serif, e.g. $\fdr$. The indicator function is $\idc{\cdot}$. We write $||\mathbf{a}||$ for the Euclidean norm of $\mathbf{a}$ and $\mathrm{diag}(\mathbf{a})$ is a diagonal matrix with the elements of $\mathbf{a}$ on its main diagonal. $\widehat{a}$ denotes both, the estimator and a sample estimate of $a$.	
	\section{Problem Formulation}
\label{sec:prb-for}
	\subsection{System Model}
	\label{sec:prb-for_sys-mod}
		We consider \(\numSet\) \textit{data sets} which are composed of zero-mean, real-valued random \textit{observation vectors} \(\{\obsRVec\}_{\setIdx\in[\numSet]}\). For notational simplicity, we assume that all observation vectors contain an equal number \(\numSub\) of \textit{observations} or \textit{subjects} \(\obsRvEl{\obsIdx}\), such that \(\obsRVec = \Big[\obsRvEl{1},\dots \obsRvEl{\numSub}\Big]^\top\forall\, \setIdx\in[\numSet]\). The observation vectors are assumed to be generated by the linear mixing of the latent \textit{component vectors} \(\{\srcRVec\}_{\setIdx\in[\numSet]}\in \mathbb{R}^\numSrc\). The component vectors each contain \(\numSrc\leq\numSub\) \textit{components}. The \(\srcIdx\)th component of data set \(\setIdx\) is denoted by \(\srcRvEl{\srcIdx}\). The relation between the observations and components is hence
		\begin{equation}
			\label{eq:theo-sig-mod}
			\underbrace{
				\begin{bmatrix}
					\obsRvEl{1}\\
					\vdots\\
					\obsRvEl{\numSub}
				\end{bmatrix}}_{\obsRVec} = 
				\underbrace{
					\begin{bmatrix}
						\mixMatEl{1}{1} & \cdots & \mixMatEl{1}{\numSrc}\\
						\vdots & \ddots & \vdots \\
						\mixMatEl{\numSub}{1} & \cdots & \mixMatEl{\numSub}{\numSrc}
					\end{bmatrix}}_{\mixMat} \cdot
				\underbrace{
					\begin{bmatrix}
						\srcRvEl{1}\\
						\vdots \\
						\srcRvEl{\numSrc}
					\end{bmatrix}}_{\srcRVec},
				\quad \setIdx \in[\numSet],
		\end{equation}
		where \(\mixMat \in \mathbb{R}^{{\numSub}\times \numSrc}\)  is an unknown deterministic mixing matrix with full column rank. Without loss of generality, the components are assumed to be zero-mean and unit variance, i.e.,  
		\begin{align}
			\E\Big[\srcRvEl{\srcIdx}\Big] 	&= {0}, \qquad \text{and}\\
			\E\Big[\srcRvEl{\srcIdx}^2\Big]	&= 1, \qquad \forall\,\setIdx\in[\numSet], \srcIdx\in[\numSrc].
		\end{align}
		For each \(\setIdx\in[\numSet]\), we observe \(\numSam\) \(\numSub\)-dimensional \gls{iid} realizations of \(\obsRVec\) that we also refer to as the \textit{observation samples} \(\obsVec{1}, \dots, \obsVec{\numSam}\), where $\obsVec{\samIdx} = \Big[\obsEl{\samIdx}{1}, \dots, \obsEl{\samIdx}{\numSub}\Big]^\top$. These are summarized in the \(k\)th \(\numSub\times\numSam\) \textit{observation matrix} \(\obsMat = [\obsVec{1}~ \dots~ \obsVec{\numSam}]\). The in practice unobservable \(k\)th \(\numSrc\times\numSam\) \textit{component matrix} \(\srcMat\) is defined analogously. The sample index \(\samIdx\in[\numSam]\) is consistent over the sets and components, i.e., \(\obsVec{\samIdx}, \srcVec{\samIdx}, \obsVec[\setIdx^\prime]{\samIdx}\) and \(\srcVec[\setIdx^\prime]{\samIdx}\) denote the paired realizations of random vectors \(\obsRVec[\setIdx], \srcRVec, \obsRVec[\setIdx^\prime],\) and \(\srcRVec[\setIdx^\prime]\) for \(\setIdx, \setIdx^\prime\in[\numSet]\). Hence, the sample-based equivalent to Eq.~\eqref{eq:theo-sig-mod} is
		\begin{equation}
			\label{eq:sam-sig-mod}
			\underbrace{\begin{bmatrix}
					\obsEl{1}{1} & \dots & \obsEl{\numSam}{1}\\
					\vdots & \ddots & \vdots \\
					\obsEl{1}{\numSub} & \dots & \obsEl{\numSam}{\numSub}\\
				\end{bmatrix}}_{\obsMat} =
						\underbrace{\begin{bmatrix}
							\mixMatEl{1}{1} & \cdots & \mixMatEl{1}{\numSrc}\\
							\vdots & \ddots & \vdots \\
							\mixMatEl{\numSub}{1} & \cdots & \mixMatEl{\numSub}{\numSrc}
						\end{bmatrix}}_{\mixMat} \cdot
						\underbrace{\begin{bmatrix}
							\srcEl{1}{1} & \dots & \srcEl{\numSam}{1}\\
							\vdots & \ddots & \vdots \\
							\srcEl{1}{\numSrc} & \dots & \srcEl{\numSam}{\numSrc}\\
						\end{bmatrix}}_{\srcMat},
						\quad \setIdx \in[\numSet].
		\end{equation}
	\subsection{Underlying correlation structure}
	\label{sec:prb-for_corr-str}
		The observation vectors \(\obsRVec\) are the result of a linear combination of the underlying components. In practice, the observations are the outcome of the interaction between the different physical processes \(\big\{\srcRvEl{\srcIdx}\big\}_{\srcIdx\in[\numSrc]}\) with an environment described by \(\mixMat\). The correlation structure between the underlying source components of different data sets defines the degree to which the observations in different sets originate in the same cause. Mathematically, this structure can be expressed using the component cross-covariance matrix between data sets \(\setIdx, \setIdx^\prime\in[\numSet]\)
		\begin{equation}
			\srcCrossCorMat = \E\Big[\srcRVec\srcRVec[\setIdx^\prime]^\top\Big]
		\end{equation}

		The entry at position \((\srcIdx, \srcIdx^\prime) \in [\numSrc] \times [\numSrc]\) of \(\srcCrossCorMat\) is the correlation coefficient \(\corCoeffSrc{\srcIdx, \srcIdx^\prime}\) between the \(\srcIdx\)th component of set \(\setIdx\) and the \(\srcIdx^\prime\)th component of set \(\setIdx^\prime\). If \(\corCoeffSrc{\srcIdx, \srcIdx^\prime}\neq 0\), the \(\srcIdx\)th component of set \(\setIdx\) is (partially) driven by the same underlying physical phenomenon as the \(\srcIdx^\prime\)th component of set \(\setIdx^\prime\).

			We impose the following assumptions to facilitate the learning of the structure of the latent components from the observations.
			\begin{enumerate}[I)]
				\item Intraset independence: the components are uncorrelated \textit{within} each set, i.e.,   
				\begin{equation}
					\srcCorMat = \E\Big[\srcRVec\srcRVec^\top\Big] = \iMat{\numSrc},
				\end{equation}
				where \(\iMat{\numSrc}\) is the \(\numSrc\times\numSrc\) identity matrix. This is a mild assumption: If there exists correlation between the components of a data set, those can be summarized as a single component that absorbs all components that are correlated within the set. Naturally, this reduces the dimension \(\numSrc\) of the component vector.
				\item Pairwise interset dependence: the components between any two data sets \(\setIdx, \setIdx^\prime\) may only be correlated pairwise. Thus, component $\srcRvEl[\setIdx^\prime]{\srcIdx}$ may correlate with component $\srcRvEl[\setIdx]{\srcIdx}$, but not with \(\srcRvEl[\setIdx^\prime]{\srcIdx^\prime}\), \(\srcIdx\neq\srcIdx^\prime\).
				This implies that the component cross-covariance matrix between data sets $\setIdx$ and $\setIdx^\prime \,\forall\,\setIdx, \setIdx^\prime\in[\numSet]:\setIdx\neq\setIdx^\prime$ is diagonal,
				\begin{equation}
					\srcCrossCorMat = \mathrm{diag}\Big(\Big[\corCoeffSrc{1}, \corCoeffSrc{2}, \ldots, \corCoeffSrc{\numSrc}\Big]\Big),
				\end{equation}
				where we use $\corCoeffSrc{\srcIdx} \equiv \corCoeffSrc{\srcIdx, \srcIdx}$ for readability. This assumption is common place in the literature on correlation analysis for multiple data sets, see \cite{Hasija2020} and the references therein. To the best of our knowledge, all existing methods to extract the correlation structure between multiple sets require this assumption. 
				\item The correlations are transitive. Hence, if $\corCoeffSrc{\srcIdx} > 0$ and $\corCoeffSrc[\setIdx, \setIdx^{\prime\prime}]{\srcIdx} > 0$, then also $\corCoeffSrc[\setIdx^{\prime}, \setIdx^{\prime\prime}]{\srcIdx} > 0\,\forall\, \setIdx, \setIdx^\prime, \setIdx^{\prime\prime} \in[\numSet]$.
			\end{enumerate}
			The correlation structure is fully specified by the correlation coefficients \(\corCoeffSrc{\srcIdx}, \forall\,\srcIdx\in[\numSrc]\) on the main diagonal of the correlation matrices \(\srcCrossCorMat, \setIdx, \setIdx^\prime\in[\numSet], \setIdx\neq\setIdx^\prime\), since the off-diagonal entries are all zero due to Assumption~II) and \(\srcCorMat\) is identity according to Assumption~I. We define the \(\numSrc\times\numSet\) \textit{activation matrix}
			\begin{equation}
				\label{eq:act-mat-def}
				\actMat =
				\begin{bmatrix}
					\actEl{1}{1} & \dots & \actEl{\numSet}{1}\\
					\vdots & \ddots & \vdots\\
					\actEl{1}{\numSrc} & \dots & \actEl{\numSet}{\numSrc}
				\end{bmatrix}
			\end{equation}
			to indicate which components are correlated across which data sets. If the \(\srcIdx\)th component is correlated between data set \(\setIdx\in[\numSet]\) and at least one other data set \(\setIdx^\prime\in[\numSet]:\setIdx\neq\setIdx^\prime\), i.e., $\exists\, \corCoeffSrc{\srcIdx}>0$, then \(\actEl{\setIdx}{\srcIdx} = \actEl{\setIdx^\prime}{\srcIdx} = 1\). Otherwise, \(\actEl{\setIdx}{\srcIdx} = 0\). Thus, either \(\sum_{k = 1}^K\actEl{k}{j} \geq 2\) or \(\sum_{k = 1}^K\actEl{k}{j} = 0\).
We write \(\corSet,\srcIdx\in[\numSrc]\) to denote the collection of those sets whose \(\srcIdx\)th component is correlated. Its complement is \(\nonCorSet\). 
If all $\srcCrossCorMat, \,\forall\,\setIdx,\setIdx^\prime\in[\numSet]$ were observable, $\actMat$ could be deduced directly from their non-zero elements. We illustrate the relation between the assumptions, the correlation matrices and the activation matrix by a low-dimensional example with \(\numSet = 4\) data sets in Fig.~\ref{fig:cor-struc-ex}.

			\begin{figure}
				\begin{tabular}{cc}
					\begin{minipage}{0.45\textwidth}
						\begin{flushleft}
							\begin{tabular}{c| c c c c c c}
								\(\srcIdx\) & \(\corCoeffSrc[1, 2]{\srcIdx}\) & \(\corCoeffSrc[1, 3]{\srcIdx}\) & \(\corCoeffSrc[1, 4]{\srcIdx}\) & \(\corCoeffSrc[2, 3]{\srcIdx}\) & \(\corCoeffSrc[2, 4]{\srcIdx}\) & \(\corCoeffSrc[3, 4]{\srcIdx}\)\\\toprule
								\(\mathit{1}\) & \(.7\) & \(.8\) & \(.8\) & \(.7\) & \(.8\) & \(.9\) \\
								\(\mathit{2}\) & \(.9\) & \(0\) & \(.75\) & \(0\) & \(.8\) & \(0\) \\
								\(\mathit{3}\) & \(0\) & \(0\) & \(0\) & \(.9\) & \(0\) & \(0\) \\
								\(\mathit{4}\) & \(0\) & \(.4\) & \(.35\) & \(0\) & \(0\) & \(.4\) \\
								\(\mathit{5}\) & \(0\) & \(0\) & \(0\) & \(.5\) & \(0\) & \(0\) \\
								\(\mathit{6}\) & \(0\) & \(0\) & \(0\) & \(0\) & \(0\) & \(0\)\\
								\(\mathit{7}\) & \(0\) & \(0\) & \(0\) & \(0\) & \(0\) & \(0\)
							\end{tabular}
					  \end{flushleft}
					\end{minipage}&
					\begin{minipage}{.05\textwidth}
						\begin{equation*}
							\Longleftrightarrow
						\end{equation*}
					\end{minipage}
					\begin{minipage}{0.45\textwidth}
						\vphantom{j}
							\begin{equation*}
								M = 
								\begin{bmatrix}
									1 & 1 & 1 & 1 \\
									1 & 1 & 0 & 1 \\
									0 & 1 & 1 & 0 \\
									1 & 0 & 1 & 1 \\
									0 & 1 & 1 & 0 \\
									0 & 0 & 0 & 0 \\
									0 & 0 & 0 & 0 	
								\end{bmatrix}
							\end{equation*}
					\end{minipage}
				\end{tabular}
				\caption{An example with \(\numSet=4\) data sets and \(7\) components per set. The correlation coefficients are given in the table on the left and the corresponding activation matrix is shown on the right. Such correlation structures arise in a large number of practical applications. For instance, the different data sets may correspond to recordings by different cameras that are deployed in different locations. The components may then be the visual signatures of different visual objects and identifying correlated components may then help to track the movement of a certain object across the surveillance area. In environmental monitoring, each data set may correspond to the measurements obtained with a sensor that records multiple environmental measures such as temperature, humidity or pollution. The components could then correspond to different metheorological or man-made phenomena and knowing the correlation structure across different senors would be useful in both, identifying the presence of such underlying phenomena and localizing the areas which they affect.}
				\label{fig:cor-struc-ex}
			\end{figure}
			\paragraph*{Remark on the component indices}
				The methods discussed in this work assume that the "absolute" index $\srcIdx\in[\numSrc]$ of a component is of limited interest. The covered procedures generally assume that the components are sorted such that component with index $\srcIdx = 1$ exhibits the strongest correlation across all data sets and with $\srcIdx = \numSrc$ the weakest. This is well-justified in practical applications where the underlying component and mixing matrices are typically unknown and the absolute ordering of rows of $\mixMat$ and columns of $\srcMat$ does not play a role.

	\section{Correlation Structure Estimation based on the Coherence Matrix}
\label{sec:rel-wrk}
	We intend to identify the unknown structure between the latent components across the different data sets.
	In practice, $\actMat$ is not directly observable and must be estimated based on the sample-based model given in Eq.~\eqref{eq:sam-sig-mod}. We denote its estimate by $\actMatEst$, with entries $\actElEst{\setIdx}{\srcIdx}$. The problem of estimating $\actMat$ was first considered in . The multiple testing procedure proposed in this work utilizes some of the theoretical findings from \cite{Hasija2020}. Hence, we summarize the procedure developed in \cite{Hasija2020} in what follows. 

	\subsection{The \gls{ts} \cite{Hasija2020}}
	\label{sec:ts-prc}
		The number of components that are correlated across at least two data sets be denoted by $\numActSrc\leq\numSrc$. $\numActSrc$ is equivalent to the total number of rows in $\actMat$ with at least two non-zero elements. The \gls{ts} from \cite{Hasija2020} bases upon the eigenvalues and eigenvectors of the so-called \textit{composite coherence matrix}
		\begin{equation}
		\label{eq:coh-mat}
			\cmpCohMat = \cmpBlkCorMat^{-\frac{1}{2}}\cmpCorMat\cmpBlkCorMat^{-\frac{1}{2}} =\cmpBlkCorMat^{-\frac{1}{2}}\!\cdot\!\E\bigg[\Big[\obsRVec[1]^\top, \dots, \obsRVec[\numSet]^\top\Big]^\top\cdot\Big[\obsRVec[1]^\top, \dots, \obsRVec[\numSet]^\top\Big]\bigg]\cdot\cmpBlkCorMat^{-\frac{1}{2}}.
		\end{equation}
		$\cmpBlkCorMat = \mathsf{blkdiag}(\obsCorMat[1], \dots, \obsCorMat[\numSet])$ is block-diagonal with blocks $\obsCorMat = \E\Big[\obsRVec\obsRVec^\top\Big]$ and $(\cdot)^{-\frac{1}{2}}$ denotes the inverse of the matrix square root.
		 
		The eigenvalues of $\cmpCohMat$ sorted in descending order be denoted by $\eigVal[1], \dots, \eigVal[\numSrc\numSet]$ with corresponding eigenvectors \(\eigVec[1], \dots, \eigVec[\numSrc\numSet]\). A series of theorems in \cite{Hasija2020} proves that under certain conditions, exactly one out of the $\numSet$ eigenvalues associated with the $\srcIdx$th correlated component is greater than $1$, $\srcIdx\in[\numActSrc]$. The remaining $\numSet-1$ eigenvalues of component \(\srcIdx\in[\numActSrc]\) are all \(\leq1\). For the $\srcIdx$th component $\numActSrc<\srcIdx\leq \numSrc$ that is uncorrelated across all $\numSet$ data sets, all associated eigenvalues are equal to $1$. In total, exactly $\numActSrc$ eigenvalues of $\cmpCohMat$ are greater than $1$. Hence, the correlation structure analysis in \cite{Hasija2020} focuses exclusively on the largest $\numSrc$ eigenvalues $\eigVal$ and the corresponding $\numSrc\cdot\numSet$-dimensional eigenvectors $\eigVec$, $\srcIdx\in[\numSrc]$. The latter can be decomposed as $\eigVec = \bigg[\eigVecChk[1]{\srcIdx}^\top, \dots, \eigVecChk[\numSet]{\srcIdx}^\top\bigg]^\top$, where the $\numSrc$-dimensional \textit{eigenvector chunk} $\eigVecChk{\srcIdx}$ summarizes the contribution of data set $\setIdx$ to eigenvalue $\eigVal$, $\srcIdx\in[\numSrc]$. With $\boldsymbol{0}_{\numSrc}$ denoting a $\numSrc$-dimensional all-zero vector, the following has been shown about the eigenvector chunks \cite{Hasija2020}:
		\begin{equation}
			\eigVecChk{\srcIdx}
				\begin{cases}
					= \boldsymbol{0}_{\numSrc}, 	& \text{if $\srcIdx\in[\numActSrc]$ and $\setIdx \in \corSet$, i.e., the $\srcIdx$th component is correlated between set $\setIdx$ and another set},\\
					\neq \boldsymbol{0}_{\numSrc},	& \text{if $\srcIdx\in[\numActSrc]$ and $\setIdx \in \nonCorSet$, i.e., the $\srcIdx$th component is not correlated between set $\setIdx$ and any other set}.
				\end{cases}
		\end{equation}
		This property of the coherence matrix eigenvector chunks has lead to a two-step procedure for identifying correlations across sets in \cite{Hasija2020}. 
		\begin{enumerate}[I)]
			\item Find all $\numActSrc$ eigenvalues of $\cmpCohMat$ for which $\eigVal>1,\, \srcIdx\in[\numActSrc]$ and set $\actEl{\setIdx}{\srcIdx} = 0\, \forall\, \srcIdx > \numActSrc, \setIdx\in[\numSet]$. 
			\item The remaining entries of $\actMat$ are found with $\actEl{\setIdx}{\srcIdx} = \idc{\Big\vert\Big\vert\eigVecChk{\srcIdx}\Big\vert\Big\vert^2 > 0},\,\forall\, \srcIdx\in[\numActSrc], \setIdx\in[\numSet]$.
		\end{enumerate}
		In practice, $\cmpCohMat$ is unknown and has to be estimated from the data $\obsMat,\setIdx\in[\numSet]$. The estimate $\cmpCohMatEst$ \cite{Hasija2020} is a random matrix. The sample covariance matrix is deployed to estimate the quantities in Eq.~\eqref{eq:coh-mat}, 
		\begin{align}
		\label{eq:sam-coh-est}
			\cmpBlkCorMatEst 	&= \mathsf{blkdiag} (\obsCorMatEst[1], \dots, \obsCorMatEst[\numSet]),\\	
			\obsCorMatEst		&= \frac{1}{\numSam} \sum_{\samIdx = 1}^{\numSam}\obsVec{\samIdx}\obsVec{\samIdx}^\top,\\
			\cmpCorMatEst		&= \frac{1}{\numSam} \sum_{\samIdx = 1}^{\numSam} \Big[\boldsymbol{\obsSym}_{1, \samIdx}^\top, \dots, \boldsymbol{\obsSym}_{\numSet, \samIdx}^\top\Big]^\top\cdot\Big[\boldsymbol{\obsSym}_{1, \samIdx}^\top, \dots, \boldsymbol{\obsSym}_{\numSet, \samIdx}^\top\Big].
		\end{align}
		The eigenvalues $\eigValEst$ and eigenvectors $\eigVecEst$ of $\cmpCohMatEst$ are random variables and vectors, respectively. As a consequence, neither the eigenvalues of $\cmpCohMat$ associated with uncorrelated components are strictly $\leq1$, nor are the chunk norms $\Big\vert\Big\vert\eigVecChkEst{\srcIdx}\Big\vert\Big\vert^2, \srcIdx\in[\numActSrc]$ exactly equal to $0$ if $\setIdx\in\nonCorSet$. 
		Instead, the eigenvalues and eigenvector chunk norms follow \gls{pdf}s $\eigValPdf$ and $\eigVecChkPdf{\srcIdx}$. To infer the correlation structure from those random quantities, the authors of \cite{Hasija2020} propose to formulate both steps as separate hypothesis testing problems. 
		\paragraph*{Step~I}
		By iterating over $\canNumActSrc\in[\numSrc]$, perform a series of binary tests between the $\canNumActSrc$ \text{total correlations} null hypothesis $\HSigNul$ and the $\canNumActSrc$ \text{total correlations} alternative $\HSigAlt$  where
			\begin{align}
			\label{eq:s1-hyp}
				\HSigNul:  &\qquad\canNumActSrc = \numActSrc,\quad \text{i.e., there are exactly $\canNumActSrc$ components with non-zero inter-set correlation},\\
				\HSigAlt:  &\qquad\canNumActSrc > \numActSrc,\quad \text{i.e., there are more than $\canNumActSrc$ components with non-zero inter-set correlation}.
			\end{align}
			The decisions between \(\HSigNul\) and \(\HSigAlt\) are based on test statistics \(\TSigRVar = g\big(\eigValEst[1],\dots, \eigValEst[\canNumActSrc]\big)\), where \(g: d\rightarrow 1\) is a function of the \(\canNumActSrc\) largest eigenvalues of $\cmpCohMatEst$ \cite{Hasija2020}. 
			The test statistic \gls{pdf} and \gls{cdf} under the $\canNumActSrc$th component number null hypothesis $\HSigNul$ be denoted by $\TSigPdfNull$ and \(\TSigCdfNull\). 
			The decisions between \(\HSigNul\) and \(\HSigAlt\) are made by
			thresholding the \text{total correlations} \pval~\(\pSig\) at false alarm level \(\alpFAStOne\), i.e.,
			\begin{equation}
			\label{eq:s1-test}
				\int_{\TSigRVar}^{\infty}\TSigPdfNull\mathrm{d}{\TSym} = \pSig \underset{\HSigAlt}{\overset{\HSigNul}{\gtrless}} \alpFAStOne.
			\end{equation}
			If \(\HSigNul\) the estimated number of correlated sources is \(\numActSrcEst = d\). Hence, set $\actElEst{\setIdx}{\srcIdx}=0,\,\forall\,\srcIdx\in\{\numActSrcEst + 1, \dots, \numSrc\}$, \(\setIdx\in[\numSet]\).
		\paragraph*{Step~II} Given $\numActSrcEst$ from Step~I) such that $\numActSrcEst\leq \numActSrc$, define a set of $\numSet$ binary hypotheses for each $\srcIdx\in[\numActSrcEst]$,
			\begin{align}
				\label{eq:s2-hyp}
				\HSetNul{\srcIdx}:  &\qquad\setIdx\in\nonCorSet:\text{ The $\srcIdx$th component is uncorrelated between the sets $\setIdx,\setIdx^\prime,\,\forall\, \setIdx^\prime\in[\numSet],\setIdx\neq\setIdx^\prime$},\\
				\HSetAlt{\srcIdx}:  &\qquad\setIdx\in \corSet:\text{ $\exists\,\setIdx^\prime\in[\numSet],\setIdx\neq\setIdx^\prime$ s.t. the $\srcIdx$th component is correlated between the sets $\setIdx$ and $\setIdx^\prime$}.
			\end{align}
			We refer to $\HSetNul{\srcIdx}$ and $\HSetAlt{\srcIdx}$ as the \text{set-wise} $\srcIdx$\text{th} \text{component null hypothesis} and \text{set-wise} $\srcIdx$th \text{component alternative}, respectively. As test statistics, the norms of the estimated eigenvector chunks $\eigVecChkNorm{\srcIdx} = \Big\vert\Big\vert\eigVecChkEst{\srcIdx}\Big\vert\Big\vert^2$ are deployed. 
			Under the set-wise component null hypothesis, the test statistic \gls{pdf} is $\TSetPdfNull{\srcIdx}$. 
			The decisions between the \(\HSetNul{\srcIdx}\) and \(\HSetAlt{\srcIdx}\) are made by thresholding the set-wise component \pval s \(\pSet{\srcIdx}\), i.e., 
			\begin{equation}
			\label{eq:s2-test}
				\int_{\eigVecChkNorm{\setIdx}}^{\infty}\TSetPdfNull{\srcIdx}\mathrm{d}{c} = \pSet{\srcIdx} \underset{\HSetAlt{\srcIdx}}{\overset{\HSetNul{\srcIdx}}{\gtrless}} \alpFAStTwo.
			\end{equation}
			with the user-defined Step~II false alarm probability level $\alpFAStTwo$. 
			If $\HSetNul{\srcIdx}$ is accepted, then $\actElEst{\setIdx}{\srcIdx}= 0$. Otherwise, $\actElEst{\setIdx}{\srcIdx} = 1$. This completes the \gls{ts} estimator of \(\actMat\).
			
	\section{The Proposed One-Step Procedure with False Positive Control}
\label{sec:prop-os}
	In this section, we propose a novel holistic approach to estimating the activation matrix $\actMat$. We first motivate this approach by highlighting the benefits of a single detection step over the state-of-the art \gls{ts} from \cite{Hasija2020}. We then provide the required theory with a particular focus on the false positive control properties of the proposed \gls{os}.
	\subsection{Motivation}
	\label{sec:prop-os_mot}
		A \gls{ts} to identifying the complete correlation structure between components from multiple data sets like the one from \cite{Hasija2020} exhibits two major problems. Firstly, false alarms are controlled in both steps individually, but not for the combination of the two steps. Secondly, controlling the false alarm probability is not well-suited to limit false positives when many binary hypothesis tests are performed simultaneously. In what follows, we discuss those issues in detail and propose solutions that form the basis of the proposed \gls{os}.
		
		\subsubsection{Error propagation} 
			The \gls{ts} attempts to control the false positives in both steps independently by thresholding the respective \pval s at nominal false alarm levels $\alpFAStOne$ and $\alpFAStTwo$ in Eqs.~\eqref{eq:s1-hyp}, \eqref{eq:s2-hyp}. However, the false positives are not controlled for the sequential combination of these two steps. The method proposed in \cite[Theorem~1]{Hasija2020} infers that a data set \(\setIdx\) belongs to the collection of correlated sets \(\corSet\) of the \(\srcIdx\)th component, if the corresponding eigenvector chunk norm \(\eigVecChkNorm{\srcIdx} = ||\eigVecChkEst{\srcIdx}||^2\) is significantly non-zero, since \(||\eigVecChk{\srcIdx}||^2 \neq 0\) if \(\setIdx\in\corSet\) and \(||\eigVecChk{\srcIdx}||^2=0\) otherwise. However, this assumption does not hold if the \(\srcIdx\)th component is uncorrelated across all sets. Since \(\alpFAStOne > 0\), 
			false alarms can occur in Step~I. If Step~I concludes with a false alarm, the number of components with correlation between at least two sets is overestimated, $\numActSrcEst>\numActSrc$. Then, the chunk norms associated with the \(\srcIdx\)th component, \(\numActSrc < \srcIdx\leq \numActSrcEst\) all are significantly non-zero and the tests in Step~II are based on wrong assumptions. The heatmap in Fig.~\ref{fig:chunk-ex-hm} shows the average chunk norm value obtained for the toy example introduced previously in Fig.~\ref{fig:cor-struc-ex}, with $\numSam = 500$ realizations of Gaussian distributed component and noise vectors. The \gls{snr} is \(5\unit{dB}\). Since \(\numActSrc = 5\), the sixth and seventh component are entirely uncorrelated. However, the average chunk norms in the sixth and seventh row are far from zero. Instead, their values are similar to those of the first component, which is correlated across all sets. In addition, the \glspl{edf} of the \pval s for Step~II are provided for some components and sets in Fig.~\ref{fig:chunk-ex-edf}. If the \(\srcIdx\)th component of set \(\setIdx\) is uncorrelated with the other sets, the \pval s should closely follow a uniform distribution \(\mathcal{U}(0, 1)\) to guarantee that the false alarm probability is bounded by the nominal level \(\alpFAStTwo\). The nominal false alarm level is fulfilled for the \(\srcIdx\)th components, \(\srcIdx\in\numActSrc\), since the \pval~distribution is approximately uniform in the typical \([0, .2]\) range of \(\alpFAStTwo\). As an example, we provide the \gls{edf} for the second component and the third set on the top left of Fig.~\ref{fig:chunk-ex-edf}. On the top right, the \gls{edf} of the \pval s for the fourth component of set \(\setIdx = 1\) is shown, which is correlated with the fourth component of the third and the fourth set. Here, a lot of mass is located close to zero, making it highly likely to discover the correlation. The second row of Fig.~\ref{fig:chunk-ex-edf} shows some \pval~distributions obtained for the sixth and seven component, which are uncorrelated across all sets. Nevertheless, these \glspl{edf} look very similar to the one on the top right. Hence, falsely identifyig correlation is much more likely than the nominal false alarm level \(\alpFAStTwo\). The statistical behavior of the chunk norms $\eigVecChkEst{\srcIdx}$ for $\numActSrc < \srcIdx$ is similar to that of chunk norms of components which are correlated across all data sets. Hence, overestimation of \(\numActSrcEst>\numActSrc\) leads to uncontrolled false alarm probabilities in Step~II).

			A different problem arises if Step~I terminates with a Type~II error, i.e., a decision in favor of $\HSigNul$ while $\canNumActSrc<\numActSrc$. Then, regardless of the information contained in the eigenvector chunks, no correlations can be identified for the $\srcIdx$th components of any data sets, $\numActSrcEst<\srcIdx\leq\numActSrc$. A potentially significant amount of true correlations  remain unidentified even if the eigenvector chunks contain strong evidence in favor of additional correlated components. This is illustrated in Fig.~\ref{fig:fn-ex}. In this example, \(\numActSrc = 6\) out of $\numSrc = 10$ components are correlated across \(7, 6, 5, 4, 3\) and \(2\) out of $\numSet = 15$ data sets with correlation coefficients \(7, .7, .65, .6, .6, .55\), respectively. $\numSam = 300$ realizations are observed and the components and noise are Gaussian with \(\mathrm{SNR} = 5\unit{dB}\). The correlation coefficients are constant per component, i.e., if the \(\srcIdx=5\)th component is correlated across sets \(1, 2\) and \(3\), then \(\corCoeffSrc[1, 2]{5} = \corCoeffSrc[2, 3]{5} = \corCoeffSrc[1, 3]{5} = .6\). The results are averaged over \(100\) independent realizations of the experiment. The average chunk norms in Fig.~\ref{fig:fn-ex-hm} are again distinctly indicating the true correlation structure among the first five components. However, Step~I consistently underestimates \(\numActSrc\): The average of \(\numActSrcEst\) is \(1.67\) for the reasonable false alarm level \(\alpFAStOne = .1\). Since more than half of the true correlations occur beyond the second component, a significant percentage of true correlations is not even tested for in Step~II, as the detection probabilities in Fig.~\ref{fig:fn-corr-est} underline. 
			Hence, the \gls{ts} exhibits poor detection power and a large fraction of the true correlations remain undetected, despite that the chunk norms displayed in Fig.~\ref{fig:fn-ex-hm} yield considerable evidence on the underlying correlation structure. 

			\captionsetup{labelsep=endash}
			\begin{figure}
				\centering
				\begin{subfigure}{.48\textwidth}
					\centering
					\includegraphics[scale=.4]{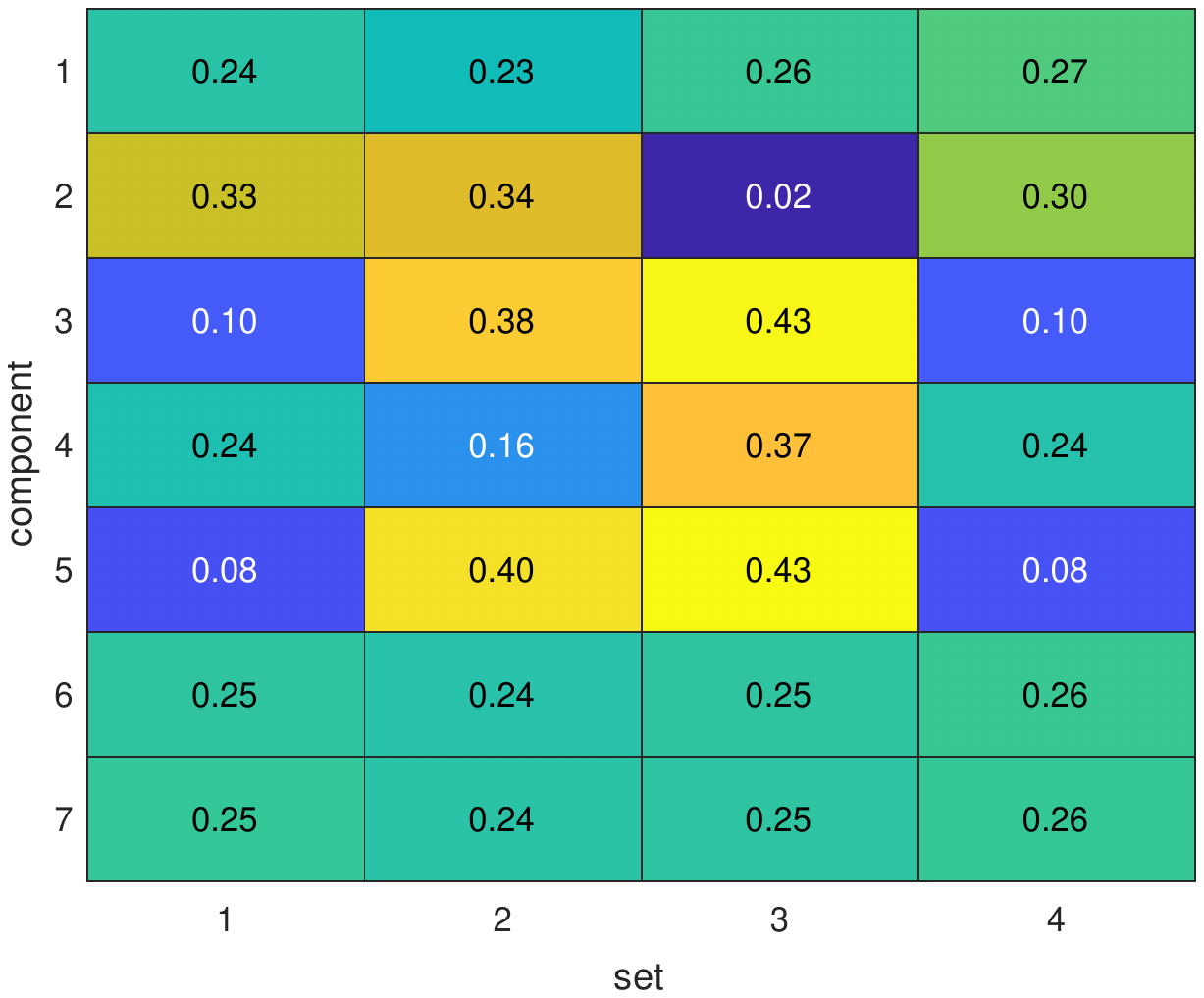}
					\caption{}
					\label{fig:chunk-ex-hm}
				\end{subfigure}
				\begin{subfigure}{.48\textwidth}
					\centering
					\includegraphics[scale=.4]{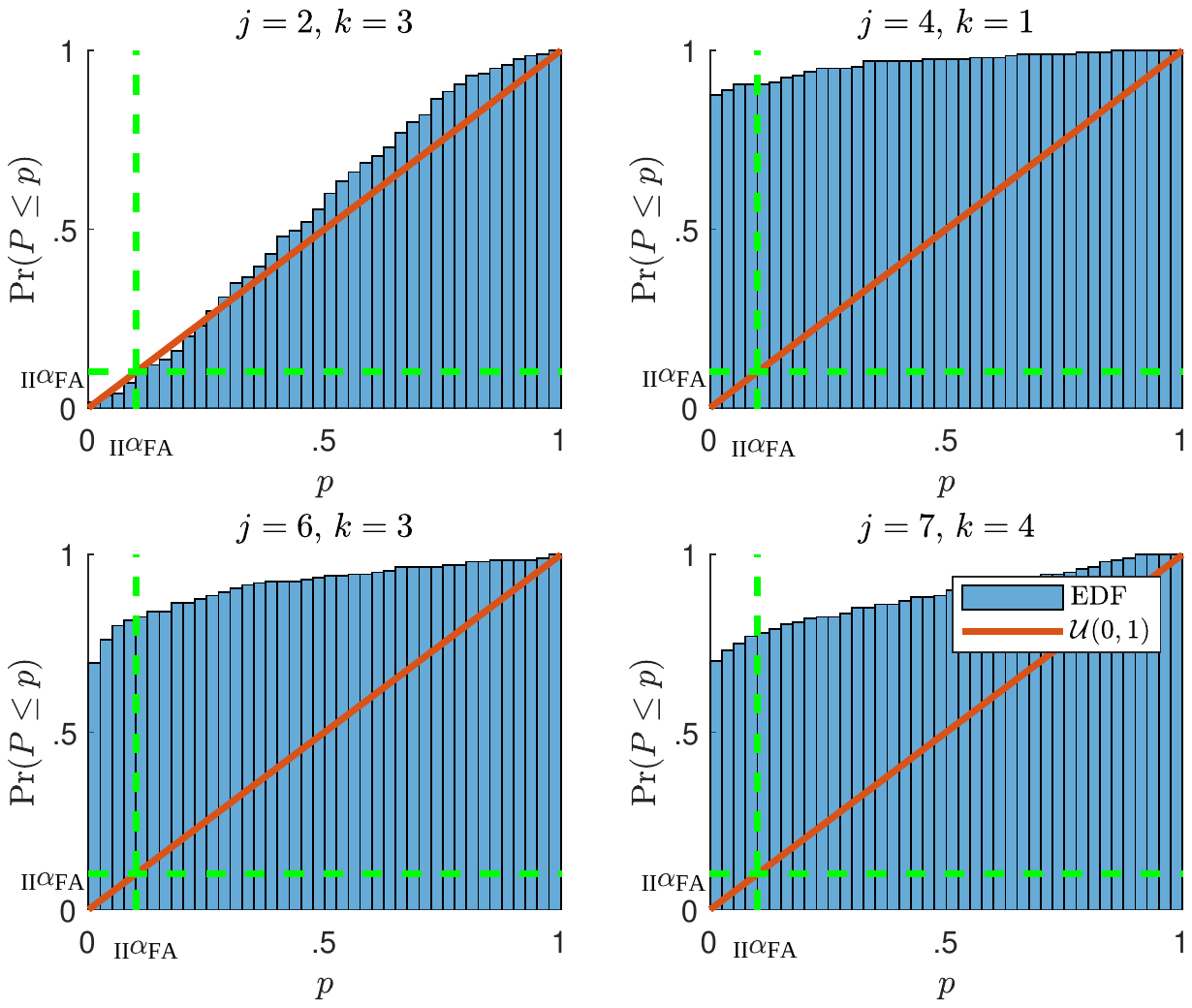}
					\caption{}
					\label{fig:chunk-ex-edf}
				\end{subfigure}
				\caption{(a): The average chunk norms extracted from four data sets with Gaussian components and noise, \(\mathrm{SNR} = 5\unit{dB}\) and the correlation structure given in Fig.~\ref{fig:cor-struc-ex}. The chunk norms associated with components that are uncorrelated across all sets are far from zero, which invalidates the assumption that the chunk norms are close to zero under \(\HSetNul{\srcIdx}\) testing procedure in Step~II of the \gls{ts}. -- (b): The \pval~\glspl{edf} for different sets and components. \(\HSetNul{\srcIdx}\) holds in the top left and since \(\srcIdx\leq \numActSrc\), false positives can be controlled at a typical nominal level \(\alpFAStTwo = .1\). The top right displays the \pval~\gls{edf} under \(\HSetAlt{\srcIdx}\), indicating a high probability of detection. The bottom row underlines that \(\alpFAStTwo\) is significantly violated if \(\srcIdx>\numActSrc\).}
				\label{fig:chunk-ex}
			\end{figure}

			\begin{figure}
				\centering
				\begin{subfigure}{.48\textwidth}
					\centering
					\includegraphics[scale=.5]{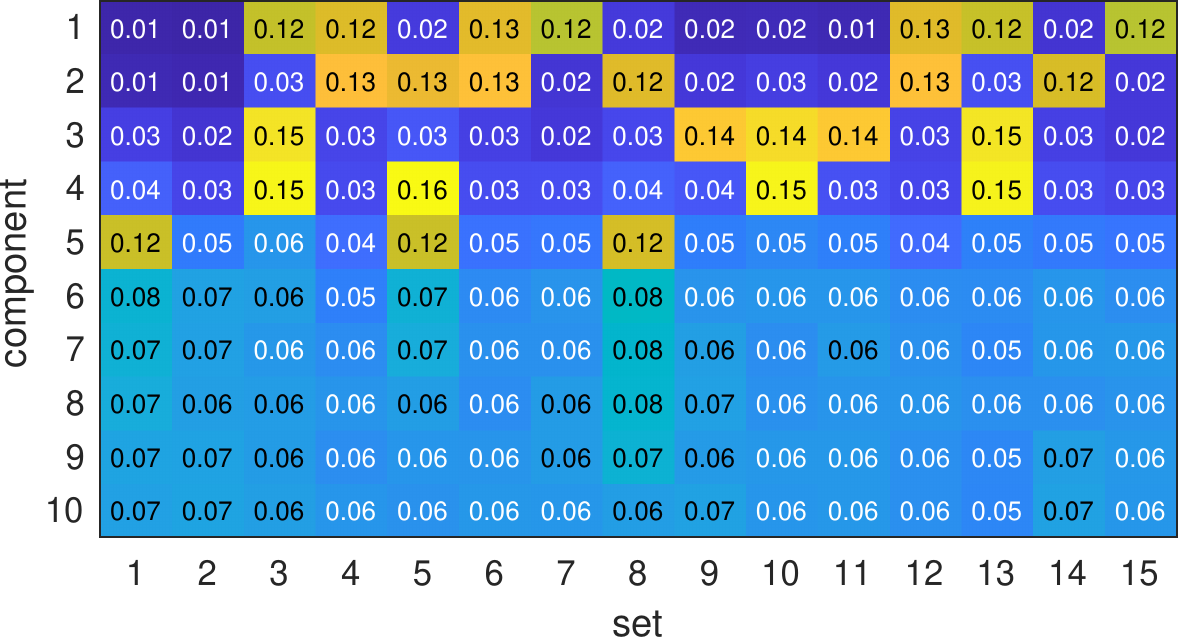}
					\caption{}
					\label{fig:fn-ex-hm}
				\end{subfigure}
				\begin{subfigure}{.48\textwidth}
					\centering
					\begin{subfigure}{\textwidth}
						\centering
						\includegraphics[scale=.5]{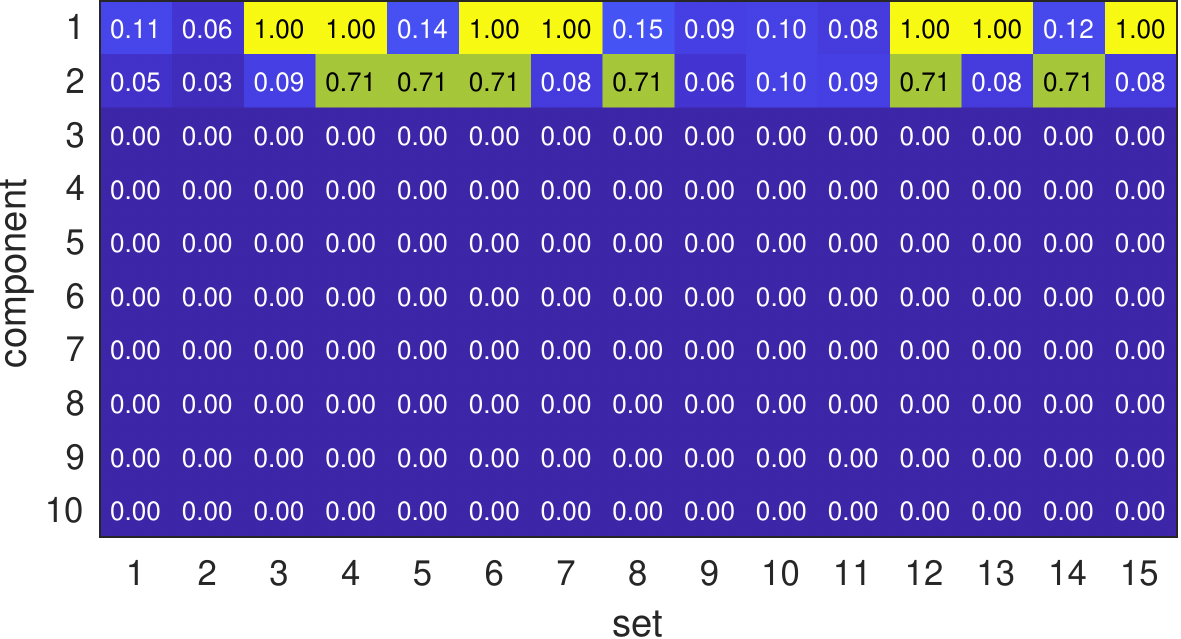}
						\caption{}
						\label{fig:fn-corr-est}
					\end{subfigure}
				\end{subfigure}

				\caption{(a) The average chunks norms for a scenario where the \(\srcIdx\)th component is correlated across \(7-\srcIdx\) sets, \(\srcIdx\in[\numActSrc]\), \(\numActSrc= 6\). The strength of the correlation also decreases with increasing component index \(\srcIdx\). The chunk norms associated with sets across which the \(\srcIdx\)th component is correlated are clearly visible \(\srcIdx\in[5]\). -- (b) The detection probabilities for \(\alpFAStOne = \alpFAStTwo=.1\). Step~I of the \gls{ts} yields on average a value of \(1.67\) for \(\numActSrcEst\). No correlations beyond the second component are identified.
				\label{fig:fn-ex}}
			\end{figure}
			\captionsetup{labelsep=colon}

		\subsubsection{False positive control}
		\label{sec:single-tests-multiple-tests}
			Assume now that Step~I yielded the correct number of correlated components, $\numActSrcEst= \numActSrc$. Hence, the detector in Step~II is based on test statistics $\TSetRVar{\srcIdx}$ for which the relations from \cite[Theorem~1]{Hasija2020} hold and the false alarm probability is controlled for each test at the nominal level $\alpFAStTwo$. Step~II performs in total $\numActSrcEst\cdot \numSet$ binary tests between $\HSetNul{\srcIdx}$ and $\HSetAlt{\srcIdx}$, $\forall\,\setIdx\in[\numSet], \srcIdx\in[\numActSrcEst]$. Let $\nulFrc$ denote the proportion of true set-wise component null hypotheses. Then, on average, $\nulFrc\cdot\numActSrcEst\cdot\numSet\cdot\alpFAStTwo$ true set-wise component null hypotheses get rejected. The fraction of true null hypotheses $\nulFrc$ is unknown in practice. Thus, if Step~II results in a total number of $\numRej$ \textit{rejections}, i.e., decisions in favor of the alternative, it is impossible to assess how many of the $\numRej$ accepted alternatives are false positives. In fact, all $\numRej$ rejections could be false, despite the individual false alarm probabilities being controlled at level $\alpFAStTwo$. A false positive corresponds to an erroneously identified correlation of components in at least two data sets. Falsely detected correlations can have a significant impact. Consider genome-wide association studies \cite{Uffelmann2021}. Falsely identified correlations between a genome and a bio-marker could trigger intensive follow-up research efforts. Hence, using a detector to identify interesting genomes which produces false positives in an uncontrolled manner can lead to a significant waste of resources \cite{Efron2010}. 
	
\subsection{Multiple hypothesis testing}
	Our proposed method identifies the complete correlation structure in a single step. Thus, it is not subject to the aforementioned error propagation between steps.
	To ensure that the identified correlations come with statistical error guarantees and are thus are meaningful, we resort to \gls{mht} \cite{Tukey1991} false positive measures. The common principle in \gls{mht} is to account for the multiplicity of binary decisions by a correction of the individual test statistics. Depending on the type of correction, \gls{mht} detectors are designed to control statistical performance measures that allow to quantify the reliability of the $\numRej$ rejections. The two most commonly used measures are the \gls{fwer} \cite{Hochberg1987} and the \gls{fdr} \cite{Benjamini1995}. The \gls{fwer} is the probability that at least one of the $\numRej$ discoveries is a false positive, while the \gls{fdr} is the expected fraction of false discoveries among all discoveries. Procedures that control the \gls{fwer} are particularly useful for problems where already a single false positive is very costly, while a missed discovery is less critical. A missed discovery occurs, whenever a decision in favor of the null hypothesis is made while the alternative is true. If the \gls{fwer} is controlled at the nominal level $\alpha$, the probability that at least one of the $\numRej$ discovery is a false discovery is $\leq \alpha$. In contrast, controlling the \gls{fdr} permits more false positives, if more correct discoveries are made: If a procedure controls the \gls{fdr} at nominal level $\alpha$, no more than on average $\alpha\cdot\numRej$ discoveries are false. For correlation structure identification, limiting the probability of a single false positive appears unnecessarily strict. A small fraction of false positives among all positives is sufficient, as this allows identifying more true positives while keeping results trustworthy. Thus, we focus on controlling the \gls{fdr} in this work.

	\subsection{Proposed \gls{mht} problem for correlation structure identification}
	\label{sec:prop-os_test}
	We first define a set of binary hypotheses ,
	\begin{align}
		\label{eq:os-hyp-atom}
		\HNulAtom{\srcIdx}:  &\qquad\setIdx\in\nonCorSet:\text{The $\srcIdx$th components of sets $\setIdx$ and $\setIdx^\prime$ are uncorrelated}, \srcIdx\in[\numSrc], \setIdx, \setIdx^\prime\in[\numSet], \setIdx\neq\setIdx^\prime,\\
		\HAltAtom{\srcIdx}:  &\qquad\setIdx\in\corSet:\text{The $\srcIdx$th components of sets $\setIdx$ and $\setIdx^\prime$ are correlated}, \srcIdx\in[\numSrc], \setIdx, \setIdx^\prime\in[\numSet], \setIdx\neq\setIdx^\prime.
	\end{align}
	We refer to $\HNulAtom{\srcIdx}$ and $\HAltAtom{\srcIdx}$ as the \text{atom null hypothesis} and \text{atom alternative}, respectively, since an \textit{atom} is the smallest indivisible unit. In contrast to the hypotheses defined in Eq.~\eqref{eq:s2-hyp} that are used in the second step of the \gls{ts}, the atom hypotheses do \textit{not} depend on an estimate for the total number of correlated components $\numActSrc$. We infer the true atom nulls and alternatives based on test statistics $\T{\srcIdx}\sim\TPdf{\srcIdx}$ that follow  $\TPdfNull{\srcIdx}$ and $\TPdfAlt{\srcIdx}$ under $\HNulAtom{\srcIdx}$ and $\HAltAtom{\srcIdx}$, respectively. The details on $\T{\srcIdx}$ are provided in Sec.~\ref{sec:test-stat}. Finally, the elements of the activation matrix are estimated as $\actElEst{\setIdx}{\srcIdx} = 0 \, \forall \, \setIdx\in[\numSet], \srcIdx\in[\numSrc]$ where $\HNulAtom{\srcIdx}$ is accepted and $\actElEst{\setIdx}{\srcIdx} = 1$ otherwise. 
	
	In addition, we define a set of binary hypotheses for the components
	\begin{align}
		\label{eq:os-hyp-row}
		\HNulCmp:  \qquad &\text{The $\srcIdx$th component is uncorrelated across all sets, }\srcIdx\in[\numSrc], \corSet=\emptyset,\\
		\HAltCmp:  \qquad &\text{The $\srcIdx$th component is correlated between at least two sets } \setIdx, \setIdx^\prime\in[\numSet], |\corSet| \geq 2, \srcIdx\in[J].
	\end{align}
	The \text{component null hypotheses} $\HNulCmp$ and \text{component alternatives} $\HAltCmp$ are unions of their respective component's atom hypotheses: If all $\HNulAtom{\srcIdx}\forall\,\setIdx\in[\numSet]$ hold for a $j\in[J]$, then $\HNulCmp$ holds as well. 

		To identify the activation matrix $\actMat$, the proposed \gls{os} requires $\numSrc\cdot\numSet$ binary tests. Be $\numRej\leq\numSrc\cdot\numSet$ the total number of times a decision in favor of the alternative is made, or, the number of \textit{atom discoveries}. Naturally, $\numRej = \numTruPos + \numFalPos$ with $\numTruPos$ and $\numFalPos$ the numbers of correct and false atom discoveries. $\numRej$ is observable, but $\numTruPos$ and $\numFalPos$ are not. Our proposed approach attempts to maximize $\E[\numTruPos] = \sum_{j=1}^{K}\sum_{k = 1}^{K}P\big(\actElEst{\setIdx}{\srcIdx} = 1 \big|\actEl{\setIdx}{\srcIdx} = 1\big)$ while controlling 
		the \textit{atom \gls{fdr}}
		\begin{equation}
		\label{eq:fdr}
			\fdr = \E\bigg[\frac{\numFalPos}{\numRej}\bigg],
		\end{equation}
		the expected number of atom false discoveries among all atom discoveries at a nominal level $\fdrThrAtom$. This guarantees that on average at least $(1-\fdrThrAtom)\%$ of the non-zero elements in $\actMatEst$ correspond to true correlations.
		
		In addition to controlling false positives on the atom-level, one may also be interested in controlling the component-level false positives. This is particularly useful if atom-level false positives differ in importance based on the corresponding component. It is often much more critical to avoid accidentally declared correlation between at least two sets for the \(\srcIdx\)th component that is uncorrelated between all sets than accidentally identifying correlation for the \(\srcIdx^\prime\)th component of set \(\setIdx\) if this component is correlated between other sets, \(\setIdx\notin\corSet[\srcIdx^\prime]\neq\emptyset\). We denote the number of component discoveries, false discoveries and true discoveries by $\numRejCmp$, $\numFalPosCmp$ and $\numTruPosCmp$, respectively. Then, the \textit{component \gls{fdr}} is 
		\begin{equation}
			\label{eq:fdr-row}
			\fdrCmp = \E\bigg[\frac{\numFalPosCmp}{\numRejCmp}\bigg],
		\end{equation}
	which we attempt to control at the nominal level $\fdrThrCmp$.


		Finally, we revisit the fact that either none or at least two atom hypotheses must be rejected per component with index $\srcIdx\in[\numSrc]$, since correlation \textit{in between} data sets is considered. This structural property has to be incorporated directly into the testing procedure to guarantee strict control of the atom \gls{fdr} at level $\fdrThrAtom$ and maximize the detection power. If it was enforced only \textit{after} the hypothesis testing procedure has been applied, i.e., by a posteriori accepting the atom null hypothesis $\HNulCmp$ for all atoms of components $\srcIdx\in[\numSrc]$ where $\sum_{k = 1}^{K}\actElEst{\setIdx}{\srcIdx} = |\corSet| = 1$, \gls{fdr} control is lost. If $|\corSet| = 1$ due to missed detection(s), i.e., if the actual number of sets across which the $\srcIdx$th component is correlated is $\geq2$, this posterior cleaning reduces the number of correct positives, thereby increasing the \gls{fdr}. In addition, if $|\corSet| = 1$ while the $\srcIdx$th component is uncorrelated across all sets, then the final \gls{fdr} is smaller than with the results of the \gls{fdr} control procedure and additional discoveries may have been possible.

		The complete \gls{mht}-based correlation structure identification problem is, hence,  
		\begin{equation}
		\label{eq:mht-prob-for}
			\begin{aligned}
				\actMatEst_\text{MHT} &= \argmax_{\hat{\mathbf{M}}} \sum_{\srcIdx = 1}^{\numSrc} \sum_{\setIdx=1}^{\numSet} P\big(\actElEst{\setIdx}{\srcIdx} = 1|\actEl{\setIdx}{\srcIdx} = 1\big),\\
				\mathrm{s.t. } &\qquad\fdr\leq \fdrThrAtom,\\
				&\qquad \fdrCmp\leq \fdrThrCmp,\\
				&\qquad \Bigg(\sum_{k = 1}^{K}\actElEst{\setIdx}{\srcIdx}\Bigg) \neq 1, \forall \,\srcIdx\in[\numSrc].
			\end{aligned}
		\end{equation}

	\subsection{Proposed empirical Bayes solution}
		Eq.~\eqref{eq:mht-prob-for} is a very challenging \gls{mht} problem. To the best of our knowledge, a solution does not yet exist in the open literature. The well-known \gls{bh} \cite{Benjamini1995}, for instance, which computes a $p$-value for each tested null hypothesis and then rank-order the $p$-values to identify those that indicate little support for their associated null hypotheses does neither maximize detection power, nor does it provide control on multiple FDRs simultaneously, nor can it fulfill structural conditions.
	
	We resort to \gls{mht} with \glspl{lfdr} \cite{Efron2005, Efron2010}. The \gls{lfdr} is the empirical Bayes probability for a null hypothesis to hold, given the observed data. In what follows, we design a probability-based MHT approach to solving Eq.~\eqref{eq:mht-prob-for}.
	
	The atom \glspl{lfdr} are
		\begin{equation}
		\label{eq:lfdr-atom}
			P\Big(\HNulAtom{\srcIdx} \big|\mathcal{P}\Big) = \mathsf{lfdr}_\setIdx^{(\srcIdx)} \equiv \mathsf{lfdr}\big(\p{\srcIdx}\big) = \frac{\pi_0}{f_P\big(\p{\srcIdx}\big)}, \qquad \forall\,\setIdx\in[\numSet], \srcIdx\in[\numSrc].
		\end{equation}
		$\nulFrc$ is the proportion of true atom null hypotheses. Random variable $P$ with \gls{pdf} $f_P(p)$ and its i.i.d. realizations $\mathcal{P} = \Big\{\p{\srcIdx}\Big\}_{j\in[J], k \in[K]}$ represent the atom \pval s

		\begin{equation}
			\label{eq:pval-atom}
				\p{\srcIdx} = \int_{\T{\srcIdx}}^{\infty} \TPdfNull{\srcIdx}\mathrm{d}\TSym,\qquad \srcIdx\in[\numSrc], \setIdx\in[\numSet]
		\end{equation}

		The details on the test statistics $\T{\srcIdx}$ and their \glspl{pdf} under the atom null $\TPdfNull{\srcIdx}$ are provided in Sec.~\ref{sec:test-stat}. For now, we assume that a valid set of \pval s $\mathcal{P}$ has been observed.

			Under the null hypothesis, $p$-values follow a uniform distribution. The distribution under the alternative may vary from atom to atom due to different levels of correlation and different probability models for the components. Thus, little is known about the exact shape of the \gls{pdf} \(f_{P|\overline{H}}\big(\p{\srcIdx}\big)\) representing those \pval s \(\p{\srcIdx}\in\big\{\mathcal{P}:\HAltAtom{\setIdx}= 1\big\}\) where the alternative is in place. 
			\begin{equation}
				f_P(\p{\srcIdx}) = \nulFrc + 
				(1-\nulFrc)\cdot f_{P|\overline{H}}\big(\p{\srcIdx}\big).
			\end{equation}
			
			The atom \gls{lfdr} is $\mathsf{lfdr}_\setIdx^{(\srcIdx)}$ the posterior probability that the $\srcIdx$th component is uncorrelated between set $k\in[\numSet]$ and all other sets $\setIdx^\prime\in[K]\setminus \setIdx$. Since the component null $\HNulCmp$ holds for the $j$th component if it is uncorrelated across all sets, the posterior probability of the component null hypotheses can be expressed through the atom \glspl{lfdr},
			\begin{equation}
			\label{eq:lfdr-cmp}
				P\Big(\HNulCmp\big|\mathcal{P}\Big) = \prod_{\setIdx = 1}^{\numSet} \mathsf{lfdr}_\setIdx^{(\srcIdx)}.
			\end{equation}
			In general, if a detector rejects a set of null hypotheses, the average probability of the null across this subset is an estimate for the resulting \gls{fdr} \cite{Efron2010}. Hence, for an activation matrix estimate $\actMatEst$, the estimated atom and component \glspl{fdr} are 
			\begin{equation}
				\widehat{\fdr}^\ast = \frac{\sum_{\srcIdx = 1}^{\numSrc}\sum_{\setIdx=1}^{\numSet}\actElEst{\setIdx}{\srcIdx}\cdot \mathsf{lfdr}_\setIdx^{(\srcIdx)}}{{\sum_{\srcIdx = 1}^{\numSrc}\sum_{\setIdx=1}^{\numSet}\actElEst{\setIdx}{\srcIdx}}}, \qquad \widehat{\fdr}_\text{cmp}^\ast = \frac{\sum_{\srcIdx = 1}^\numSrc \idc{\sum_{\setIdx=1}^{\numSet}\actElEst{\setIdx}{\srcIdx}>0}\cdot \prod_{\setIdx = 1}^{\numSet} \mathsf{lfdr}_\setIdx^{(\srcIdx)}}{\sum_{\srcIdx = 1}^\numSrc \idc{\sum_{\setIdx=1}^{\numSet}\actElEst{\setIdx}{\srcIdx}>0}}.
			\end{equation}
			In the definition of \(\widehat{\fdr}^\ast\) and \(\widehat{\fdr}^\ast_\text{cmp}\), we use that $\actElEst{\setIdx}{\srcIdx} = 1$ whenever the atom null hypothesis $\HNulAtom{\setIdx}$ is rejected.
			We propose to exploit this relation between \glspl{lfdr} and \glspl{fdr} to simultaneously control the \glspl{fdr} on the atom and component level. The objective is to detect as many correlations as possible while controlling the atom and component \glspl{fdr} at the respective nominal levels $\fdrThrAtom$ and $\fdrThrCmp$. Thus, we search for the activation matrix estimate $\actMatEst^\ast$ with the largest number of non-zero entries such that a) the average null probability across the non-zero entries is $\leq\fdrThrAtom$ and b) the average component null probability across the set of all components for which correlation between some data sets was identified is $\leq\fdrThrCmp$. The resulting activation matrix estimator is
			\begin{equation}
				\label{eq:sol-fdrs}
				\actMatEst^\ast = \argmax_{\actMatEst}\Bigg\{\sum_{\srcIdx = 1}^{\numSrc}\sum_{\setIdx=1}^{\numSet}\actElEst{\setIdx}{\srcIdx}\,:\,\widehat{\fdr}^\ast \leq \fdrThrAtom, \,\widehat{\fdr}^\ast_\text{cmp}\leq \fdrThrCmp\Bigg\}.
			\end{equation}

			To include the constraint of either none or at least two atom discoveries per row, i.e., $\sum_{\setIdx = 1}^{\numSet}\neq 1$, we propose the following procedure. 
			We define the \textit{modified atom \gls{lfdr}}, which replaces the smallest two atom \glspl{lfdr} by their average, 
			\begin{equation}
				\label{eq:lfdr-mod}
				\widetilde{\mathsf{lfdr}}_\setIdx^{(\srcIdx)} = \begin{cases}
					\frac{\mathsf{lfdr}_\setIdx^{(\srcIdx)} + \min_{\setIdx^\prime\in[\numSet]\setminus\setIdx}\mathsf{lfdr}_{\setIdx^\prime}^{(\srcIdx)}}{2} &\text{if } \setIdx=\arg\min_{\setIdx^{\prime\prime}\in[\numSet]}\mathsf{lfdr}_{\setIdx^{\prime\prime}}^{(\srcIdx)}\text{ or }\setIdx=\arg\min_{\setIdx^{\prime\prime}\in[\numSet]\setminus\setIdx^\prime}\mathsf{lfdr}_{\setIdx^{\prime\prime}}^{(\srcIdx)}\\
					\mathsf{lfdr}_\setIdx^{(\srcIdx)}, & \text{otherwise}.
				\end{cases}
			\end{equation}
			Then, we replace $\mathsf{lfdr}_\setIdx^{(\srcIdx)}$ in Eqs.~\eqref{eq:fdr}, \eqref{eq:fdr-row} by $\widetilde{\mathsf{lfdr}}_\setIdx^{(\srcIdx)}$ to obtain  

			\begin{equation}
				\widehat{\fdr} = \frac{\sum_{\srcIdx = 1}^{\numSrc}\sum_{\setIdx=1}^{\numSet}\actElEst{\setIdx}{\srcIdx}\cdot \widetilde{\mathsf{lfdr}}_\setIdx^{(\srcIdx)}}{{\sum_{\srcIdx = 1}^{\numSrc}\sum_{\setIdx=1}^{\numSet}\actElEst{\setIdx}{\srcIdx}}}, \qquad \widehat{\fdr}_\text{cmp} = \frac{\sum_{\srcIdx = 1}^\numSrc \idc{\sum_{\setIdx=1}^{\numSet}\actElEst{\setIdx}{\srcIdx}>0}\cdot \prod_{\setIdx = 1}^{\numSet} \widetilde{\mathsf{lfdr}}_\setIdx^{(\srcIdx)}}{\sum_{\srcIdx = 1}^\numSrc \idc{\sum_{\setIdx=1}^{\numSet}\actElEst{\setIdx}{\srcIdx}>0}}.
			\end{equation}

			Then, the solution to Eq.~\eqref{eq:mht-prob-for} is 

			\begin{equation}
				\label{eq:sol-all-cond}
				\actMatEst_\text{MHT} = \argmax_{\actMatEst}\Bigg\{\sum_{\srcIdx = 1}^{\numSrc}\sum_{\setIdx=1}^{\numSet}\actElEst{\setIdx}{\srcIdx}\,:\,\widehat{\fdr} \leq \fdrThrAtom, \,\widehat{\fdr}_\text{cmp}\leq \fdrThrCmp\Bigg\}.
			\end{equation}
			Since $\widehat{\fdr}\leq \widehat{\fdr}^\ast\leq\fdrThrAtom$ and $\widehat{\fdr}_\text{cmp}^\ast \leq \widehat{\fdr}_\text{cmp}\leq\fdrThrCmp$, atom and component \gls{fdr} control carries over from Eq.~\eqref{eq:sol-fdrs}.

			In practice, $\actMatEst_\text{MHT}$ can be determined as follows. The given set of \pval s $\mathcal{P}$ contains one \pval~per atom, which quantifies the evidence that the $\srcIdx$th component of set $\setIdx$ is correlated with at least one other set. First compute the atom \glspl{lfdr} from Eq.~\eqref{eq:lfdr-atom} and the modified \glspl{lfdr} from Eq.~\eqref{eq:lfdr-mod}. Sort them in ascending order. Then, find the  index $m\in[\numSet\cdot \numSrc]$ as the largest integer such that the cumulative average over the smallest $m$ modified \gls{lfdr}'s is below the nominal atom \gls{fdr} level $\fdrThrAtom$. If the $m$-th largest $\widetilde{\mathsf{lfdr}}_\setIdx^{(\srcIdx)}$ is equal to its corresponding atom \gls{lfdr}, rejecting the null hypothesis for those atoms corresponding to the smallest $m$ modified \gls{lfdr}'s guarantees that at least two discoveries per component are made. If the $m$-th largest modified \gls{lfdr} is different from its corresponding \gls{lfdr}, then this atom is one of the atoms with the two smallest \glspl{lfdr} of a component. One then has to make sure that either the null hypothesis for this second atom with one of the smallest \gls{lfdr}'s of that component gets rejected as well, or that none of the two get rejected.
		
	The component \gls{fdr} is estimated subsequently by averaging the component null probabilities over those components for which correlations have been identified. The resulting estimate $\widehat{\fdr}_\text{cmp}$ is compared to the nominal component \gls{fdr} level $\fdrThrCmp$. If $\fdrCmp\leq\fdrThrCmp$, both atom and component \gls{fdr} are controlled as desired and the estimate $\actMatEst_\text{MHT}$ that solves Eq.~\eqref{eq:mht-prob-for} has been found. If, on the other hand, $\fdrCmp>\fdrThrCmp$, we iteratively remove components from the set of discoveries. In each iteration, we remove the discoveries for the component with the smallest number of atom discoveries to reduce the component \gls{lfdr} while minimizing the loss in detection power on the atom level. Then, we again rank order the atom-level modified \glspl{lfdr} from the remaining components for which correlation has been detected. This leaves room for additional discoveries, since the removal of discoveries for one component has also reduced $\widehat{\fdr}$. The procedure terminates as soon as $\fdrCmp$ falls below the nominal level $\fdrThrCmp$. The details are given in Alg.~\ref{alg:proposed-lfdr-atom-cmp}.

			\begin{algorithm}
				\caption{The proposed \gls{lfdr}-based correlation structure detector with atom and component \gls{fdr} control}
				\label{alg:proposed-lfdr-atom-cmp}
				\hspace*{\algorithmicindent} \textbf{Input}: $\mathsf{lfdr}_\setIdx^{(\srcIdx)}\,\forall\srcIdx\in[\numSrc],\setIdx\in[\numSet]$, $\fdrThrAtom$, $\fdrThrCmp$\\
				\hspace*{\algorithmicindent} \textbf{Output}: The activation matrix estimate $\actMatEst_\text{MHT}$.
				\begin{algorithmic}[1]
					\vspace{1pt}
					\State Initialize $\mathcal{J}_\text{cand} = [\numSrc]$
					\State Compute all $\widetilde{\mathsf{lfdr}}_\setIdx^{(\srcIdx)}$ from Eq.~\eqref{eq:lfdr-mod}
					\State Define $|\mathcal{J}_\text{cand}|\cdot \numSet$ tuples of component and set indexes $(\srcIdx_\mu, \setIdx_\mu)\in\mathcal{J}_\text{cand}\times [\numSet]$ s.t. $\widetilde{\mathsf{lfdr}}_{\setIdx_\mu}^{(\srcIdx_\mu)}$ is the $\mu$th largest modified atom \gls{lfdr}
					\State Find $m = \max\Big\{{\mu}\in[|\mathcal{J}_\text{cand}|\cdot\numSet]:\mu^{-1}\cdot \sum_{m = 1}^{\mu}\widetilde{\mathsf{lfdr}}_{\setIdx_\mu}^{(\srcIdx_\mu)}\leq\fdrThrAtom\Big\}$
\If{$\widetilde{\mathsf{lfdr}}_{k_{m}}^{(j_m)}\neq{\mathsf{lfdr}}_{k_{m}}^{(j_m)}$}
					\If{$\nexists\, \mu<m:j_\mu = j_m$}
					\State $m = m \pm 1$ where $+$ is selected at random with probability $.5$
					\EndIf
					\EndIf
					\State Define the preliminary set of rejected atom null hypotheses $\overline{\mathcal{H}} = \Big\{\HAltAtom[\setIdx_\mu]{\srcIdx_\mu}\Big\}_{\mu\leq m}$
					\State Find the set of component rejections $\overline{\mathcal{H}}_\text{cmp} = \Big\{\HAltCmp\,\forall\, \srcIdx\in\mathcal{J}_\text{cand}: \exists\, \HAltAtom{\srcIdx}\in\overline{\mathcal{H}}\Big\}$
					\State Compute $\widehat{\fdr}_\text{cmp} = \sum_{j:\HAltCmp\in\overline{\mathcal{H}}_\text{cmp}}\prod_{\setIdx = 1}^{\numSet} \widetilde{\mathsf{lfdr}}_\setIdx^{(\srcIdx)} $
					\If{$\widehat{\fdr}_\text{cmp}>\fdrThrCmp$}
					\State Define $\nu = \min_{\srcIdx:\HAltCmp\in\overline{H}_\text{cmp}}\sum_{\setIdx\in[\numSet]}\idc{\HAltAtom{\srcIdx}\in\overline{H}}$
					\State Find $\srcIdx^\ast= \arg\max_{\srcIdx:\HAltCmp\in\overline{\mathcal{H}}_\text{cmp}}\Big\{\prod_{\setIdx = 1}^{\numSet} \widetilde{\mathsf{lfdr}}_\setIdx^{(\srcIdx)}:\sum_{\setIdx\in[\numSet]}\idc{\HAltAtom{\srcIdx}\in\mathcal{H}} = \nu\Big\}$
					\State $\mathcal{J}_\text{cand} = \mathcal{J}_\text{cand}\setminus j^\ast$
					\State Jump to Step 3 with the remaining $\widetilde{\mathsf{lfdr}}_\setIdx^{(\srcIdx)}, \setIdx\in[\numSet], \srcIdx\in\mathcal{J}_\text{cand}$
					\EndIf
					\State Determine the entries of $\actMatEst_\text{MHT}$ with $\actElEst{\setIdx}{\srcIdx} = \idc{\HAltAtom{\srcIdx}\in \overline{\mathcal{H}}} \forall\,\setIdx \in[\numSet], \srcIdx\in[\numSrc]$
				\end{algorithmic}
			\end{algorithm}

			So far, we have assumed that a set $\mathcal{P}$ of \pval s was provided. In the following section, we present our proposed test statistic that extracts the evidence for correlation between components across data sets from the coherence matrix of the data.

			\subsection{The proposed test statistic}
			\label{sec:test-stat}
			We exploit the properties of the eigenvectors of the composite coherence matrix $\cmpCohMat$ from \cite{Hasija2020} that we summarized in Sec.~\ref{sec:ts-prc}. Assume that $\sum_{\setIdx = 1}^{\numSet}\actEl{\setIdx}{\srcIdx}\neq 0$, i.e., there is at least one pair of sets $\setIdx, \setIdx^{\prime}\in[\numSet], \setIdx\neq\setIdx^{\prime}$ such that the $\srcIdx$th component of set $\setIdx$ is correlated with the $\srcIdx$th component of set $\setIdx^{\prime}$. Then, according to \cite[Theorem~2]{Hasija2020}, $\eigVecChk{\srcIdx} = \mathbf{0}_{\numSrc}$ if and only if $\actEl{\setIdx}{\srcIdx} = 0$  and $\big|\big|\eigVecChk{\srcIdx}\big|\big|^2 = 0$. In contrast, if $\actEl{\setIdx}{\srcIdx}= 1$, $\big|\big|\eigVecChk{\srcIdx}\big|\big|^2 > 0$ holds. In addition, for any eigenvector $\boldsymbol{\eigVecSym}$ of any arbitrary matrix  general, $||\boldsymbol{\eigVecSym}||^2 = 1$ holds.
			
			Since the true $\cmpCohMat$ is not available, our access is limited to the eigenvector chunks $\eigVecChkEst{\srcIdx}$ of the estimate $\cmpCohMatEst$. $\cmpCohMatEst$ is estimated from finite sample data and is hence subject to random estimation error. Due to this noise, its eigenvector chunks $\eigVecChkEst{\srcIdx}\neq\mathbf{0}_\numSrc\,\forall\,\srcIdx\in[\numSrc], \setIdx\in[\numSet]$ and thus also $\big|\big|\eigVecChkEst{\srcIdx}\big|\big|^2\neq0\,\forall\,\srcIdx\in[\numSrc], \setIdx\in[\numSet]$. In what follows, we denote the squared estimated eigenvector chunk norm by $\eigVecChkNorm{\srcIdx} = \big|\big|\eigVecChkEst{\srcIdx}\big|\big|^2$. 
			
			\begin{theorem}
			\label{theo:equal-chunk-norms-h0}
				Consider that the $\srcIdx$th components of all data sets are uncorrelated, i.e., $\srcIdx\notin[\numActSrc]$. Then, the expected value of the squared chunk norms $\E\Big[\eigVecChkNorm{\srcIdx}\Big]$ is identical $\forall\,\setIdx\in[\numSet]$. With $\left|\left|\eigVecEst\right|\right|^2 = 1$ follows
				\begin{equation}
					\label{eq:equal-chunk-norms}
					\E\left[\eigVecChkNorm[1]{\srcIdx}\right] = \dots = \E\left[\eigVecChkNorm[\numSet]{\srcIdx}\right] = \frac{1}{\numSet}.
				\end{equation}
				The same relation holds, if the $\srcIdx$th components of all sets are correlated with an equal correlation coefficient.
			\end{theorem}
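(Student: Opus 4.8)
The plan is to combine a deterministic normalization identity with a symmetry (exchangeability) argument over the data sets. First I would use that $\eigVecEst$ is a unit-norm eigenvector of $\cmpCohMatEst$. Writing it in terms of its $\numSet$ chunks gives the pathwise identity
\begin{equation}
	\sum_{\setIdx=1}^{\numSet}\eigVecChkNorm{\srcIdx} = \sum_{\setIdx=1}^{\numSet}\big|\big|\eigVecChkEst{\srcIdx}\big|\big|^2 = \big|\big|\eigVecEst\big|\big|^2 = 1 ,
\end{equation}
which holds for every realization of the data. Taking expectations yields $\sum_{\setIdx=1}^{\numSet}\E[\eigVecChkNorm{\srcIdx}] = 1$. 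Hence, as soon as I can show that $\E[\eigVecChkNorm{\srcIdx}]$ is the same for every $\setIdx\in[\numSet]$, the common value is forced to equal $1/\numSet$, which is exactly Eq.~\eqref{eq:equal-chunk-norms}.

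The heart of the argument is therefore to prove that the random vector $\big(\eigVecChkNorm[1]{\srcIdx},\dots,\eigVecChkNorm[\numSet]{\srcIdx}\big)$ is \emph{exchangeable}, i.e., that its law is invariant under any permutation $\pi$ of the set indices $[\numSet]$. Because the $\srcIdx$th component is uncorrelated across all sets, its associated eigenvalue of $\cmpCohMat$ has multiplicity $\numSet$, and the corresponding eigenspace is spanned by the $\numSet$ whitened directions that carry the $\srcIdx$th component, one per data set. Restricting $\cmpCohMatEst$ to this $\numSet$-dimensional subspace reduces the relevant eigenproblem to a $\numSet\times\numSet$ random matrix with unit diagonal whose off-diagonal entries are the sample cross-correlations of the (whitened) $\srcIdx$th components between pairs of sets. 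Under the null these components are mutually uncorrelated and standardized, so the collection of off-diagonal sample cross-correlations is invariant in distribution under relabeling of the sets. A block permutation $P_\pi$ then acts on the reduced matrix by conjugation $P_\pi(\cdot)P_\pi^\top$ and leaves its distribution unchanged, while the sought eigenvector transforms as $P_\pi$ applied to it; its squared coordinates, namely the chunk norms, are thus permuted by $\pi$ without changing their joint law. This establishes the exchangeability and hence $\E[\eigVecChkNorm[1]{\srcIdx}] = \dots = \E[\eigVecChkNorm[\numSet]{\srcIdx}]$.

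Combining the two ingredients gives $\numSet\cdot\E[\eigVecChkNorm{\srcIdx}]=1$ and thus Eq.~\eqref{eq:equal-chunk-norms}. For the second assertion, where the $\srcIdx$th component is correlated across all sets with a single common correlation coefficient, the only change is that the off-diagonal entries of the reduced matrix now share a common nonzero mean. Since every pair of sets still plays an identical role, the reduced matrix remains invariant in distribution under set permutations, the exchangeability argument carries over unchanged, and the value $1/\numSet$ is recovered.

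The step I expect to be the main obstacle is making the reduction to the $\numSet$-dimensional eigenspace and the attendant exchangeability fully rigorous. Two points need care. First, I must justify that the chunk norms of the eigenvector of the \emph{full} matrix $\cmpCohMatEst$ are governed by this subspace and are not contaminated by the remaining components; I would handle this by invoking the eigenvector structure of \cite[Theorem~2]{Hasija2020} together with a conditioning argument on the other components, which occupy orthogonal directions and therefore do not break the relevant symmetry. Second, I must secure the distributional invariance of the off-diagonal sample cross-correlations under set permutations, which presupposes that the $\srcIdx$th component shares a common exchangeable distribution across the data sets. Once these two facts are in place, the conclusion follows immediately from the normalization identity and the permutation symmetry.
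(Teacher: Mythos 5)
Your proposal is correct in outline and shares the paper's skeleton: the pathwise normalization $\sum_{\setIdx=1}^{\numSet}\eigVecChkNorm{\srcIdx} = \big|\big|\eigVecEst\big|\big|^2 = 1$ together with equality of the $\numSet$ expectations forces the common value $1/\numSet$. Where you genuinely diverge from the paper is in how the equal-expectation step is established. The paper argues that under the null the eigenvector ``does not exhibit a specific structure,'' so its entries follow a maximum-entropy distribution with zero mean and equal variance, i.e., are uniformly distributed on the $\numSrc\cdot\numSet$-dimensional unit sphere (invoking random matrix theory references), from which equal chunk-norm expectations follow at once. You instead prove only the weaker property that is actually needed: exchangeability of $\big(\eigVecChkNorm[1]{\srcIdx},\dots,\eigVecChkNorm[\numSet]{\srcIdx}\big)$ under permutations of the set indices, obtained by letting block permutations act by conjugation on a reduced $\numSet\times\numSet$ eigenproblem whose off-diagonal entries are sample cross-correlations. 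Your route is more elementary and makes explicit a hypothesis the paper leaves implicit: permutation invariance requires the $\srcIdx$th components to be identically distributed across sets, not merely uncorrelated (the paper's maximum-entropy claim silently needs a comparable symmetry). Conversely, the paper's appeal to sphere-uniformity sidesteps the two obstacles you flag yourself, namely the restriction of the full eigenproblem to the $\numSet$-dimensional subspace and the fact that the tested eigenvector is selected by eigenvalue rank; neither your sketch nor the paper's proof resolves this rank-conditioning issue rigorously, and the paper effectively concedes the point in Sec.~\ref{sec:test-stat}, where it argues that analytical models for rank-conditioned chunk norms are unavailable and resorts to the bootstrap. So both arguments are heuristic at the same place; yours rests on a strictly weaker distributional claim, the paper's is shorter by outsourcing the symmetry to cited RMT results.
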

		
			\begin{proof}
				The norm of any eigenvector of any arbitrary matrix is one. If the $\srcIdx$th component is uncorrelated across all sets, its eigenvector $\eigVecEst$ does not exhibit a specific structure. The same holds, if $\srcIdx$th component is equally strongly correlated across all data sets. Then, its elements follow a maximum entropy distribution with zero mean and equal variance \cite{Laloux1999, Plerou2002}, i.e., they are uniformly distributed on the $\numSrc\cdot \numSet$-dimensional unit sphere \cite{Bai2007}. Hence, the expected value of the $\numSrc$ sums of distinct $\numSet$ entries of $\eigVecEst$ all have the same expected value and must sum to one. 	
			\end{proof}
			
			In the examples of Fig.~\ref{fig:chunk-ex} and Fig.~\ref{fig:fn-ex}, Theorem~\ref{theo:equal-chunk-norms-h0} can also be observed empirically.

			\begin{conj}
				\label{conj:exp-of-chunk-norms-cor-cmp}
				Consider that the $\srcIdx$th component of set $\setIdx\in[\numSet]$ is correlated with the $\srcIdx$th component of at least one other set $\setIdx^{\prime}\in[\numSet], \setIdx\neq\setIdx^{\prime}$, $\srcIdx\in[\numSrc]$. $\corSet$ denotes the set of all data sets whose $\srcIdx$th components are correlated and $\nonCorSet$ its complement. Then,
				\begin{equation}
					\E\left[\eigVecChkNorm[\setIdx]{\srcIdx}\right] > \E\left[\eigVecChkNorm[\setIdx^{\prime}]{\srcIdx}\right] \qquad\forall\,\setIdx\in\corSet, \setIdx^{\prime}\in\nonCorSet.
				\end{equation}
				Since $\left|\left|\eigVecEst\right|\right|^2 = 1$, $\E\left[\eigVecChkNorm[\setIdx^{\prime}]{\srcIdx}\right] < \frac{1}{\numSet}$ holds for $\setIdx^{\prime}\in\nonCorSet$. 
			\end{conj}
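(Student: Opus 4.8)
The plan is to treat $\eigVecChkNorm{\srcIdx}$ as a perturbed version of the corresponding population chunk norm and to exploit the structural dichotomy from \cite[Theorem~2]{Hasija2020}: for the population coherence matrix $\cmpCohMat$ one has $\eigVecChk{\srcIdx}\neq\mathbf{0}_\numSrc$ exactly when $\setIdx\in\corSet$ and $\eigVecChk{\srcIdx}=\mathbf{0}_\numSrc$ when $\setIdx\in\nonCorSet$, together with the normalization $\sum_{\setIdx=1}^{\numSet}||\eigVecChk{\srcIdx}||^2=||\eigVec||^2=1$. The guiding idea is that the estimated chunk norm inherits a strictly positive signal contribution $||\eigVecChk{\srcIdx}||^2>0$ in every correlated set that is entirely absent in the uncorrelated sets, and that this signal dominates the difference of the (small) sampling-noise contributions once expectations are taken.

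First I would write $\cmpCohMatEst=\cmpCohMat+\boldsymbol{E}$, with $\boldsymbol{E}$ the finite-sample estimation error, and expand the unit-norm estimated eigenvector to second order as $\eigVecEst=\eigVec+\boldsymbol{\delta}^{(\srcIdx)}-\frac{1}{2}||\boldsymbol{\delta}^{(\srcIdx)}||^2\eigVec+O(||\boldsymbol{E}||^3)$, where $\boldsymbol{\delta}^{(\srcIdx)}=\sum_{l\neq\srcIdx}\frac{(\eigVec[l])^\top\boldsymbol{E}\,\eigVec}{\eigVal-\eigVal[l]}\eigVec[l]$ is the standard symmetric eigenvector perturbation, orthogonal to $\eigVec$, and the quadratic term enforces $||\eigVecEst||^2=1$. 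Taking expectations of the restriction to set $\setIdx$, and using that $\boldsymbol{\delta}^{(\srcIdx)}$ is zero-mean to first order (the sample covariance is unbiased and the leading bias of $\cmpCohMatEst$ enters only at second order), the cross term vanishes and
\begin{equation}
\E\left[\eigVecChkNorm{\srcIdx}\right]\approx\big|\big|\eigVecChk{\srcIdx}\big|\big|^2\Big(1-\E\big[||\boldsymbol{\delta}^{(\srcIdx)}||^2\big]\Big)+\E\left[\big|\big|\boldsymbol{\delta}_\setIdx^{(\srcIdx)}\big|\big|^2\right].
\end{equation}
The first summand is the signal, strictly positive for $\setIdx\in\corSet$ and zero for $\setIdx\in\nonCorSet$ by \cite[Theorem~2]{Hasija2020}; the second is the noise mass leaked into chunk $\setIdx$.

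The comparison then reduces to
\begin{equation}
\E\left[\eigVecChkNorm[\setIdx]{\srcIdx}\right]-\E\left[\eigVecChkNorm[\setIdx^\prime]{\srcIdx}\right]\approx\big|\big|\eigVecChk[\setIdx]{\srcIdx}\big|\big|^2\Big(1-\E\big[||\boldsymbol{\delta}^{(\srcIdx)}||^2\big]\Big)+\Big(\E\big[||\boldsymbol{\delta}_\setIdx^{(\srcIdx)}||^2\big]-\E\big[||\boldsymbol{\delta}_{\setIdx^\prime}^{(\srcIdx)}||^2\big]\Big),
\end{equation}
for $\setIdx\in\corSet,\setIdx^\prime\in\nonCorSet$, where the leading signal term is strictly positive. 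It then remains to control the leaked noise masses $\E[||\boldsymbol{\delta}_\setIdx^{(\srcIdx)}||^2]$ across chunks. Here I would argue, in the spirit of the random-matrix reasoning behind Theorem~\ref{theo:equal-chunk-norms-h0}, that within the signal-free subspace $\boldsymbol{\delta}^{(\srcIdx)}$ exhibits no preferred direction and distributes its mass essentially exchangeably over the sets, so that the bracketed noise difference is of smaller order than the signal term and cannot reverse the sign for sufficiently informative correlations; this yields the claimed strict inequality.

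Finally, the bound $\E[\eigVecChkNorm[\setIdx^\prime]{\srcIdx}]<\frac{1}{\numSet}$ for $\setIdx^\prime\in\nonCorSet$ follows by combining the first part with the normalization $\sum_{\setIdx=1}^{\numSet}\E[\eigVecChkNorm{\srcIdx}]=\E[||\eigVecEst||^2]=1$: invoking exchangeability of the signal-free sets to equate $\E[\eigVecChkNorm[\setIdx^\prime]{\srcIdx}]$ to a common value $\bar{c}$ for all $\setIdx^\prime\in\nonCorSet$ and bounding every correlated-set expectation strictly above $\bar{c}$ by the first part, the normalization gives $1=\sum_{\setIdx\in\corSet}\E[\eigVecChkNorm{\srcIdx}]+|\nonCorSet|\,\bar{c}>\numSet\,\bar{c}$, whence $\bar{c}<\frac{1}{\numSet}$. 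The main obstacle I anticipate lies precisely in controlling the leaked noise masses: a fully rigorous argument must replace the heuristic first-order expansion by finite-sample concentration bounds for the eigenvectors of $\cmpCohMatEst$ in the small-$\numSam$ regime of interest, and must justify the exchangeability of the signal-free chunks. These two gaps are presumably why the statement is recorded as a conjecture rather than a theorem.
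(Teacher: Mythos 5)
The paper itself records this statement as a conjecture and offers no proof, only a justification that is your argument in compressed form: the estimated eigenvector chunks are noisy versions of the true ones, whose squared norms obey the zero/non-zero dichotomy of \cite[Theorem~2]{Hasija2020}, combined with the exchangeability (maximum-entropy) reasoning of Theorem~\ref{theo:equal-chunk-norms-h0} for the signal-free part. Your explicit second-order perturbation expansion and the closing normalization argument for $\E\left[\eigVecChkNorm[\setIdx^\prime]{\srcIdx}\right] < \frac{1}{\numSet}$ elaborate rather than alter that same route, and the two gaps you flag --- controlling the leaked noise mass across chunks and justifying exchangeability of the signal-free chunks --- are exactly the unresolved steps that keep the statement a conjecture rather than a theorem in the paper.
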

			
			This conjecture follows from Theorem~\ref{theo:equal-chunk-norms-h0} in combination with \cite[Theorem~2]{Hasija2020}. The eigenvectors of the estimated coherence matrix $\eigVecEst$ are noisy versions of the eigenvectors $\eigVec$ of the true unavailable $\cmpCohMat$. For the true $\eigVecChk{\srcIdx} = \mathbf{0}_{\numSrc}$ if and only if $\actEl{\setIdx}{\srcIdx} = 0$  and $\big|\big|\eigVecChk{\srcIdx}\big|\big|^2 = 0$. In contrast, if $\actEl{\setIdx}{\srcIdx}= 1$, $\big|\big|\eigVecChk{\srcIdx}\big|\big|^2 > 0$ holds. The chunk norms of the eigenvectors $\eigVecChkNorm[\setIdx]{\srcIdx}$ of $\cmpCohMatEst$ are noisy versions of the $\big|\big|\eigVecChk{\srcIdx}\big|\big|^2$. Conjecture \ref{conj:exp-of-chunk-norms-cor-cmp} states that these estimates, while not perfectly zero for uncorrelated sets, are expected to have a smaller expected value than those chunk norms associated with true correlations.

	We propose to exploit these differences in the expected value of the chunk norms to identify the true correlations. Hence, we deploy the test statistic
			\begin{equation}
				\label{eq:test-stat}
				\T{\srcIdx} = {\eigVecChkNorm{\srcIdx} - \E\left[\eigVecChkNorm[\setIdx]{\srcIdx}\right]} = {\eigVecChkNorm{\srcIdx} - \mu_\setIdx^{(\srcIdx)}},\qquad \forall\,\srcIdx\in[\numSrc], \setIdx\in[\numSet].
			\end{equation}
				$\chkNormExp{\srcIdx}$ denotes the expectation $\eigVecChkNorm{\srcIdx}$. We know that $\chkNormExp{\srcIdx}\leq \frac{1}{\numSet}$ under the atom null hypothesis, but its exact value depends on the underlying data structure. In addition, to compute the atom \pval s $\p{\srcIdx}$ from Eq.~\eqref{eq:pval-atom}, which are needed as inputs for our \gls{mht} correlation structure detector in Alg.~\ref{alg:proposed-lfdr-atom-cmp}, the \glspl{pdf} $\TPdfNull{\srcIdx}$ under the atom null hypothesis is required for all $\T{\srcIdx}\, \forall \,\setIdx\in[\numSet], \srcIdx\in[\numSrc]$. 
				The literature on the distributional properties of the spectrum of finite-sample second order statistic matrices is limited. In general, tools from \gls{rmt} need to be applied. While some work on estimators for the eigenvalues and eigenvectors of random covariance matrices exist, e.g. \cite{Mestre2008b}, the results in the literature concerning the statistical properties of the eigenvectors are not applicable to our problem at hand. Often, the assumptions on the underlying component distributions are too restrictive \cite{Forrester2007, Mestre2008a, Simm2017}. The \gls{rmt} overview paper \cite{ORourke2016} provides interesting insights into the distribution of the squared chunk norms under mild conditions. However, their model describes a chunk norm random variable that is marginalized over the corresponding eigenvalue magnitude. In this work, the eigenvalues are sorted such that the first eigenvector corresponds to the largest eigenvalue. The statistical properties of an eigenvector chunk norm conditioned on it being associated with the $\srcIdx$th largest eigenvalue differ from those of an eigenvector chunk norm associated with the $\obsIdx$th largest eigenvalue. We conclude that a general valid analytical model for distribution of the eigenvector chunk norms as utilized in this work does not exist and determining the required distributions analytically is too challenging. Instead, we resort to learning  $\TPdfNull{\srcIdx}$ and $\chkNormExp{\srcIdx}\,\forall\,\srcIdx\in[\numSrc], \setIdx\in[\numSet]$ from the data.

		
			\subsection{Learning the test statistic distribution from the data}
				We have access to exactly one realization of each eigenvector chunk norm $\eigVecChk{\srcIdx}$. Thus, we deploy the bootstrap \cite{Efron1994, Zoubir1998} to obtain artificial realizations that can be used to approximate the underlying probability model. The bootstrap is a standard tool to approximate the distribution of a test statistic under the null hypothesis \cite{Zoubir2001, Golz2017a} for the given number of samples. Hence, the bootstrap is not only useful for estimating theoretically unknown distributions, but can also be applied when a small sample size prohibits the use of asymptotic results \cite{Golz2017b}.
				The bootstrap can be deployed parametrically or non-parametrically. The former assumes a parametric data model, estimates the model parameters from the observations and then resamples from the estimated distribution. The latter resamples directly $\numBtSam$ times with replacement from the observation sample. Since little is known about shape of the distribution of the eigenvector chunk norms, we stick to the non-parametric bootstrap. The details on how we bootstrap the estimated eigenvector chunk norms are provided in Alg.~\ref{alg:chk-norm-bt}.
		
				\begin{algorithm}
					\caption{The eigenvector chunk norm bootstrap}
					\label{alg:chk-norm-bt}
					\hspace*{\algorithmicindent} \textbf{Input}: Observation matrices $\obsMat\,\forall\,\setIdx\in[\numSet]$, $\numBtSam$\\
					\hspace*{\algorithmicindent} \textbf{Output}: Bootstrapped squared eigenvector chunk norms $\eigVecChkNormBt{\setIdx}{\srcIdx}$\\
					\hspace*{\algorithmicindent} \textit{Remark}: The indices take values $\srcIdx\in[\numSrc], \setIdx\in[\numSet]$, $\btIdx\in[\numBtSam]$
					\begin{algorithmic}[1]
						\State Resample $\numBtSam$ times from the rows of $\obsMat$ to obtain $\obsMatBt{\setIdx}$
						\State Compute $\cmpCohMatEstBt$ with $\obsMatBt{\setIdx}$ from Eqs.~\eqref{eq:sam-coh-est}, \eqref{eq:coh-mat}
						\State Find the \gls{evd} of $\cmpCohMatEstBt$: $\eigValEstBt{1}\geq\dots\geq\eigValEstBt{\numSrc\numSet}$ and $\eigVecEstBt{1},\dots,\eigVecEstBt{\numSrc\numSet}$
						\State Compute $\eigVecChkNormBt{\setIdx}{\srcIdx} = \left|\left|\eigVecChkEstBt{\setIdx}{\srcIdx}\right|\right|^2$
					\end{algorithmic}
				\end{algorithm}
			\subsection{Local false discovery rate estimation}
				A serious challenge when working with \gls{lfdr}-based inference methods is that $f_P(p)$ is most often unavailable in practice, since the exact distribution of the $p$-values under the alternative cannot be specified exactly and must hence be learned from the data. Methods to estimate the \gls{lfdr}'s from the data exist in the literature. Under the alternative, $p$-values closer to zero become more likely. Hence, joint $p$-value \glspl{pdf} like $f_P(p)$ have been modeled as a mixture of a uniform and a single-parameter beta distribution component \cite{Pounds2003} or, more recently, as a mixture of multiple single-parameter beta distributions \cite{Goelz2022a, Goelz2022c}. In particular, the spectral method of moments-based \gls{lfdr} estimator proposed in \cite{Goelz2022a} and its maximum likelihood extension \cite{Goelz2022c} enable accurate \gls{lfdr}-based inference with \gls{fdr} control even when only few handfuls of $p$-values are available. Thus, we deploy the \gls{lfdr} estimator LFDR-SMOM-EM from \cite{Goelz2022c} for estimating the \glspl{lfdr} in this work.
			\subsection{The complete proposed algorithm}
				We present the complete \textbf{\gls{lfdr}}-based \textbf{mu}ltiple hypothesis \textbf{t}esting procedure for complete \textbf{co}rrelation \textbf{st}ructure identification (LFDR-MULT-COST) in detail in Alg.~\ref{alg:lfdr-mult-cost}. First, we compute the test statistics $\T{\srcIdx}\,\forall\,\setIdx\in[\numSet],\srcIdx\in[\numSrc]$ under the assumption that the null hypothesis holds everywhere. Then, we bootstrap the observation matrices to obtain an estimate $\TCdfNullEst{\srcIdx}{\TSym}$ for the \gls{cdf} of the test statistics under the null hypothesis. Subsequently, a \pval~$\p{\srcIdx}$ is computed  $\forall\,\setIdx\in[\numSet], \srcIdx\in[\numSrc]$ to express the confidence in each local null hypothesis $\HNulAtom{\srcIdx}$. Then, the \glspl{lfdr} are estimated. Finally, the proposed correlation structure detector from Sec.~\ref{sec:prop-os_test} is applied. The \gls{fdr} is controlled on the atom and component level, i.e., $\fdr\leq \fdrThrAtom$ and $\fdrCmp\leq \fdrThrCmp$ while $\sum_{k = 1}^{K}\actElEst{\srcIdx}{\setIdx}\neq 1\, \forall \srcIdx\in[\numSrc]$.

				\begin{algorithm}
					\caption{Proposed \gls{lfdrmultcost}}
					\label{alg:lfdr-mult-cost}
					\hspace*{\algorithmicindent} \textbf{Input}: Observation matrices $\obsMat$, $\setIdx\in[\numSet]$, $\numBtSam$, $\fdrThrAtom$, $\fdrThrCmp$\\
					\hspace*{\algorithmicindent} \textbf{Output}: Activation matrix estimate $\actMatEst_\text{MHT}$\\
					\hspace*{\algorithmicindent} \textit{Remark}: The indices take values $\srcIdx\in[\numSrc], \setIdx\in[\numSet]$, $\btIdx_0\in[\numBtSam_0]$, $\btIdx_1\in[\numBtSam_1]$
					\begin{algorithmic}[1]
						\vspace{1pt}
						\Statex\textbf{Step 1:} \textit{Computation of the chunk norms}
						\State Compute $\cmpCohMatEst$ from Eqs.~\eqref{eq:sam-coh-est}, \eqref{eq:coh-mat}
						\State Find the \gls{evd} of $\cmpCohMatEst$: $\eigValEst[1]\geq\dots\geq\eigValEst[\numSrc\numSet]$ and $\eigVecEst[1],\dots,\eigVecEst[\numSrc\numSet]$
						\State Divide $\eigVecEst[\srcIdx]$ into $\numSet$ chunks $\eigVecChkEst{\srcIdx}$
						\State Compute $\eigVecChkNorm{\srcIdx}= \left|\left|\eigVecChkEst{\srcIdx}\right|\right|^2$
						\Statex\textbf{Step 2:} \textit{Estimation of the distribution of the test statistic}
						\State Obtain $\numBtSam$ bootstrapped $\eigVecChkNormBt[\btIdx]{\setIdx}{\srcIdx}$ from Alg.~\ref{alg:chk-norm-bt}
						\State Estimate ${}^{\btIdx}{\chkNormExp{\srcIdx}}^\ast$ by the sample means of the $\eigVecChkNormBt[\btIdx]{\setIdx}{\srcIdx}$
						\State Compute the bootstrapped $\TBt[\btIdx]{\setIdx}{\srcIdx} = {\eigVecChkNormBt[\btIdx]{\setIdx}{\srcIdx} - {}^{\btIdx}{\chkNormExp{\srcIdx}}^\ast}$
						\State Form the null distribution function estimates $\TCdfNullEst{\srcIdx}{\TSym}$ by sorting the bootstrapped test statistics in ascending order	$\TBt[(1)]{\setIdx}{\srcIdx}\leq \dots\leq \TBt[(\numBtSam)]{\setIdx}{\srcIdx}$

						\Statex\textbf{Step 3:} \textit{Computation of the test statistic}
						\State Compute $\T{\srcIdx}|\HNulAtom{\srcIdx}$ from Eq.~\eqref{eq:test-stat} with $\hat{\mu}_\setIdx^{(\setIdx)}|\HNulAtom{\setIdx} = \min\Big[\frac{1}{B}\cdot \sum_{b = 1}^B \eigVecChkNormBt[\btIdx]{\setIdx}{\srcIdx}, \frac{1}{\numSet}\Big] $ 
						\Statex\textbf{Step 4:} \textit{Hypothesis testing}
						\State Compute the \pval s $\p[\setIdx]{\srcIdx} = \TCdfNullEst{\srcIdx}{\T{\srcIdx}}$
						\State Estimate the atom \glspl{lfdr} $\mathsf{lfdr}_\setIdx^{(\srcIdx)}$ with \cite[Alg.~2]{Goelz2022c}
						\State Find the activation matrix estimate $\actMatEst_\text{MHT}$ with Alg.~\ref{alg:proposed-lfdr-atom-cmp}
					\end{algorithmic}
				\end{algorithm}

	\section{Simulation Results}
\label{sec:sim-res}
In this section, we numerically evaluate the performance of the proposed \gls{lfdr}-based multiple testing approach to complete correlation structure identification. We simulate a variety of scenarios. Our results underline that our proposed method is very effective in identifying correlations. In particular in challenging scenarios, such as when data sets are high-dimensional and only few samples are available, it outperforms its competitors significantly in terms of detection power, while the \gls{fdr} is controlled at the nominal level. We compare the proposed method's performance to the two existing approaches for multiset correlation structure identification from the literature. These are the previously summarized \gls{ts} from \cite{Hasija2020}, which first estimates the number of components $\numActSrc$ correlated between at least some sets before identifying the precise sets for which correlation exists. We use equal nominal false alarm probability levels $\alpFAStOne=\alpFAStTwo=0.1$. 
For \gls{lfdrmultcost}, we present results for $\fdrThrAtom = \fdrThrCmp = 0.1$ and for $\fdrThrAtom = .1, \fdrThrCmp = 1$. While the former combination uses the widely used nominal \gls{fdr} level of $0.1$ for atoms and components alike, the latter combination allows for any proportion of false discoveries on the component level. Throughout our experiments, the achieved detection power is very similar for $\fdrThrCmp = 0.1$ and $\fdrThrCmp = 1$. Hence, the additional statistical guarantee, which increases the credibility for the identified correlation structure by controlling the \gls{fdr} on the component level versus having \gls{fdr} control only on the atom level comes at little cost. The second competitor originates from \cite{Marrinan2018}, where the underlying components are estimated via \gls{mcca} before each possible pair of components is tested for correlation. This results in a large number of tests if the number of data sets is large. Thus, we use a false alarm level of $0.001$, as the authors of \cite{Marrinan2018} suggested. 

The synthetic data is generated according to the model in Eq.~\eqref{eq:sam-sig-mod}. We randomly generate orthogonal mixing matrices $\mixMat$. The components in each data set have an equal variance of $1$. We use both, Gaussian and Laplacian distributed components to illustrate the insensitivity of our method to the underlying data distributions. The additive noise is i.i.d. Gaussian, unless specified otherwise. The noise variance is constant across data sets and computed from the \gls{snr} and the component variance. For the bootstrap, we use $B = 300$ resamples, which is sufficient in most applications \cite{Zoubir2001}.

To quantitatively evaluate the performance, we monitor the empirically obtained \gls{fdr} on the atom and component level, which should both not exceed their respective nominal levels for our proposed methods. In addition, we compare the detection power of the different methods, that is, the percentage of detected true correlations, both, on the atom and the component level. Finally, we also provide some exemplary plots of the averaged estimated activation matrices $\actMatEst$ in comparison to the true activation matrix $\actMat$. This is useful for understanding the origin of the performance differences. We analyze the behavior in dependence of \gls{snr}, sample size, number of data sets, proportion of true correlations. In addition, we investigate the impact of outliers through $\epsilon$-contaminated noise \cite{Huber2009} in the Appendix \ref{app:res}. The underlying correlation structures vary in the different experiments. All results are averaged over $100$ independent repetitions.

In Experiment~1, we revisit the example from Fig.~\ref{fig:fn-ex}. \(\numActSrc = 6\) out of $\numSrc = 10$ components are correlated across \(7, 6, 5, 4, 3\) and \(2\) out of $\numSet = 15$ data sets with correlation coefficients \(7, .7, .65, .6, .6, .55\), respectively. The correlation coefficients are constant per component, \(\corCoeffSrc[\setIdx, \setIdx^\prime]{\srcIdx} = \corCoeffSrc[\setIdx^{\prime\prime}, \setIdx^{\prime\prime\prime}]{\srcIdx}\,\forall\,\setIdx, \setIdx^{\prime}, \setIdx^{\prime\prime}, \setIdx^{\prime\prime\prime}\in[\numSet]\).

For the remaining experiments, the correlation structure be randomized as follows. We define $\pi_0$ as the fraction of zeros in $\actMat$, that is, the proportion of true atom level null hypotheses among all hypotheses. Based on $\pi_0$, $\numActSrc$ is computed such that the $\numActSrc$th component is correlated across at least two sets. The $\numActSrc-1$st component is correlated across one more set, the $\numActSrc-2$nd again correlated across one more etc. The sets across which a component is correlated are selected uniformly at random. The precise value of $\numActSrc$ depends on $\pi_0$, $\numSet$ and $\numSrc$. This randomization of the correlation allows to obtain results for a variety of different correlation structures while guaranteeing its identifiability via the coherence matrix \cite{Hasija2020}. In addition, the parameter $\pi_0$ allows to flexibly tune the sparsity of the true correlations, that is, the number of non-zero entries of activation matrix $\actMat$. 

We define an average correlation coefficient of $0.85$ for the first component with the strongest correlation and $0.5$ for the $\numActSrc$th component with the weakest correlation. The average correlation decreases linearly as $\srcIdx\in[\numActSrc]$ grows. We sample the correlation coefficients $\corCoeffSrc{\srcIdx}$ from a Gaussian distribution with expectation $\mu_\rho = 0.85 - \frac{(0.85-0.5)}{\numActSrc-1}\cdot (\srcIdx - 1)$ and a standard deviation $\sigma_\rho = 0.33 \cdot \frac{(0.85-0.5)}{\numActSrc-1}$. The objective was to create a challenging, yet solvable scenario. We found that correlations with correlation coefficient $\leq 0.5 $ are barely identifiable with any of the deployed methods for the small sample sizes we consider in this work. The randomness in the value of the correlation coefficient values attributes to the real-world, where the correlation strength between the $\srcIdx$th component of different sets may vary.

\subsubsection*{Experiment~1} In this experiment, the number of samples \(\numSam = 300\) is small. The top and bottom rows of Fig.~\ref{fig:exp-snr_perf} shows the performance measures on the atom and component level as a function of \gls{snr}, respectively. The atom \gls{fdr} for mCCA-HT is higher than the axis limit $0.3$. Such a high proportion of false discoveries makes the results of unreliable. This is also confirmed in the detection pattern for mCCA-HT on the right top of Fig.~\ref{fig:exp-snr_examples} for $\mathrm{SNR} = 10 \unit{dB}$. Many false positives occur with high probability. \gls{ts} produces less false positives than mCCA-HT, but is also far more conservative, i.e., finds considerably less true correlations for $\mathrm{SNR}\geq 5\unit{dB}$. As the bottom left of Fig.~\ref{fig:exp-snr_examples} illustrates, this is due to an early termination in Step~I. Indeed, for the displayed detection pattern with $\mathrm{SNR}=10\unit{dB}$, the average number of correlated components estimated by \gls{ts} is $\numActSrcEst = 2$. This is far from the true $\numActSrc = 6$. Hence, the error propagation from Step~I to Step~II causes low detection power. Our proposed \gls{lfdrmultcost} yields the best performance. The empirical \gls{fdr} is well below the nominal level and the detection power is high. Our proposed single-step approach excels in particular for stronger signals, due to its ability to freely detect true discoveries in all components. Removing the constraint on the component level \gls{fdr} by setting $\fdrThrCmp = 1$ has little impact on the detection power. While increases the component level \gls{fdr}, the empirical $\fdrCmp$ remains low.

\begin{figure}
	\begin{subfigure}{.54\textwidth}
		\begin{subfigure}{.49\textwidth}
			\includegraphics[scale=.28]{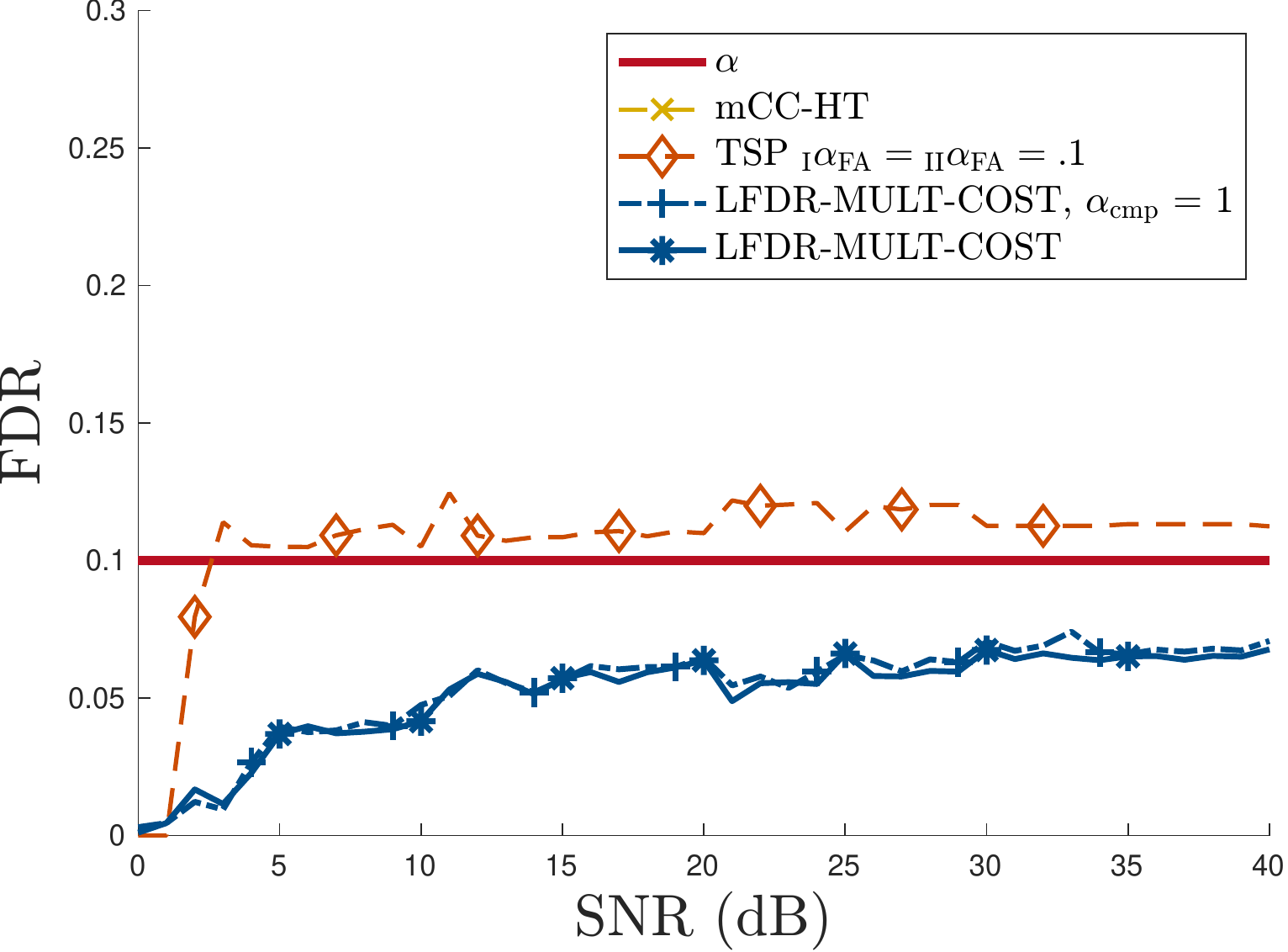}\smallskip
			\caption*{\(\fdr\)}
			\label{fig:exp-snr_atom-fdr}
		\end{subfigure}
		\begin{subfigure}{.49\textwidth}
			\includegraphics[scale=.28]{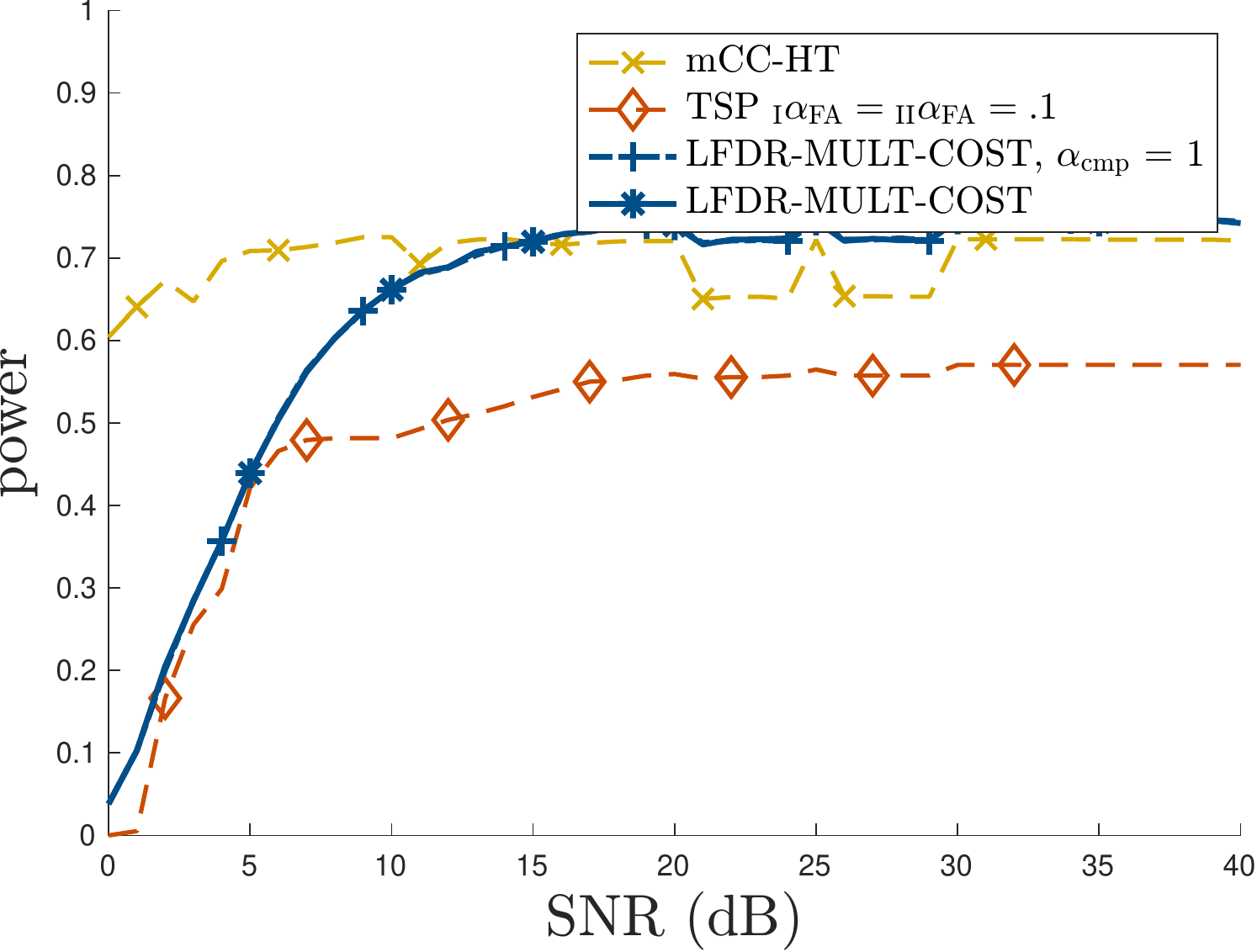}\smallskip
			\caption*{$\mathrm{Power}$}
			\label{fig:exp-snr_atom-pow}
		\end{subfigure}\medskip
		\label{dummy}
		\begin{subfigure}{.49\textwidth}
			\includegraphics[scale=.28]{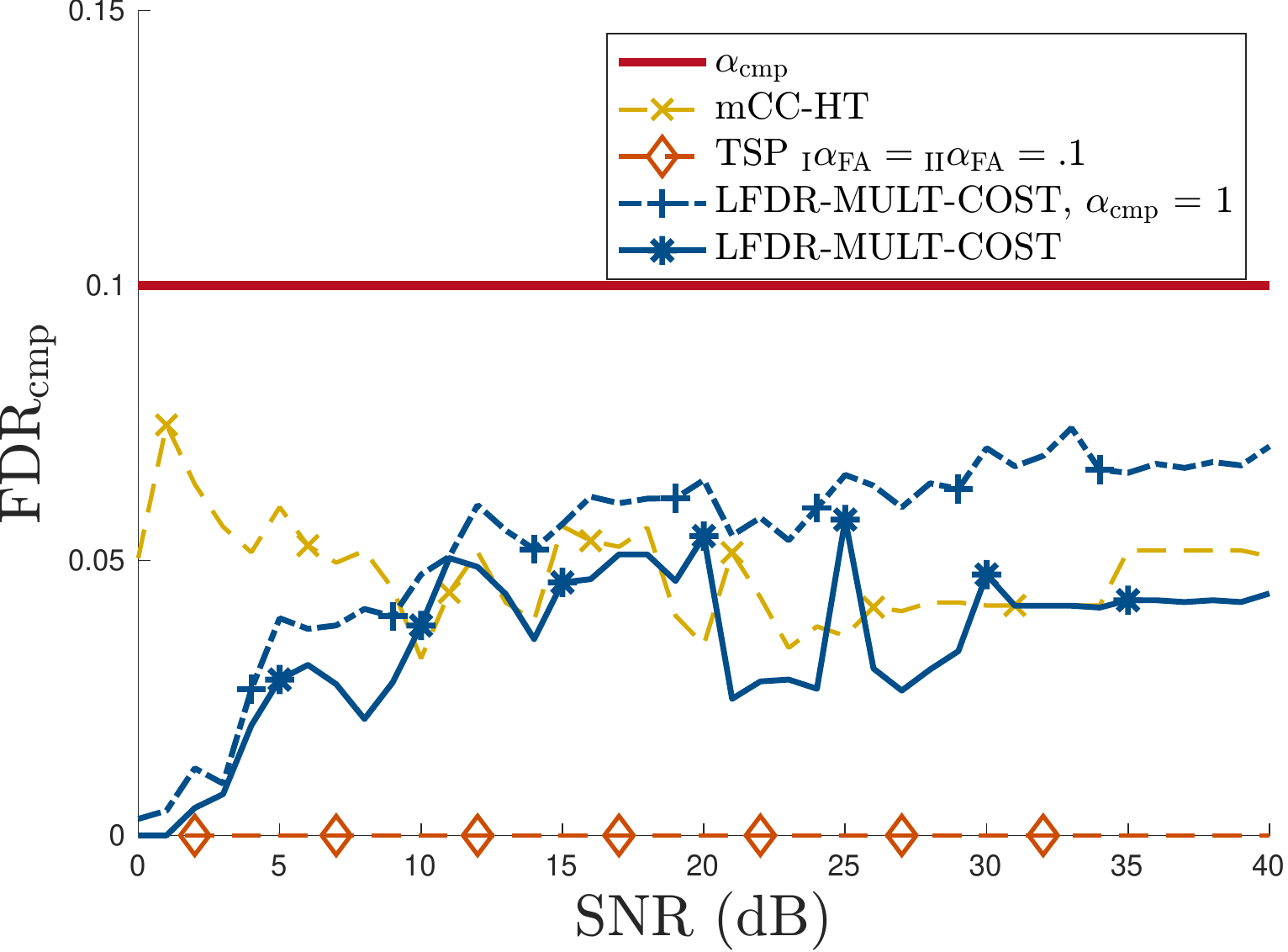}\smallskip
			\caption*{\(\fdrCmp\)}
			\label{fig:exp-snr_cmp-fdr}
		\end{subfigure}
		\begin{subfigure}{.49\textwidth}
			\includegraphics[scale=.28]{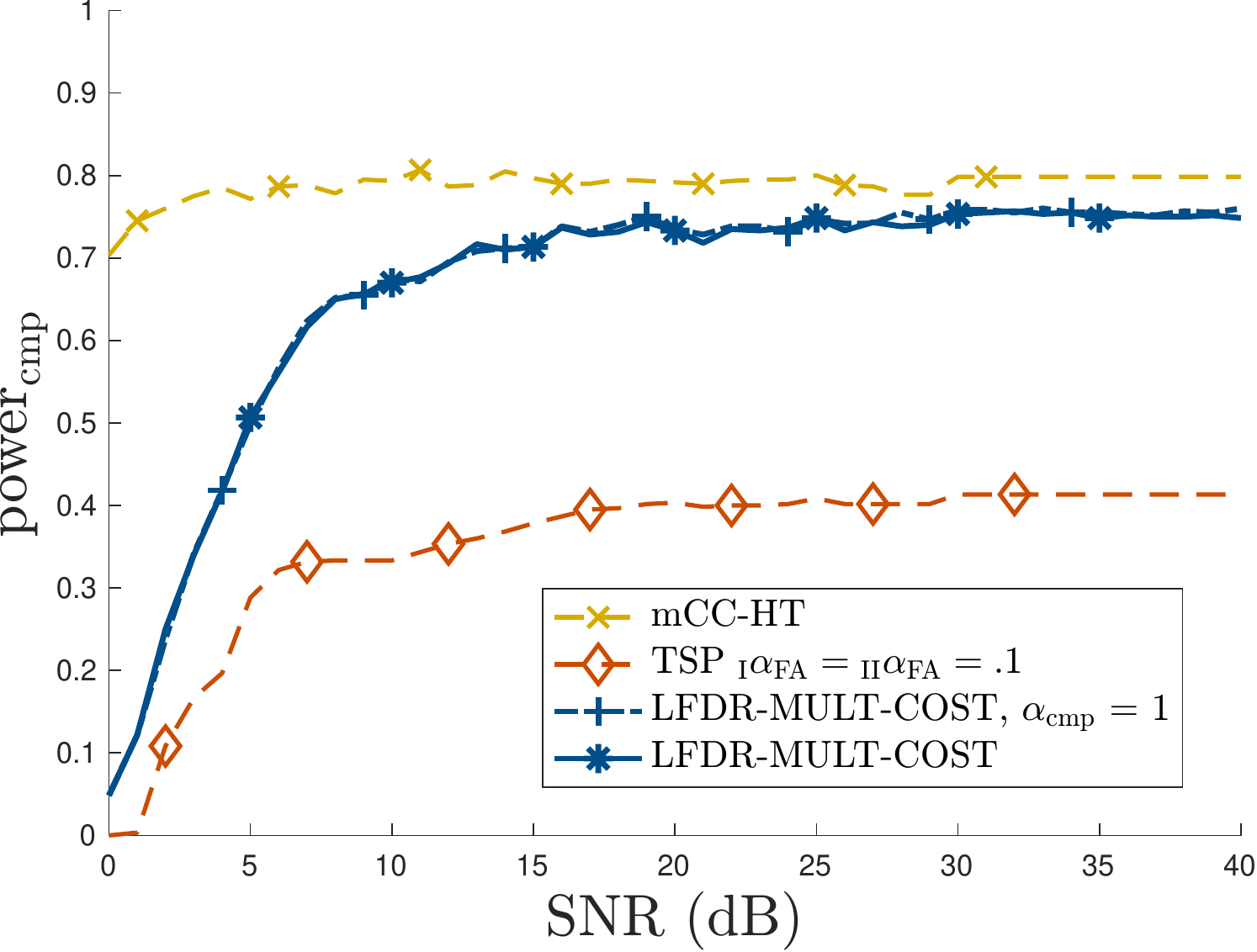}\smallskip
			\caption*{\(\mathrm{Power}_\text{cmp}\)}
			\label{fig:exp-snr_cmp-pow}
		\end{subfigure}
		\medskip
		\caption{performance measures}
		\label{fig:exp-snr_perf}
	\end{subfigure}
	\begin{subfigure}{.44\textwidth}\centering
		\begin{subfigure}{.47\textwidth}
			\centering
			\includegraphics[scale=.2]{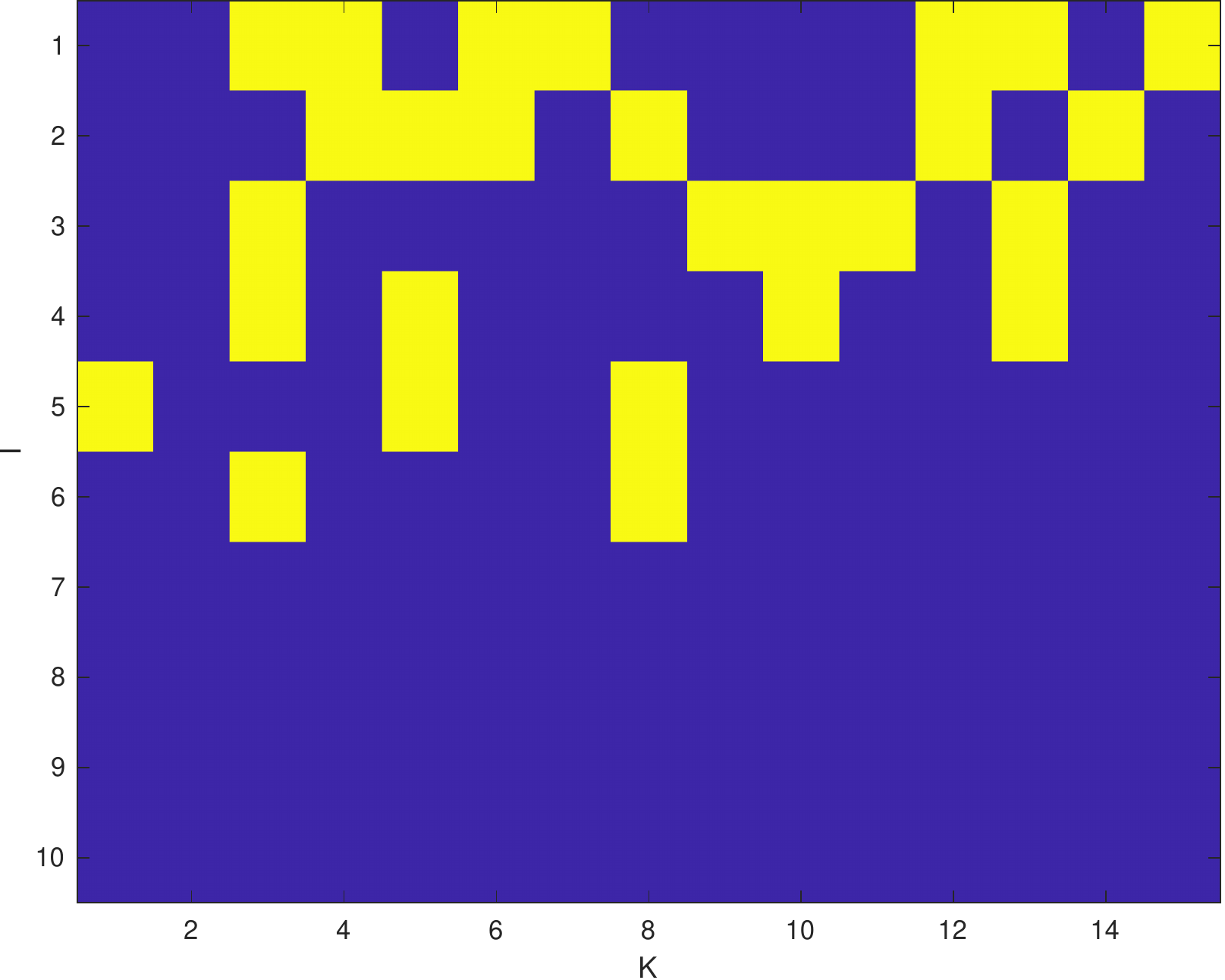}\smallskip
			\caption*{ground truth}
			\label{fig:exp-snr_ex-tru}
		\end{subfigure}
		\begin{subfigure}{.48\textwidth}
			\centering
			\includegraphics[scale=.2]{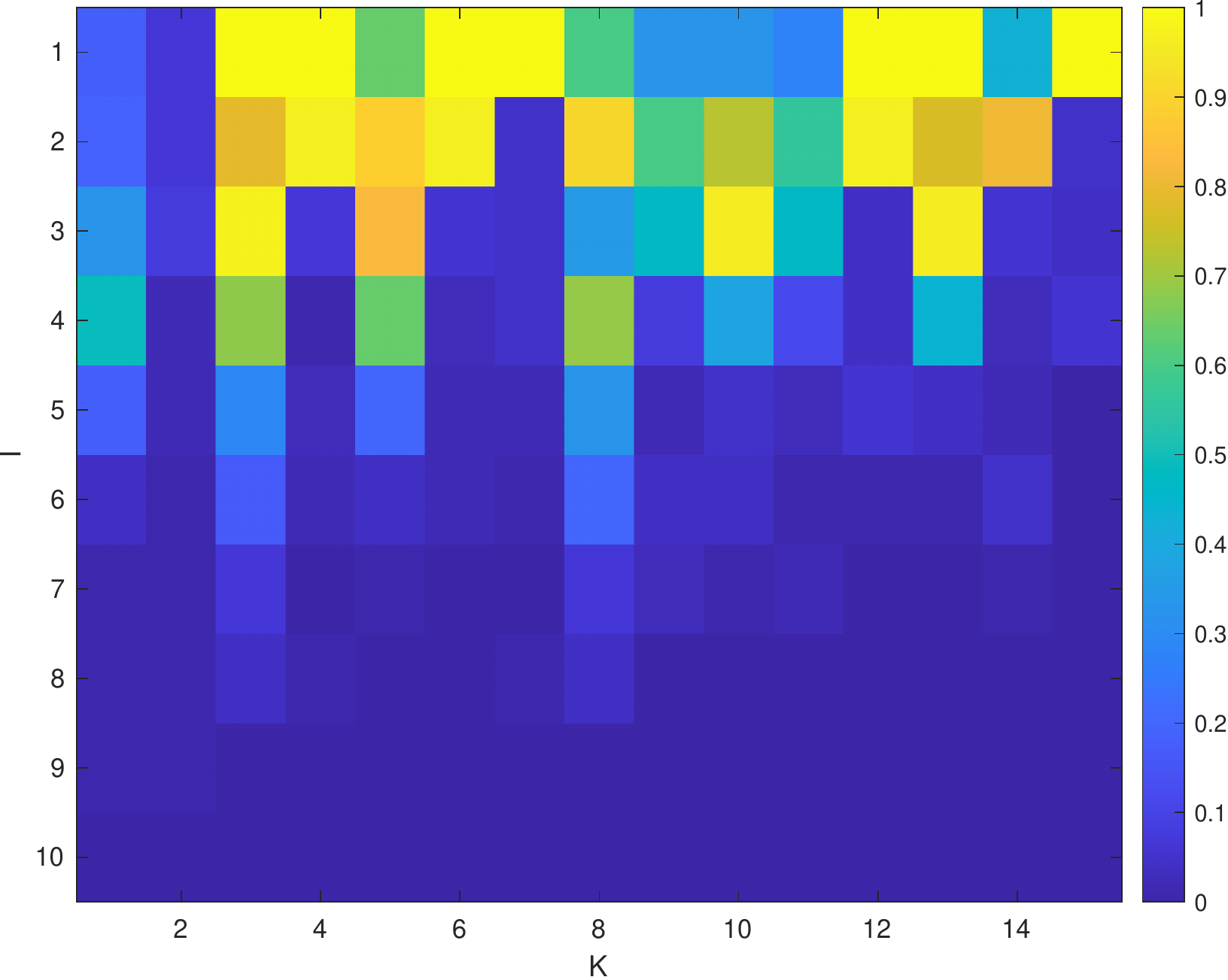}\smallskip
			\caption*{mCCA-HT \cite{Marrinan2018}}
			\label{fig:exp-snr_ex-mar}
		\end{subfigure}\bigskip
		\label{dummy2}
		\begin{subfigure}{.48\textwidth}
			\centering
			\includegraphics[scale=.2]{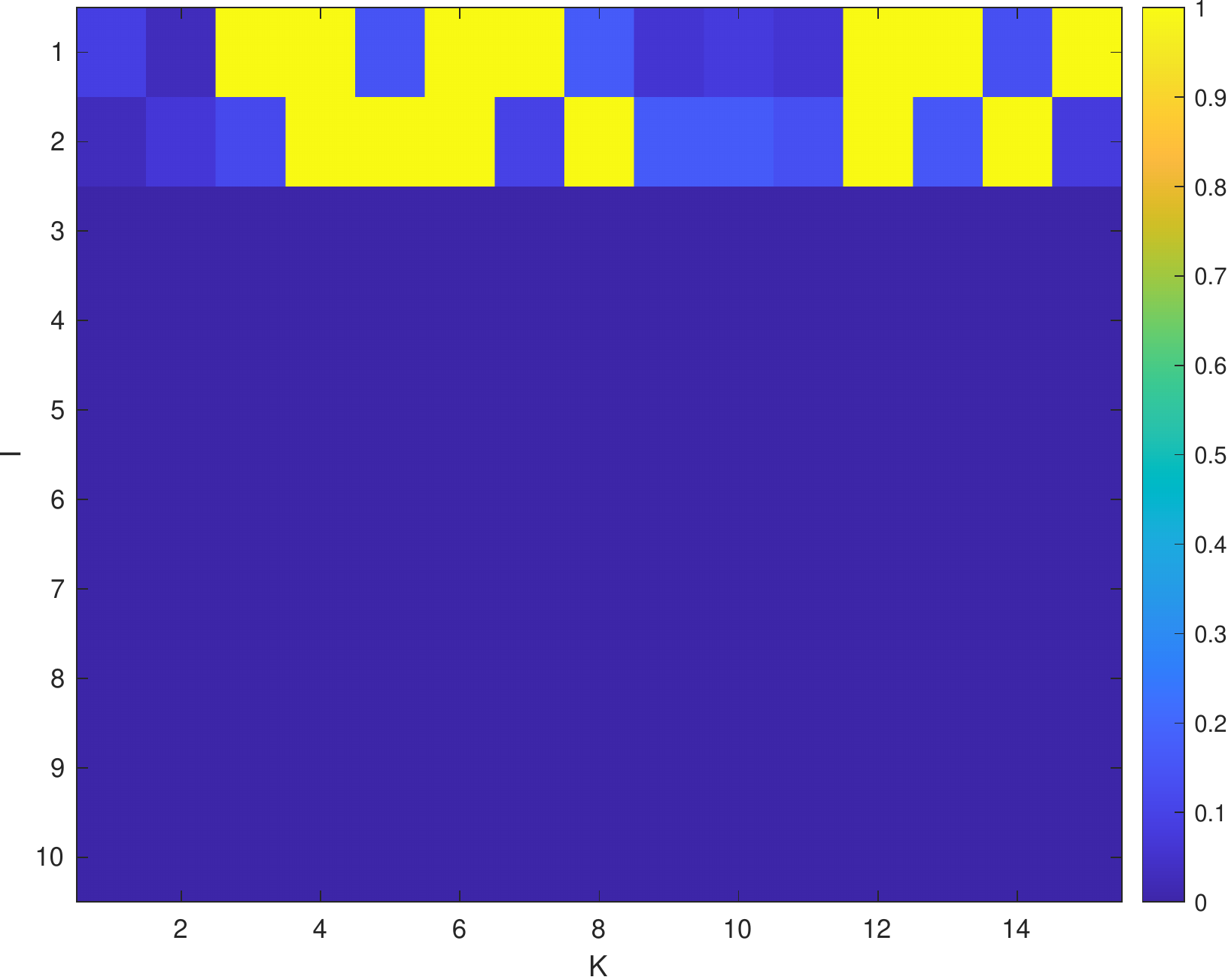}\smallskip
			\caption*{\gls{ts} \cite{Hasija2020}}
			\label{fig:exp-snr_ex-tsp}
		\end{subfigure}
		\begin{subfigure}{.48\textwidth}
			\centering
			\includegraphics[scale=.2]{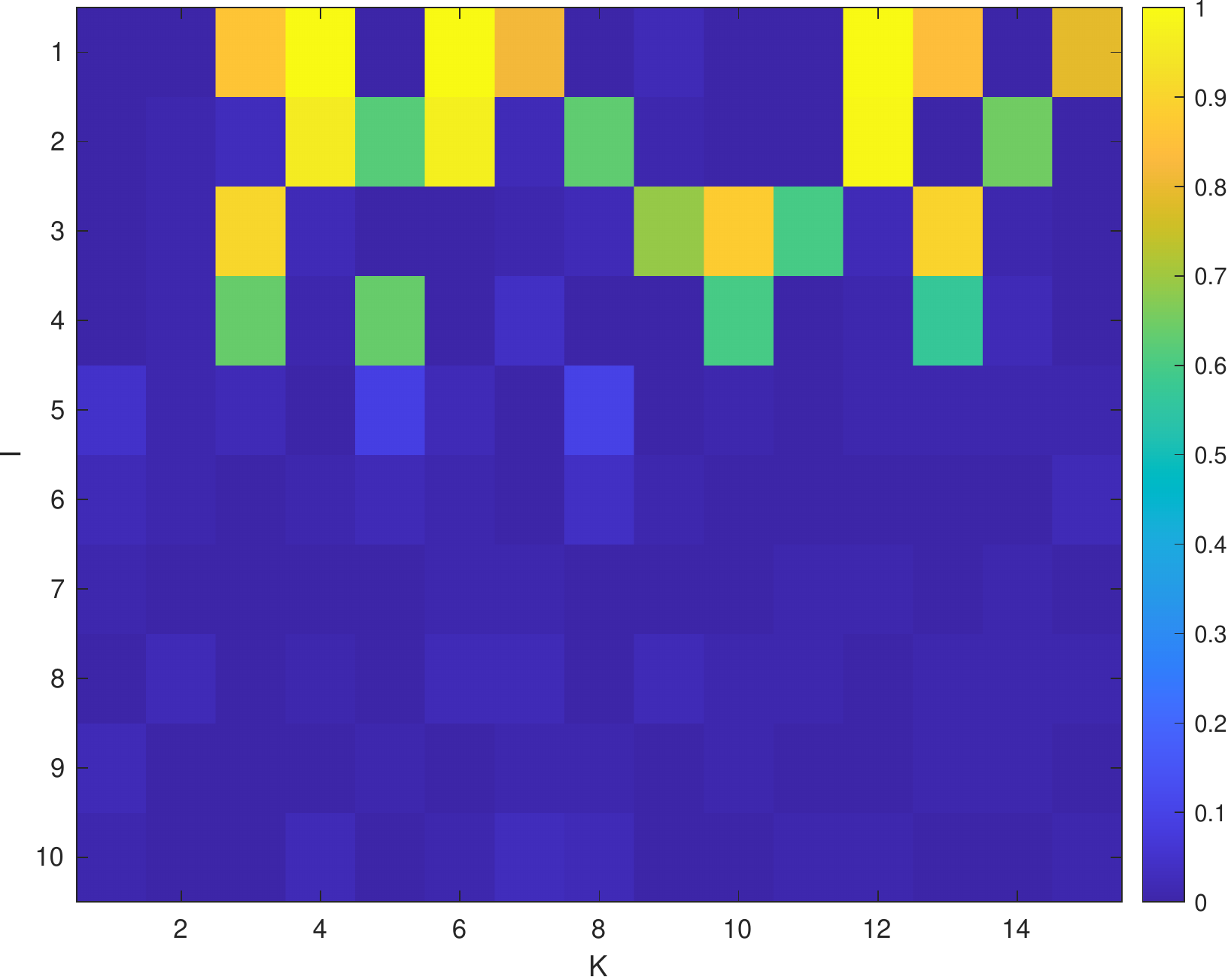}\smallskip
			\caption*{\gls{lfdrmultcost}}
			\label{fig:exp-snr_ex-osp}
		\end{subfigure}
		\bigskip
		\caption{detected correlation structures for $\mathrm{SNR} = 10\unit{dB}$}
		\label{fig:exp-snr_examples}
	\end{subfigure}
	\caption{Experiment~1 - Performance comparison depending on the \gls{snr}. mCCA-HT exhibits an unreasonably high false discovery rate. \gls{ts} produces less erroneously detected correlations, but has low detection power for higher SNR values. Our proposed \gls{lfdrmultcost} fulfills its nominal \gls{fdr} level while $\fdrThrAtom = 0.1$ while yielding high detection power. Setting \(\fdrThrCmp = 1\) leads to slightly more false positives on the component level.}
	\label{fig:exp-snr}
\end{figure}

\subsubsection*{Experiment~2} The $\mathrm{SNR} = 5\unit{dB}$ is now fixed, along $\numSet = 20$ the number of data sets and $\numSrc = 10$ the number of components. The components and the noise are both Gaussian distributed. The proportion of true atom null hypothesis is $\pi_0 = .7$. All components are correlated between at least few sets, so there cannot be any component level false positives. Again, the atom \gls{fdr} for mCCA-HT is higher than the axis limit. \gls{ts} yields a high \gls{fdr} when the number of samples is small. As its corresponding detection pattern for $\numSam = 175$ on the bottom left of Fig.~\ref{fig:exp-A_examples} reveals, $\numActSrc$ is underestimated. In those components with $\srcIdx\leq\numActSrcEst$, falsely identified correlations occur frequently. Since the small $\numSet$ introduces a large estimation error in the coherence matrix and thus, a lot of variation in the test statistics, our proposed method is conservative while the number of samples is small. Nevertheless, it provides highly reliable discoveries, since the \gls{fdr} is controlled even for extremely small sample sizes.

\begin{figure}
	\begin{subfigure}{.54\textwidth}
		\begin{subfigure}{.49\textwidth}
			\includegraphics[scale=.28]{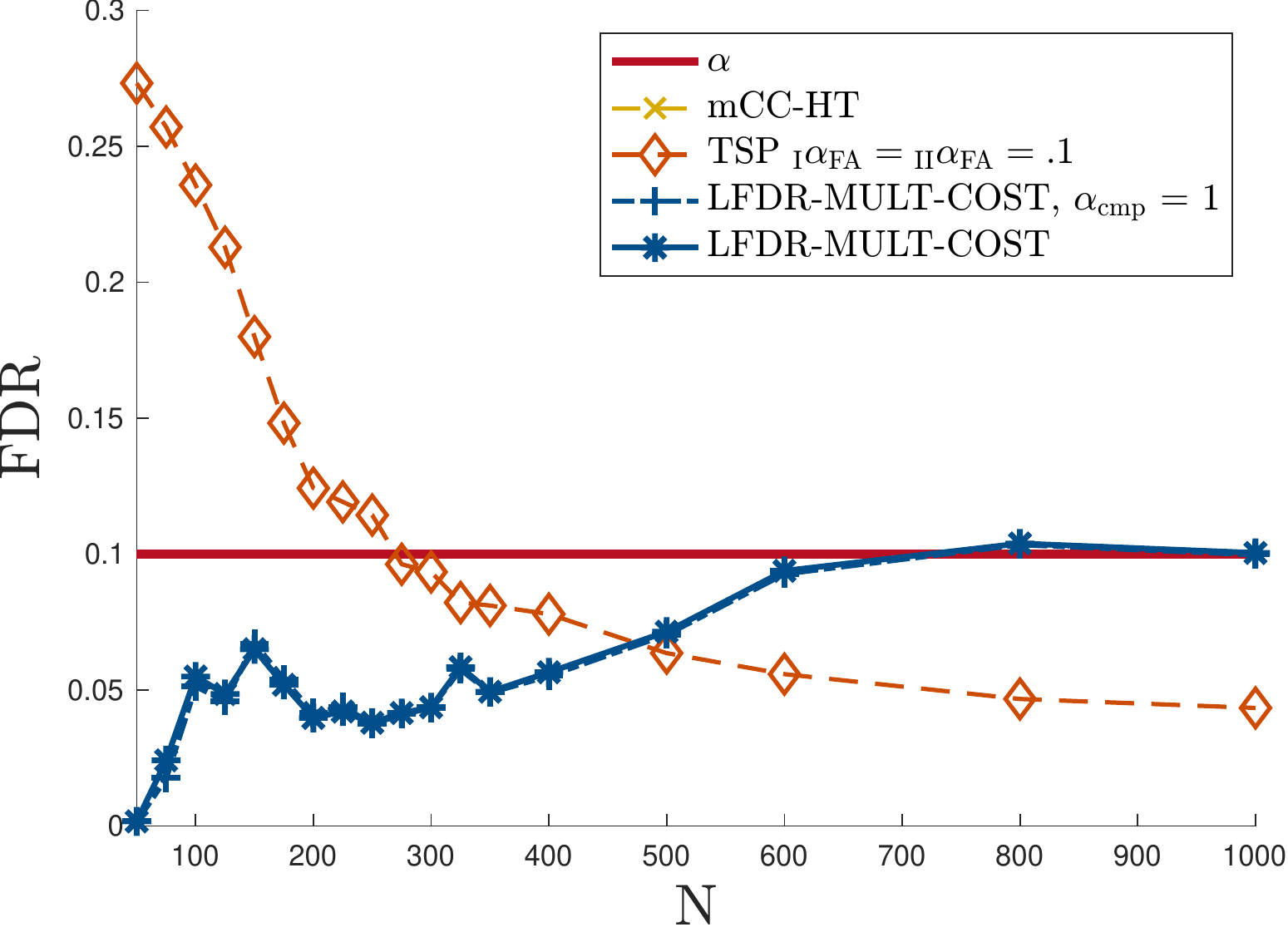}\smallskip
			\caption*{\(\fdr\)}
			\label{fig:exp-A_atom-fdr}
		\end{subfigure}
		\begin{subfigure}{.49\textwidth}
			\includegraphics[scale=.28]{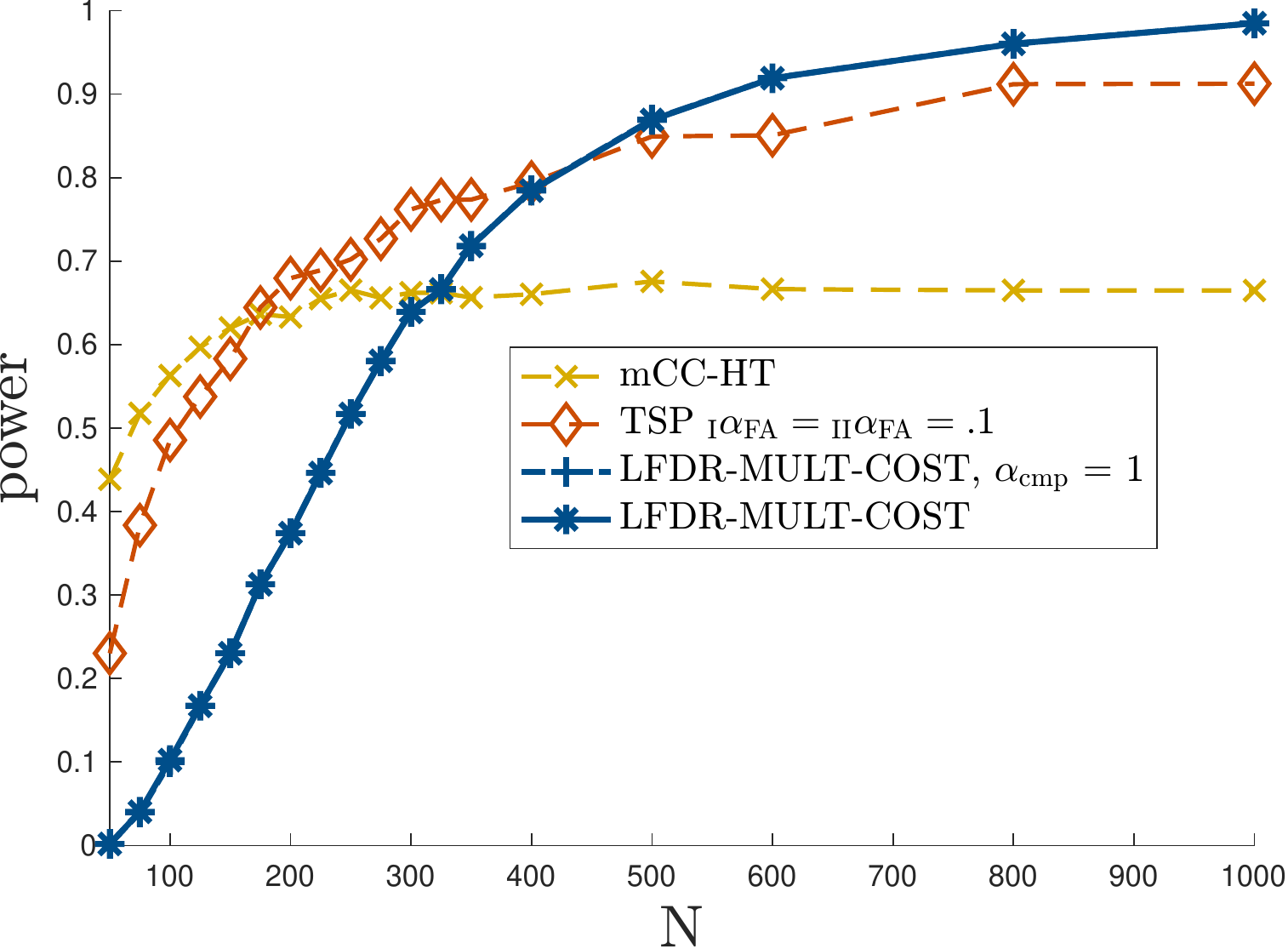}\smallskip
			\caption*{$\mathrm{Power}$}
			\label{fig:exp-A_atom-pow}
		\end{subfigure}\medskip
		\label{dummy}
		\begin{subfigure}{.49\textwidth}
			\includegraphics[scale=.28]{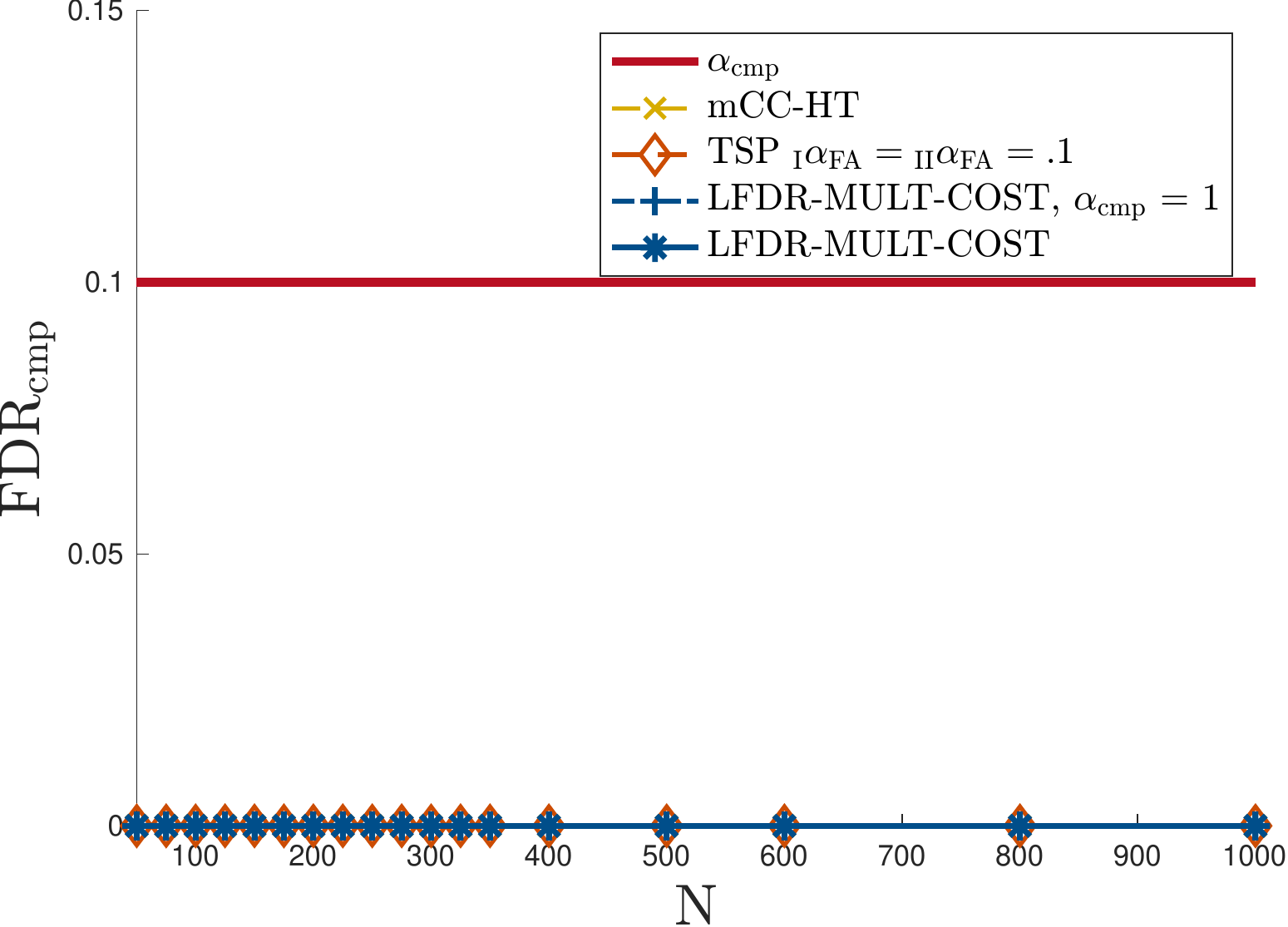}\smallskip
			\caption*{\(\fdrCmp\)}
			\label{fig:exp-A_cmp-fdr}
		\end{subfigure}
		\begin{subfigure}{.49\textwidth}
			\includegraphics[scale=.28]{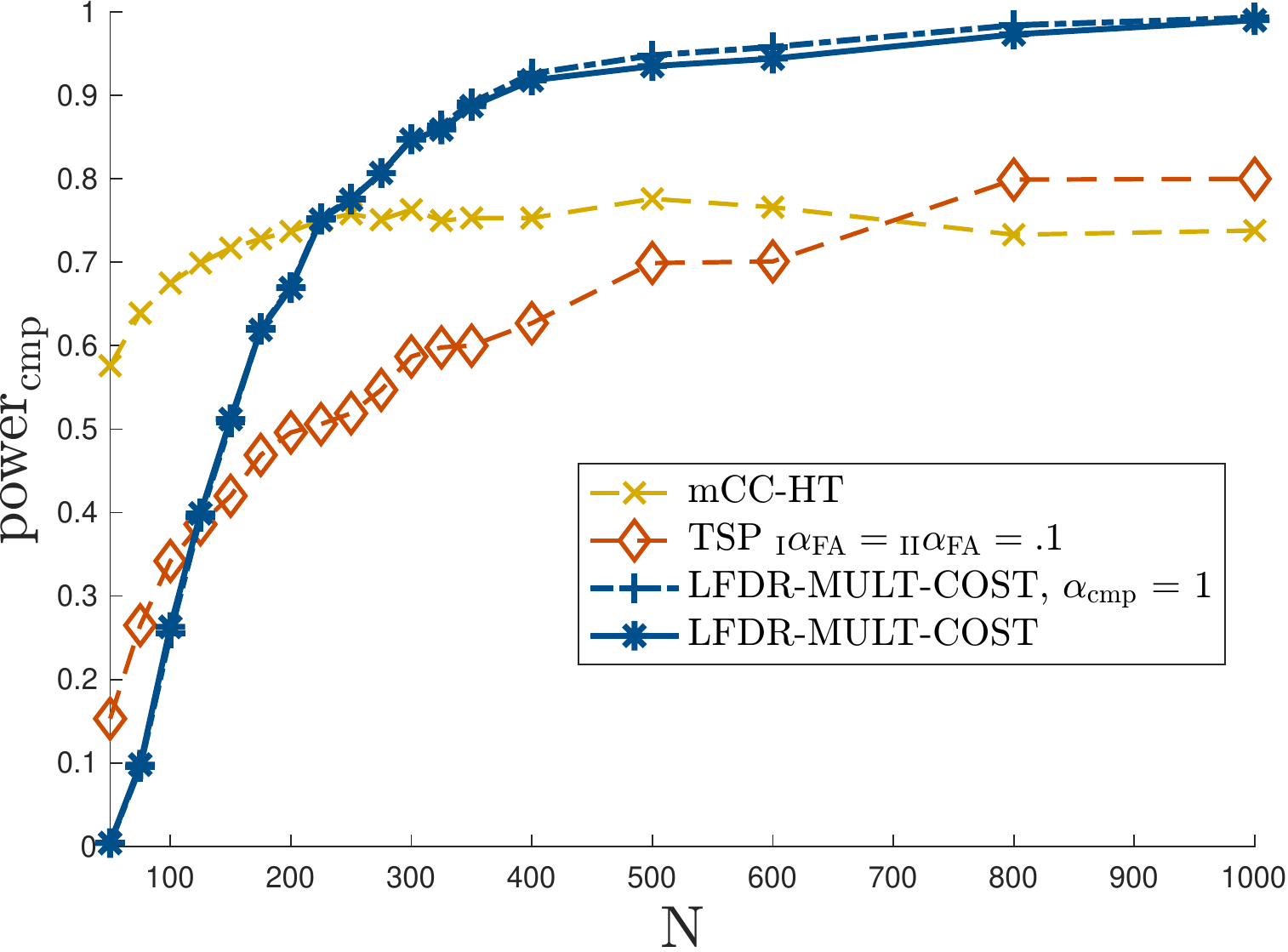}\smallskip
			\caption*{\(\mathrm{Power}_\text{cmp}\)}
			\label{fig:exp-A_cmp-pow}
		\end{subfigure}
		\medskip
		\caption{performance measures}
		\label{fig:exp-A_perf}
	\end{subfigure}
	\begin{subfigure}{.44\textwidth}\centering
		\begin{subfigure}{.47\textwidth}
			\centering
			\includegraphics[scale=.2]{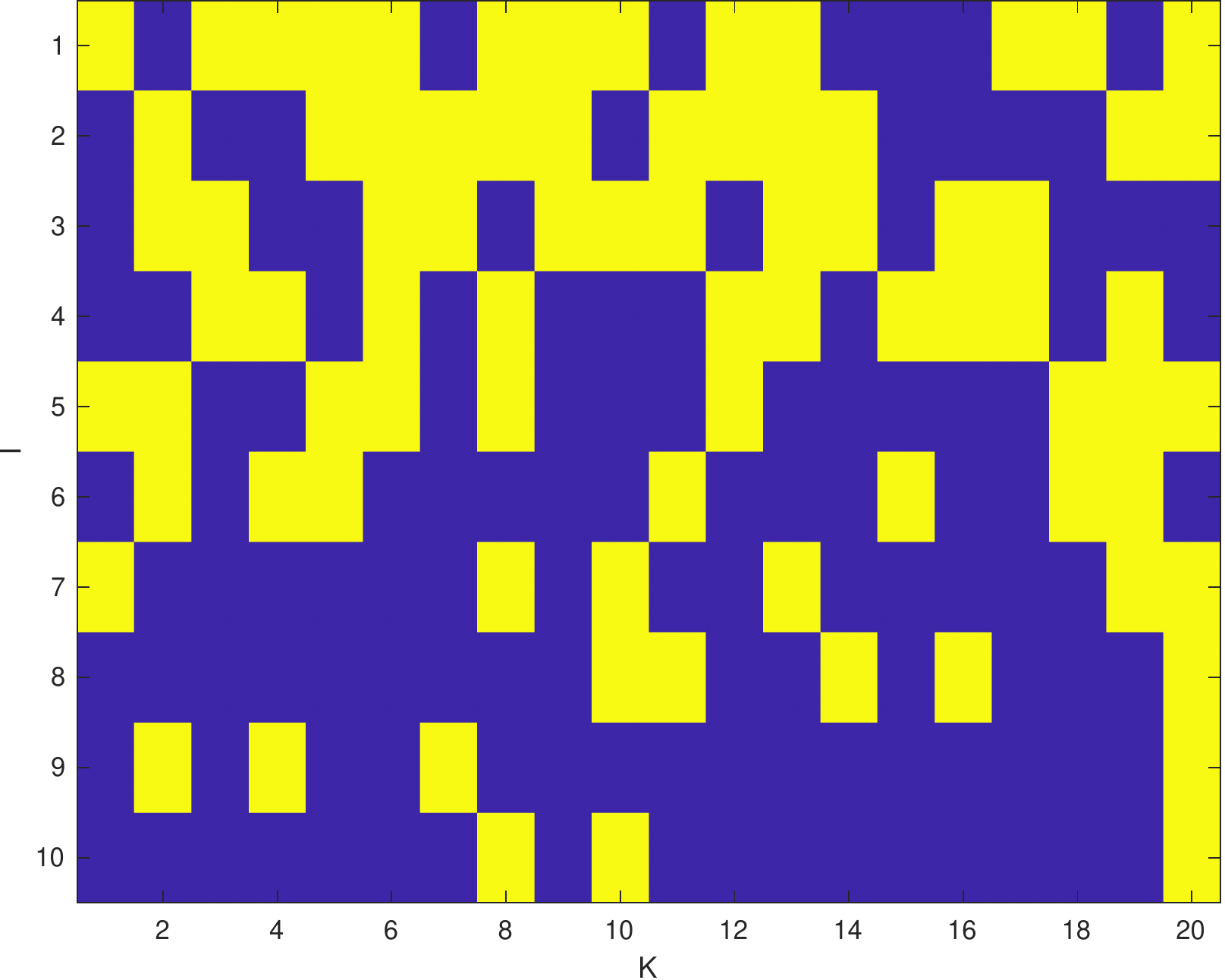}\smallskip
			\caption*{ground truth}
			\label{fig:exp-A_ex-tru}
		\end{subfigure}
		\begin{subfigure}{.48\textwidth}
			\centering
			\includegraphics[scale=.2]{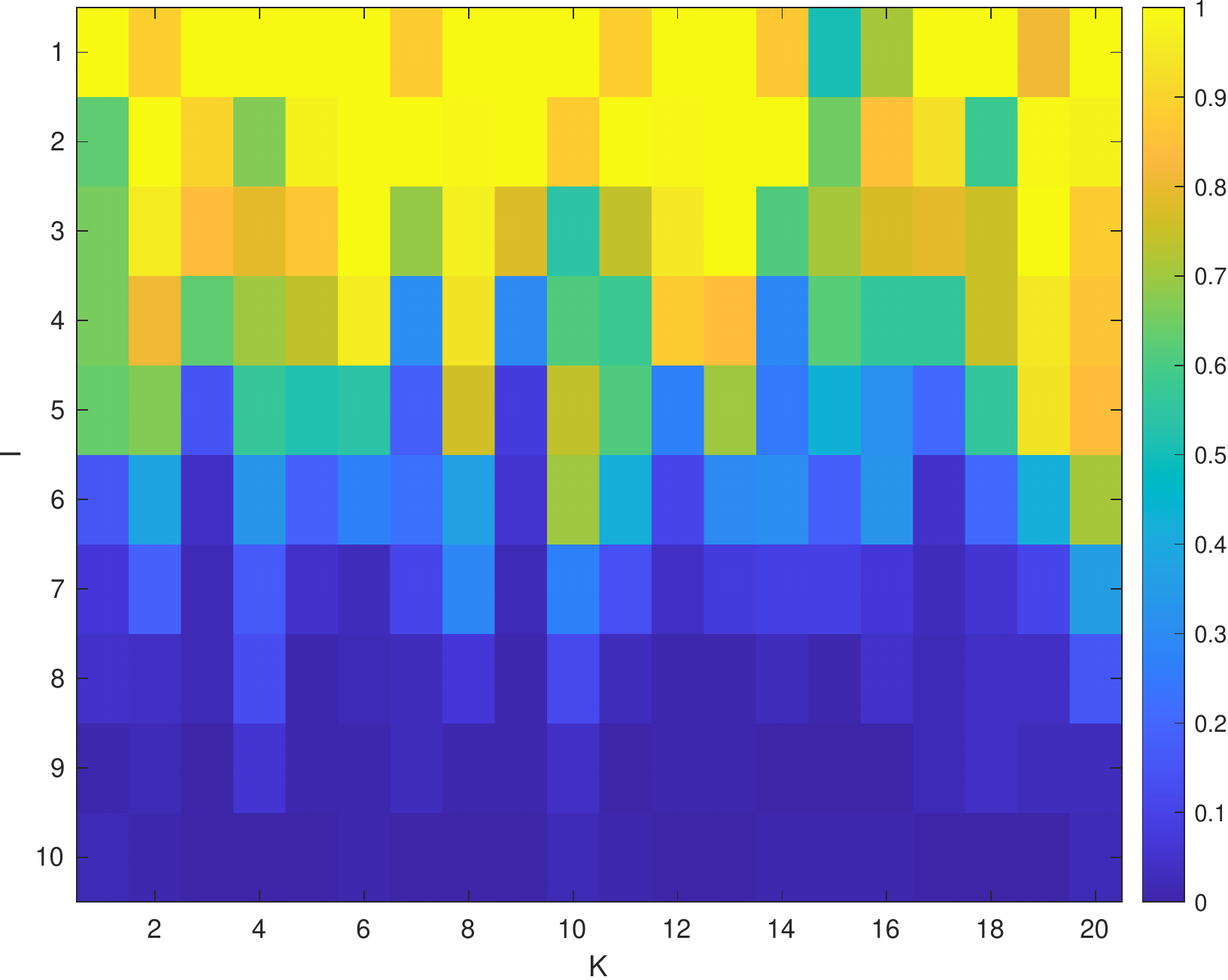}\smallskip
			\caption*{mCCA-HT \cite{Marrinan2018}}
			\label{fig:exp-A_ex-mar}
		\end{subfigure}\bigskip
		\label{dummy2}
		\begin{subfigure}{.48\textwidth}
			\centering
			\includegraphics[scale=.2]{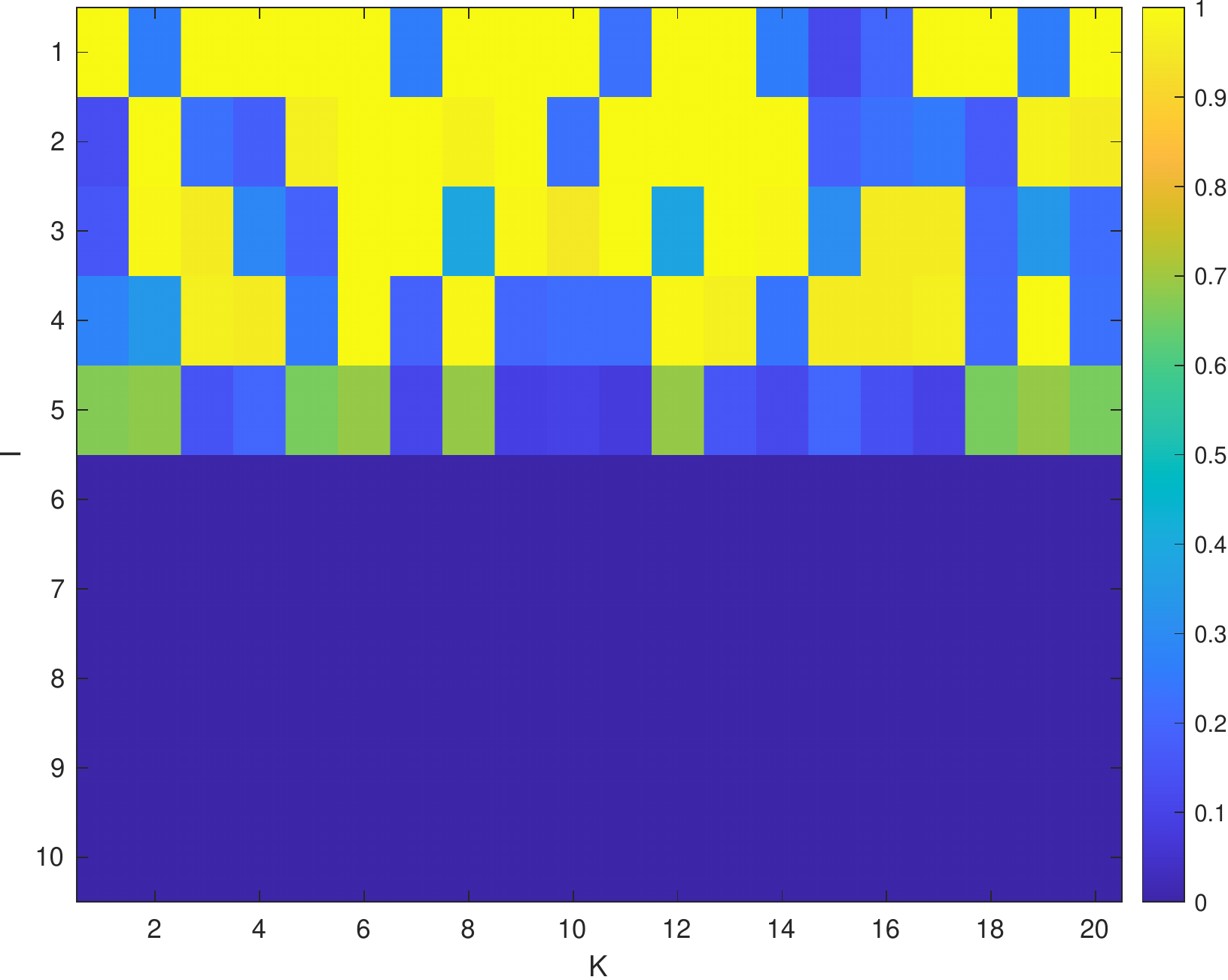}\smallskip
			\caption*{\gls{ts} \cite{Hasija2020}}
			\label{fig:exp-A_ex-tsp}
		\end{subfigure}
		\begin{subfigure}{.48\textwidth}
			\centering
			\includegraphics[scale=.2]{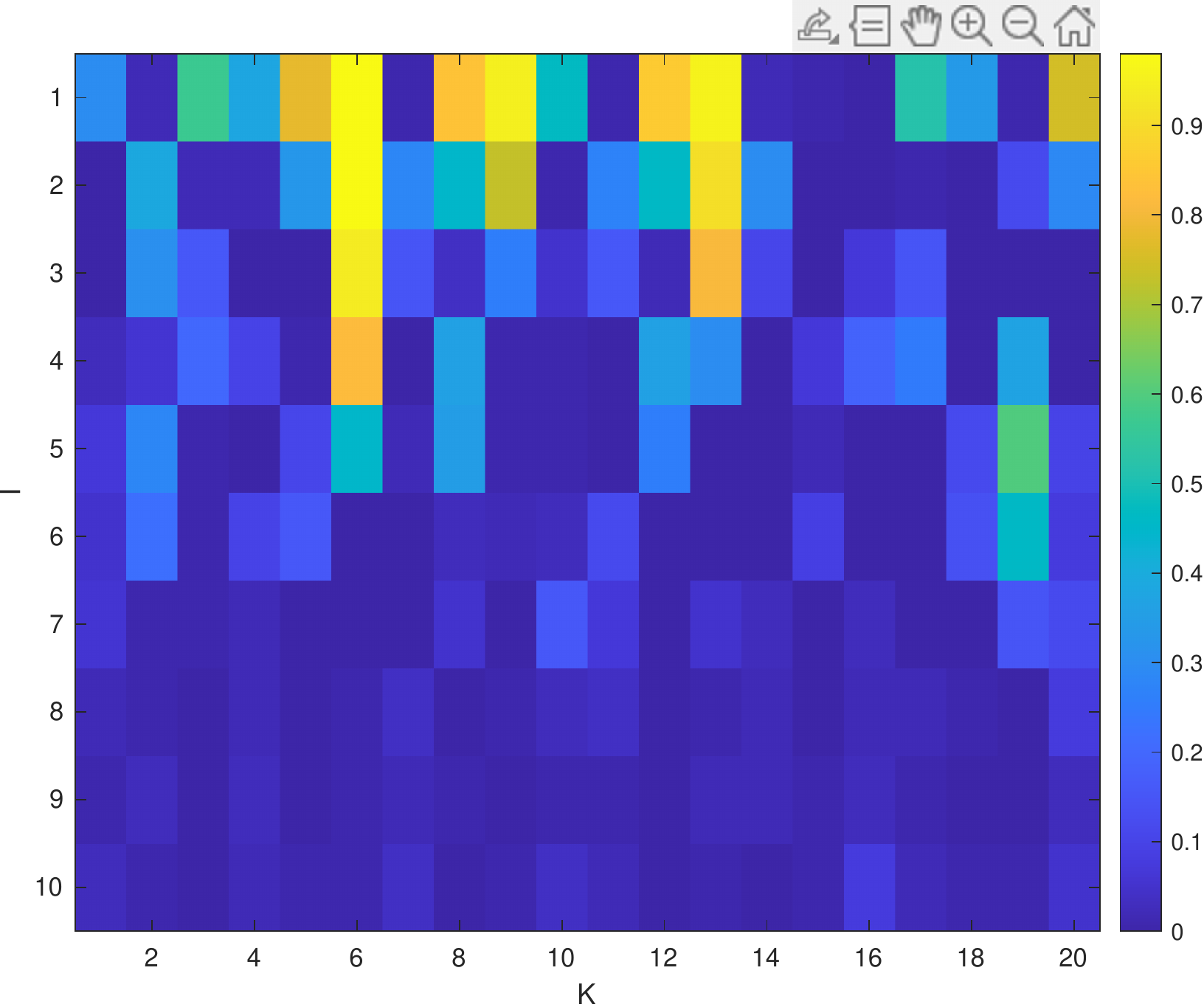}\smallskip
			\caption*{\gls{lfdrmultcost}}
			\label{fig:exp-A_ex-osp}
		\end{subfigure}
		\bigskip
		\caption{detected correlation structures for $\numSam = 175$}
		\label{fig:exp-A_examples}
	\end{subfigure}
	\caption{Experiment 2 - Performance comparison depending on the sample size. mCCA-HT yields unreasonably many false positives. \gls{ts} yields many falsely detected correlations when the sample size is very small. Our proposed \gls{lfdrmultcost} produces a much lower detection power, but performs consistently for all sample sizes. \gls{fdr} control is maintained even at very small sample sizes. All components are correlated between at least few sets. Hence, the value of  \(\fdrThrCmp\) has no impact.}
	\label{fig:exp-A}
\end{figure}

\subsubsection*{Experiment~3} Again, $\mathrm{SNR} = 5\unit{dB}$ and $\numSrc = 10$ components. Now, $\numSet = 500$ is fixed and the number of data sets $\numSet$ varies. The components are Laplacian distributed, but the additive noise is Gaussian. The results with $\pi_0 = 0.8$ are shown in Fig.~\ref{fig:exp-b08}, additional result for  $\pi_0 = 0.9$ are provided in the Appendix \ref{app:res}.

mCCA-HT again results in a too high proportion of false positives. The discoveries of \gls{ts} contain more and more false positives, as the number of data sets $\numSet$ grows. Inspection of the bottom left of Fig.~\ref{fig:exp-b08_examples} reveals that \gls{ts} underestimates the $\numActSrc$, preventing discoveries anywhere in components below, but commits more false positives within the first $\numActSrcEst$ components than our proposed approach \gls{lfdrmultcost}. This is the effect discussed in Section~\ref{sec:single-tests-multiple-tests} of conducting many binary tests without correcting for false multiple testing, which leads to an ever increasing number of false positives when the number of tested null hypotheses increases. Our proposed method performs well, exhibiting a nearly constant atom level \gls{fdr} over $\numSet$ and providing high detection power. The power only starts to slowly decrease when $\numSet$ becomes fairly large. This is due to the relative sample size: The dimensions of the coherence matrix $\numSrc\cdot\numSet \times \numSrc\cdot\numSet$ increase rapidly in $\numSet$, while the number of available samples $\numSam = 500$ is fixed. The effects of small sample size were discussed in Experiment 2. \gls{lfdrmultcost} with $\fdrThrCmp = 1$ has a slightly increased $\fdrCmp>0.1$ for $\numSet = 9$, but yields detection power nearly identical to \gls{lfdrmultcost} with $\fdrThrCmp = 0.1$. Hence, the additional component level false positive control comes at little cost.

\begin{figure}
	\begin{subfigure}{.54\textwidth}
		\begin{subfigure}{.49\textwidth}
			\includegraphics[scale=.28]{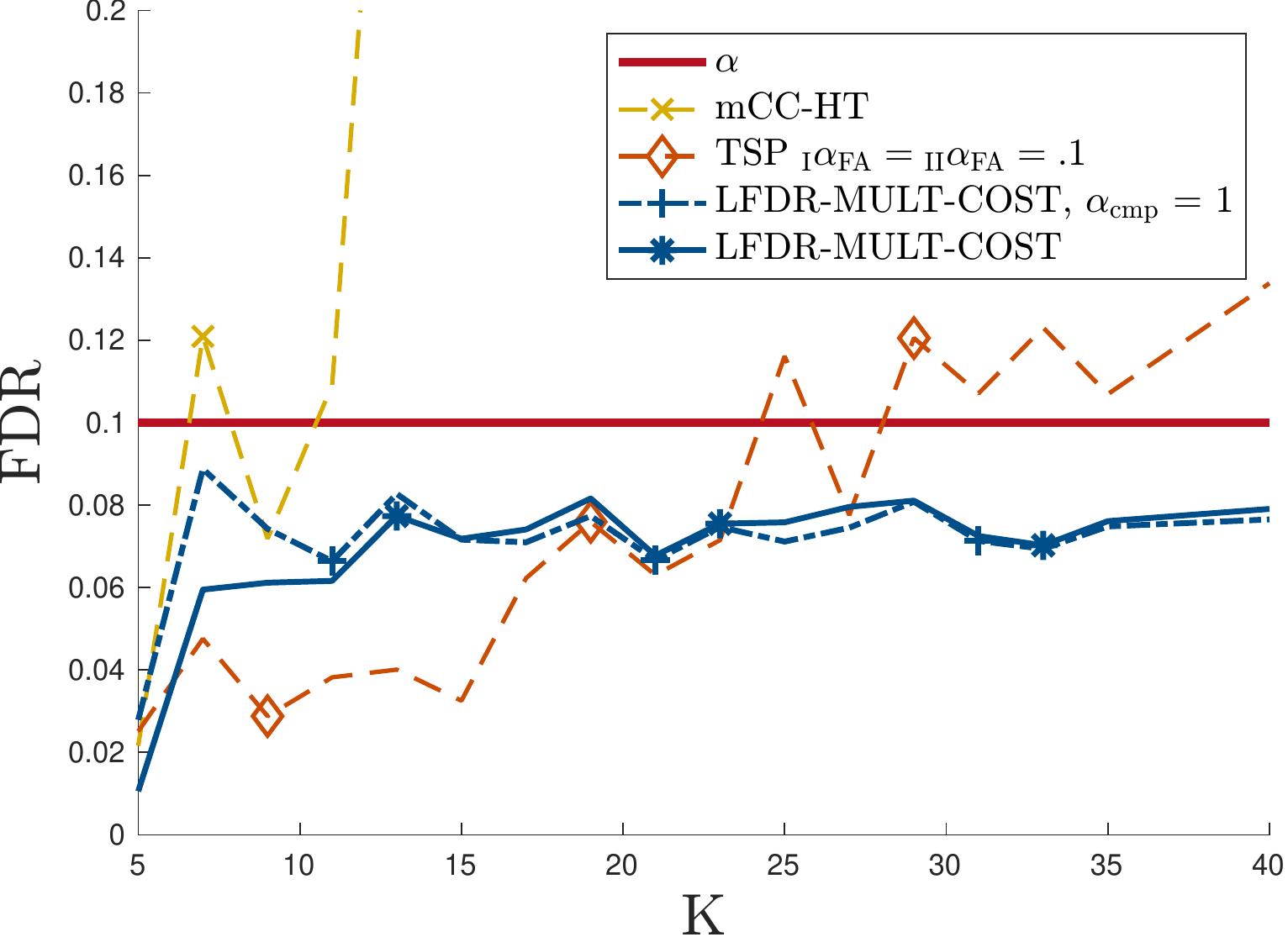}\smallskip
			\caption*{\(\fdr\)}
			\label{fig:exp-b08_atom-fdr}
		\end{subfigure}
		\begin{subfigure}{.49\textwidth}
			\includegraphics[scale=.28]{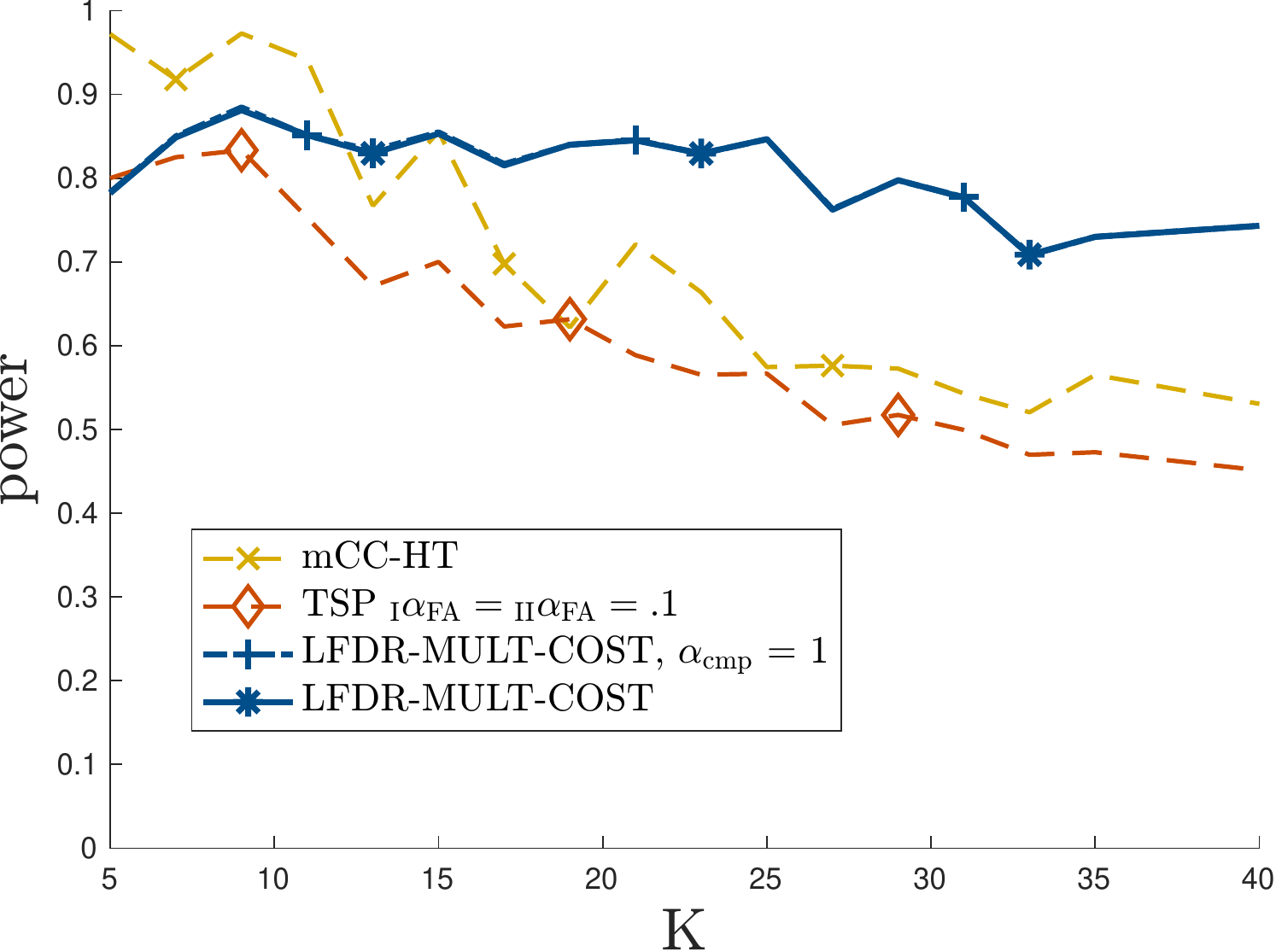}\smallskip
			\caption*{$\mathrm{Power}$}
			\label{fig:exp-b08_atom-pow}
		\end{subfigure}\medskip
		\label{dummy}
		\begin{subfigure}{.49\textwidth}
			\includegraphics[scale=.28]{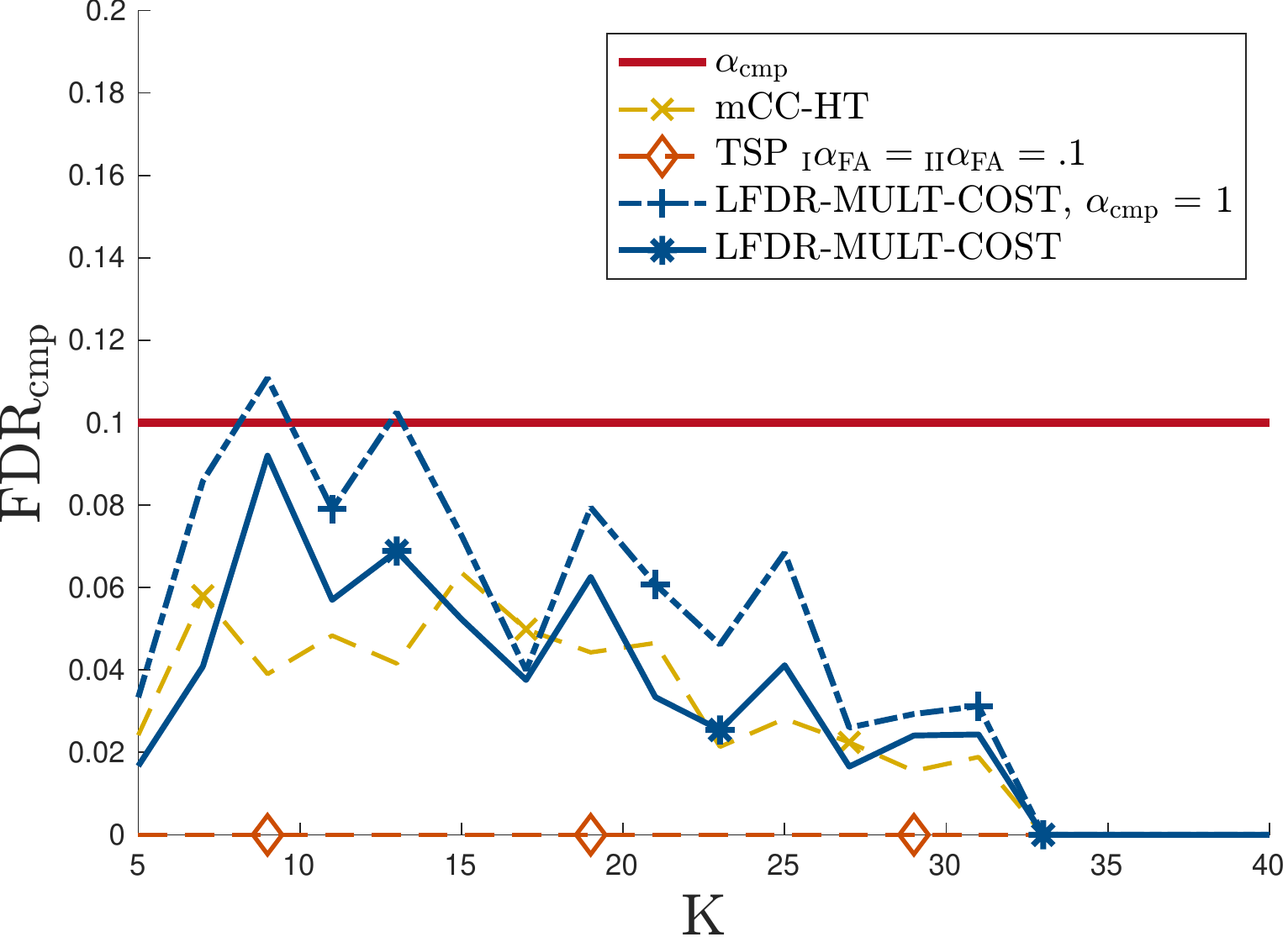}\smallskip
			\caption*{\(\fdrCmp\)}
			\label{fig:exp-b08_cmp-fdr}
		\end{subfigure}
		\begin{subfigure}{.49\textwidth}
			\includegraphics[scale=.28]{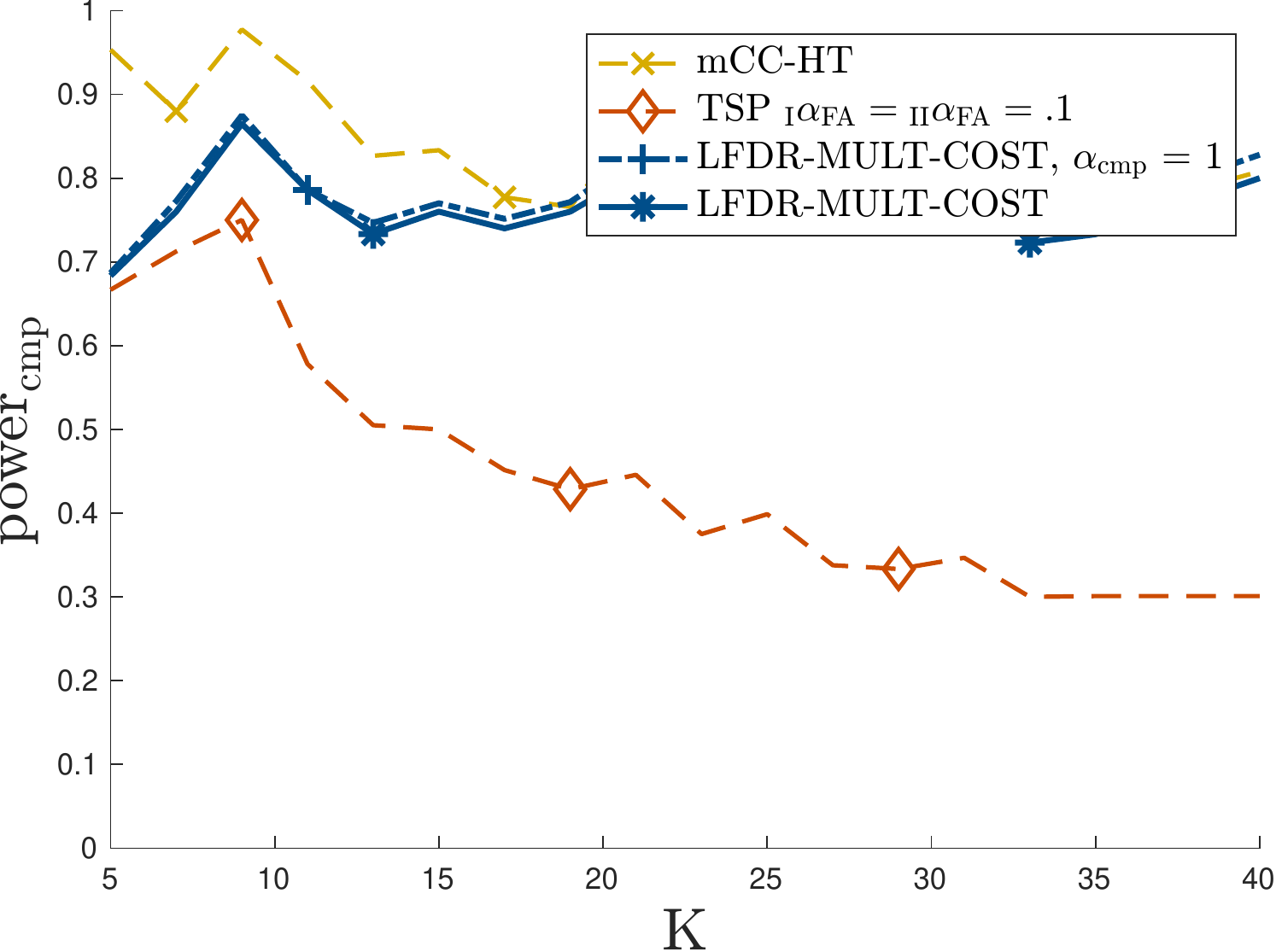}\smallskip
			\caption*{\(\mathrm{Power}_\text{cmp}\)}
			\label{fig:exp-b08_cmp-pow}
		\end{subfigure}
		\medskip
		\caption{performance measures}
		\label{fig:exp-b08_perf}
	\end{subfigure}
	\begin{subfigure}{.44\textwidth}\centering
		\begin{subfigure}{.47\textwidth}
			\centering
			\includegraphics[scale=.2]{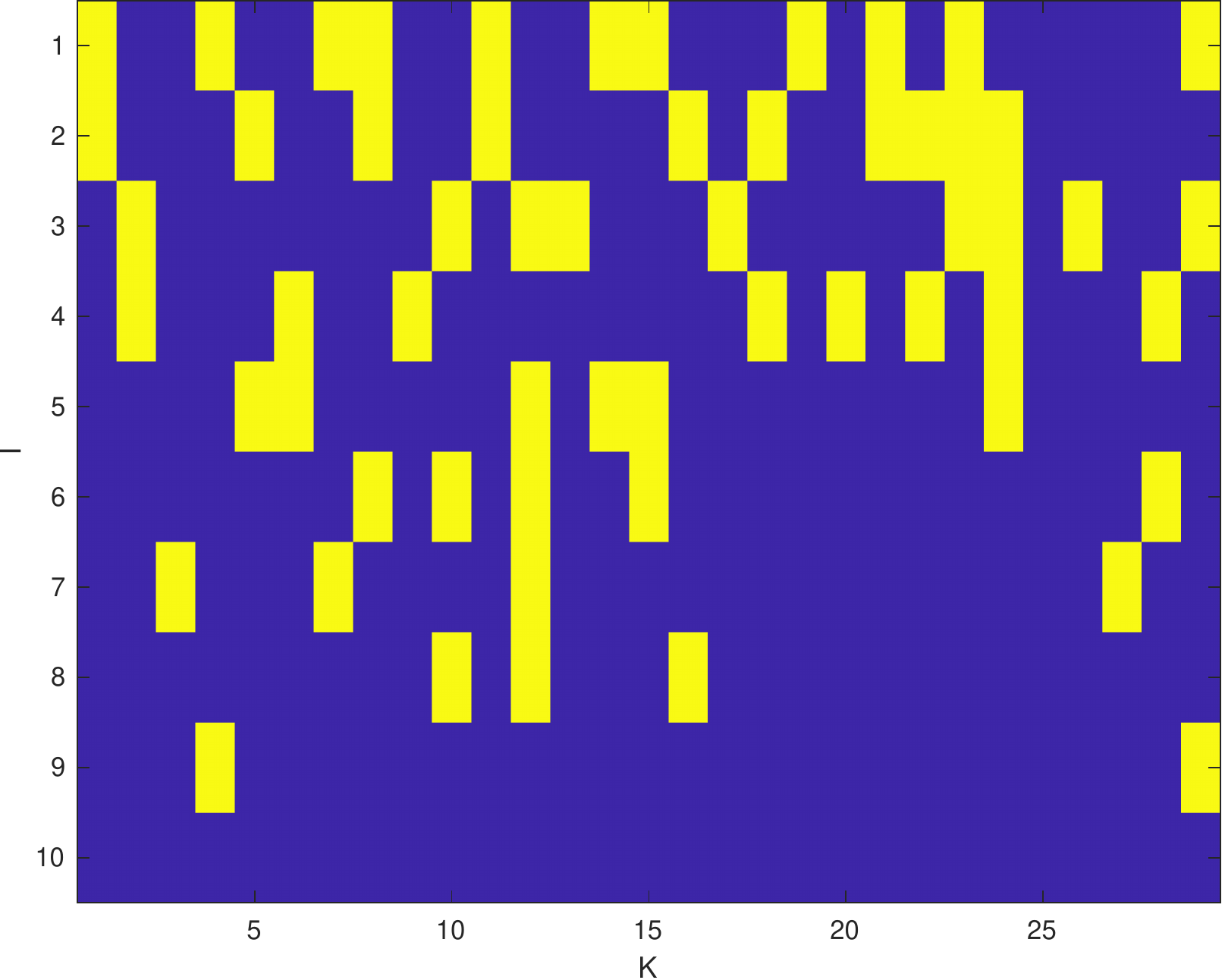}\smallskip
			\caption*{ground truth}
			\label{fig:exp-b08_ex-tru}
		\end{subfigure}
		\begin{subfigure}{.48\textwidth}
			\centering
			\includegraphics[scale=.2]{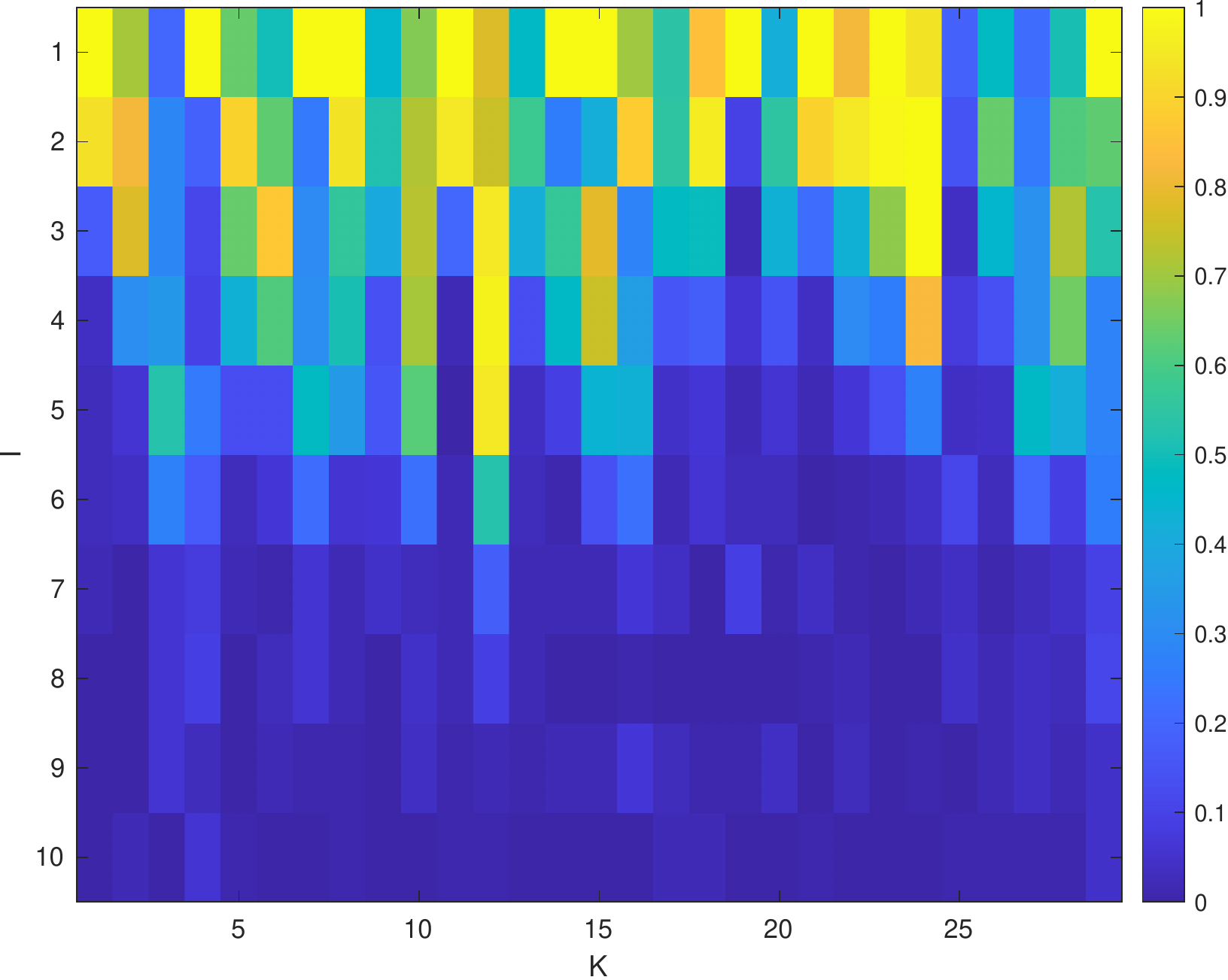}\smallskip
			\caption*{mCCA-HT \cite{Marrinan2018}}
			\label{fig:exp-b08_ex-mar}
		\end{subfigure}\bigskip
		\label{dummy2}
		\begin{subfigure}{.48\textwidth}
			\centering
			\includegraphics[scale=.2]{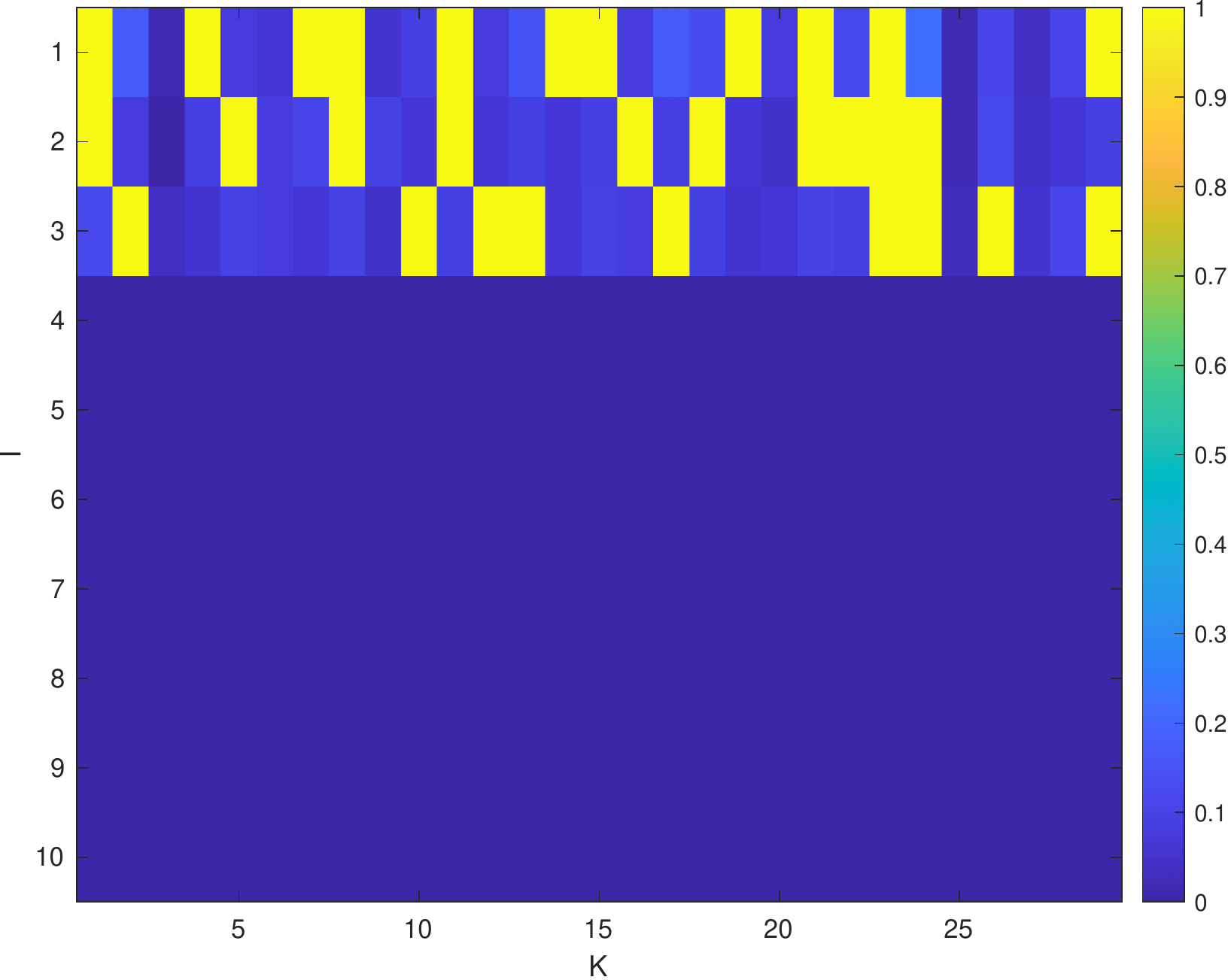}\smallskip
			\caption*{\gls{ts} \cite{Hasija2020}}
			\label{fig:exp-b08_ex-tsp}
		\end{subfigure}
		\begin{subfigure}{.48\textwidth}
			\centering
			\includegraphics[scale=.2]{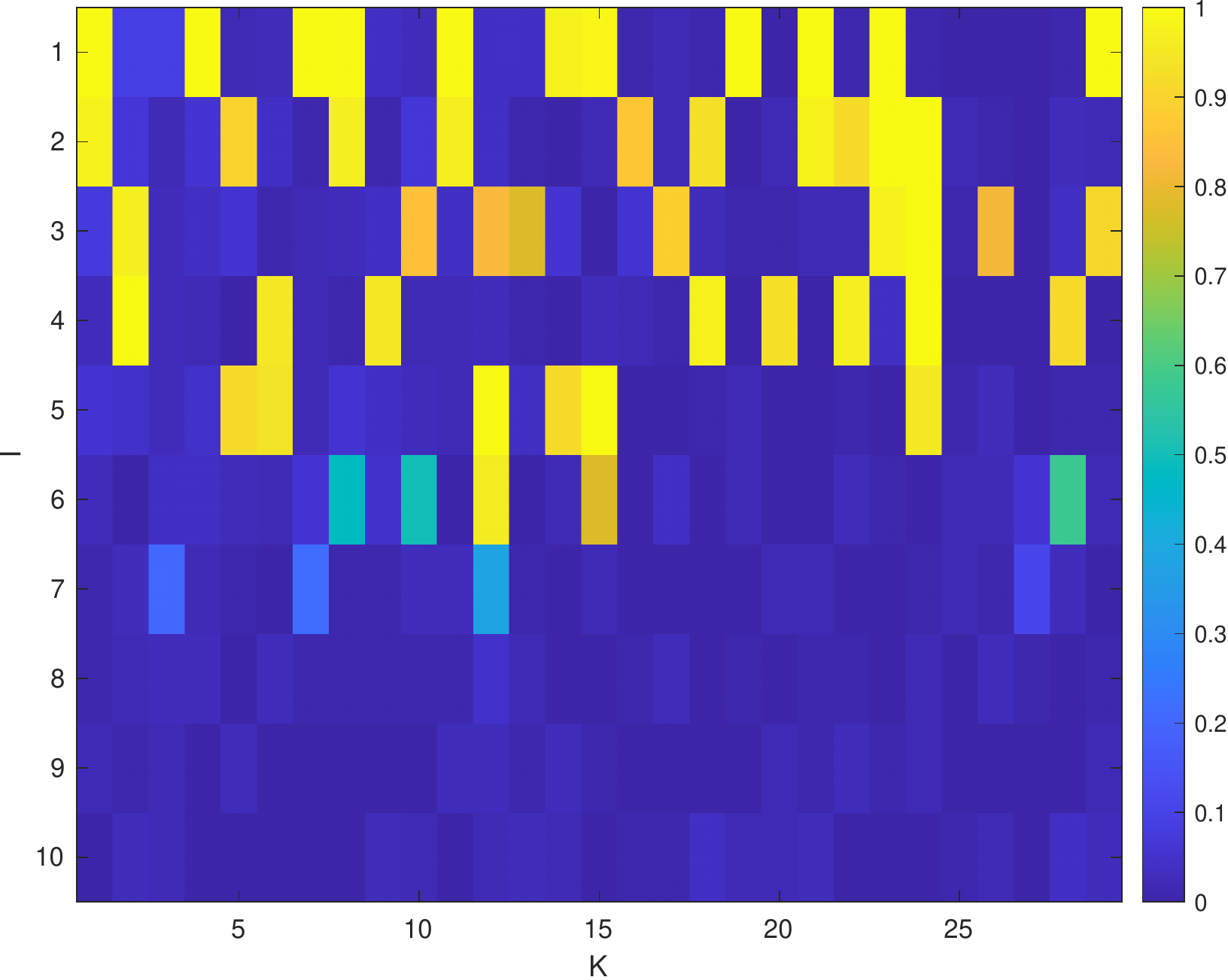}\smallskip
			\caption*{\gls{lfdrmultcost}}
			\label{fig:exp-b08_ex-osp}
		\end{subfigure}
		\bigskip
		\caption{detected correlation structures for $\numSet = 29$}
		\label{fig:exp-b08_examples}
	\end{subfigure}
	\caption{Experiment~3 - Performance comparison depending on the number of data sets for $\pi_0 = 0.8$. mCCA-HT yields unreasonably many false positives. The discoveries of \gls{ts} contain more and more false positives as the number of data sets grow. Hence, \gls{ts} is not well suited for applications where the number of data sets is large. The empirical \glspl{fdr} for our proposed \gls{lfdrmultcost} are nearly constant across all $\numSet$. The detection power on the atom level is constant for a while, before it starts to decline marginally. This is due to the decreasing relative sample size. Setting \(\fdrThrCmp = 1\), i.e., removing the constraint on $\fdrCmp$ has little impact: As the number of sets $\numSet$ grows, more and more components are correlated between at least a few sets. Hence, more and more components become true positives.}
	\label{fig:exp-b08}
\end{figure}

\subsubsection*{Experiment~4} We now study the impact of the proportion of true atom level null hypotheses $\pi_0$ on the performance. We simulate $\numSet = 25$ data sets with $N = 600$ samples of Gaussian components and noise, $\mathrm{SNR} = 5\unit{dB}$ and $\numSrc = 10$ components. The performance measures are shown in Fig.~\ref{fig:exp-C}. mCCA-HT performs insufficiently. With growing $\pi_0$, the number of true correlations decreases and maintaining a high detection power becomes increasingly difficult. The atom level detection power of our proposed \gls{lfdrmultcost} declines much slower than that of \gls{ts}. Remarkably, even for $.9\leq\pi_0\leq.975$, the detection power is only reduced on average by about $15\%$ in comparison to the much less challenging $\pi_0 = 0.7$. Beyond $\pi_0=.975$, also our proposed method detects nothing anymore. With \gls{lfdrmultcost} and $\fdrThrCmp = 0.1$, the component level \gls{fdr} is controlled everywhere. Again, the detection power on the atom level is nearly identical to the results with $\fdrThrCmp = 1$.

\begin{figure}
    \begin{subfigure}{.49\textwidth}
        \includegraphics[scale=.5]{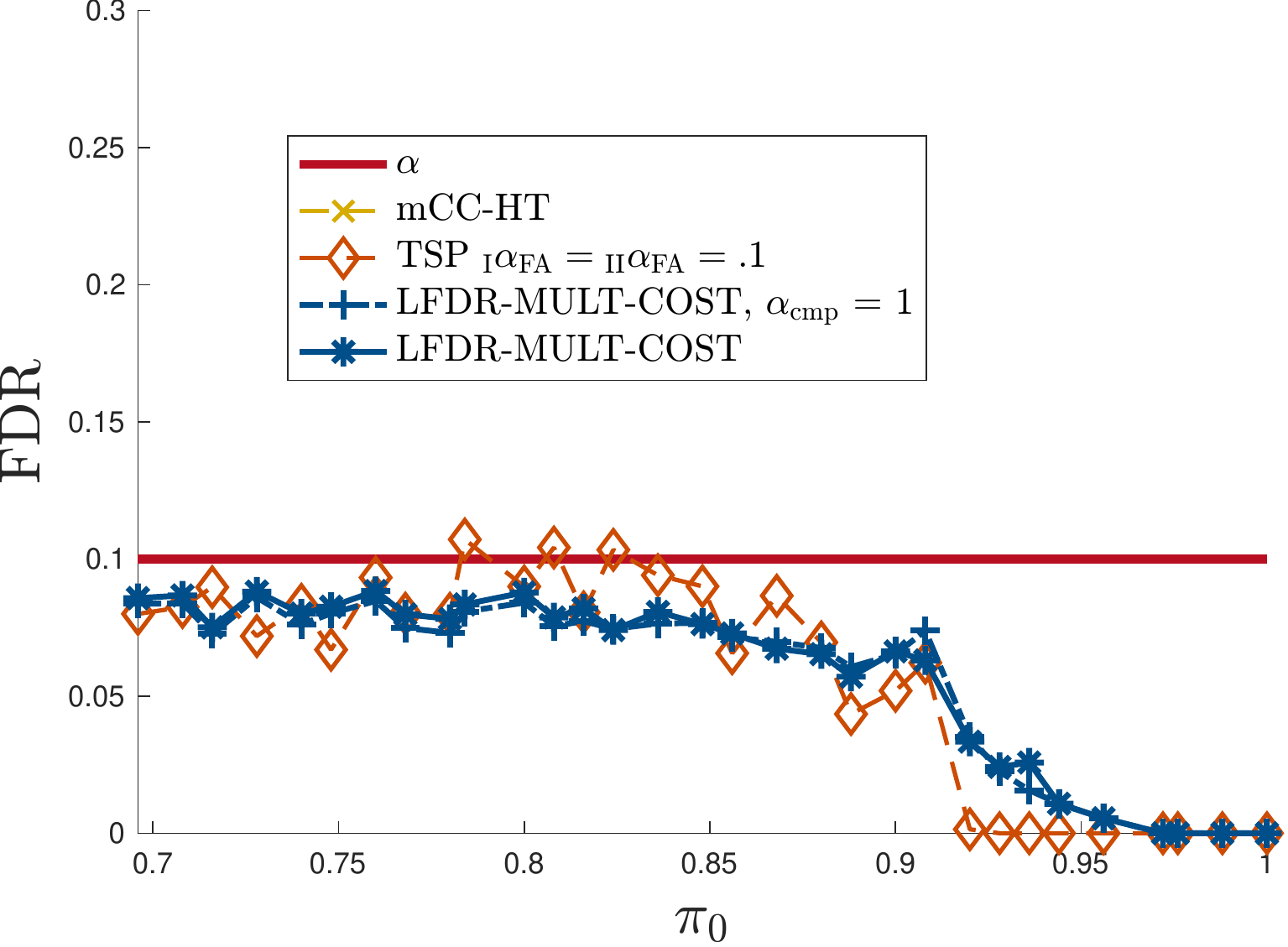}\smallskip
        \caption*{\(\fdr\)}
        \label{fig:exp-c_atom-fdr}
    \end{subfigure}
    \begin{subfigure}{.49\textwidth}
        \includegraphics[scale=.5]{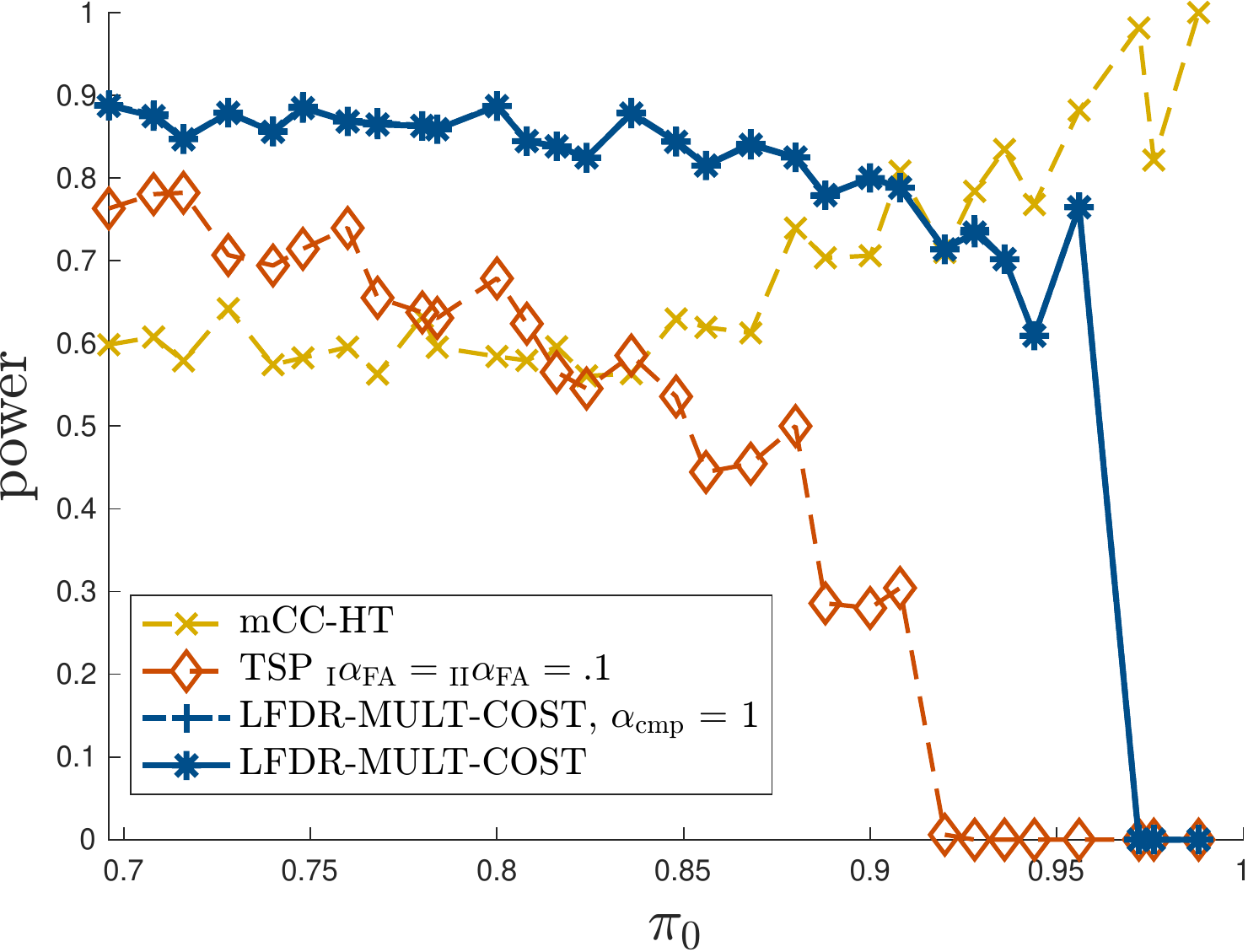}\smallskip
        \caption*{$\mathrm{Power}$}
        \label{fig:exp-c_atom-pow}
    \end{subfigure}\medskip
    \label{dummy}
    \begin{subfigure}{.49\textwidth}
        \includegraphics[scale=.5]{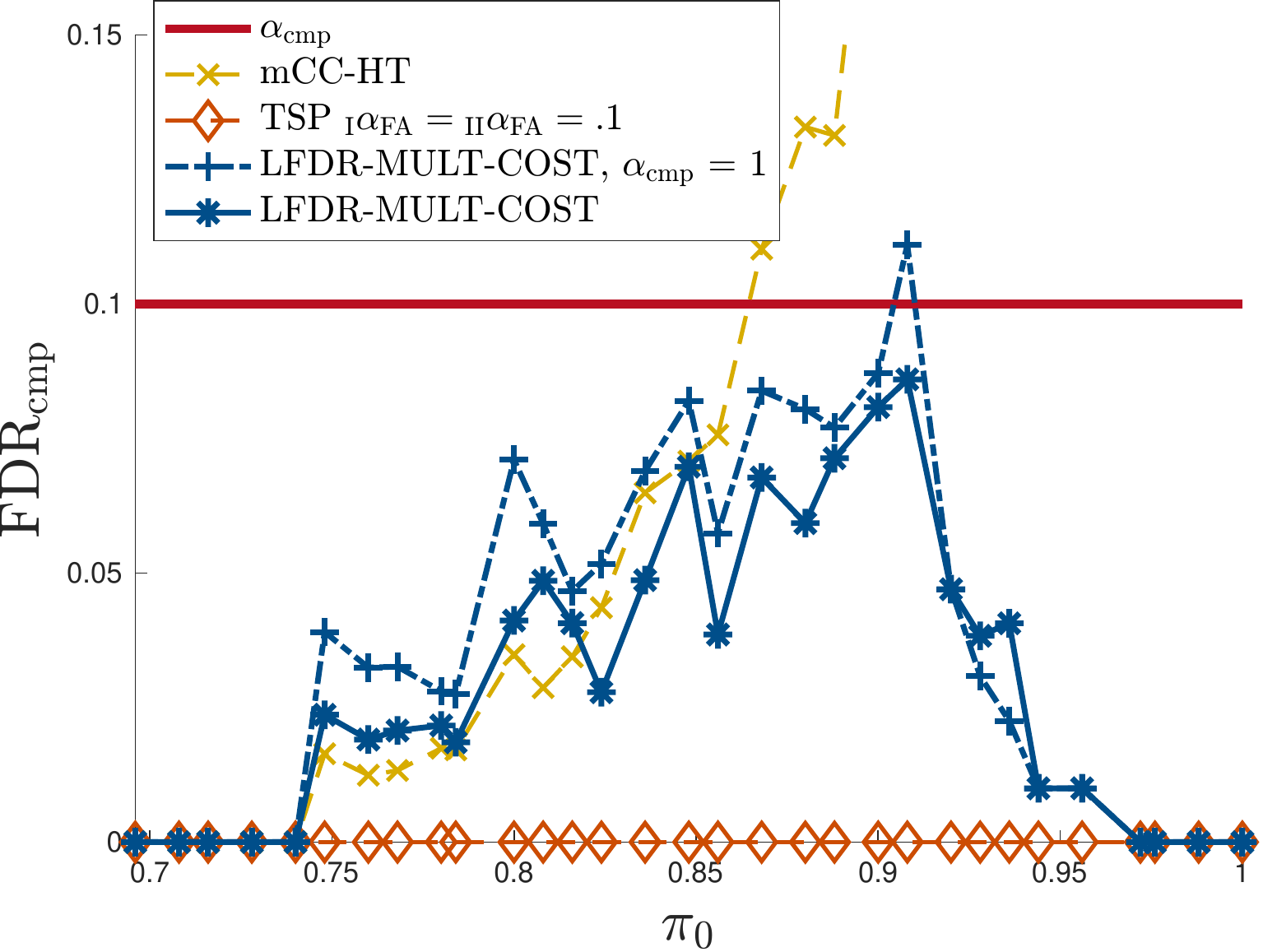}\smallskip
        \caption*{\(\fdrCmp\)}
        \label{fig:exp-c_cmp-fdr}
    \end{subfigure}
    \begin{subfigure}{.49\textwidth}
        \includegraphics[scale=.5]{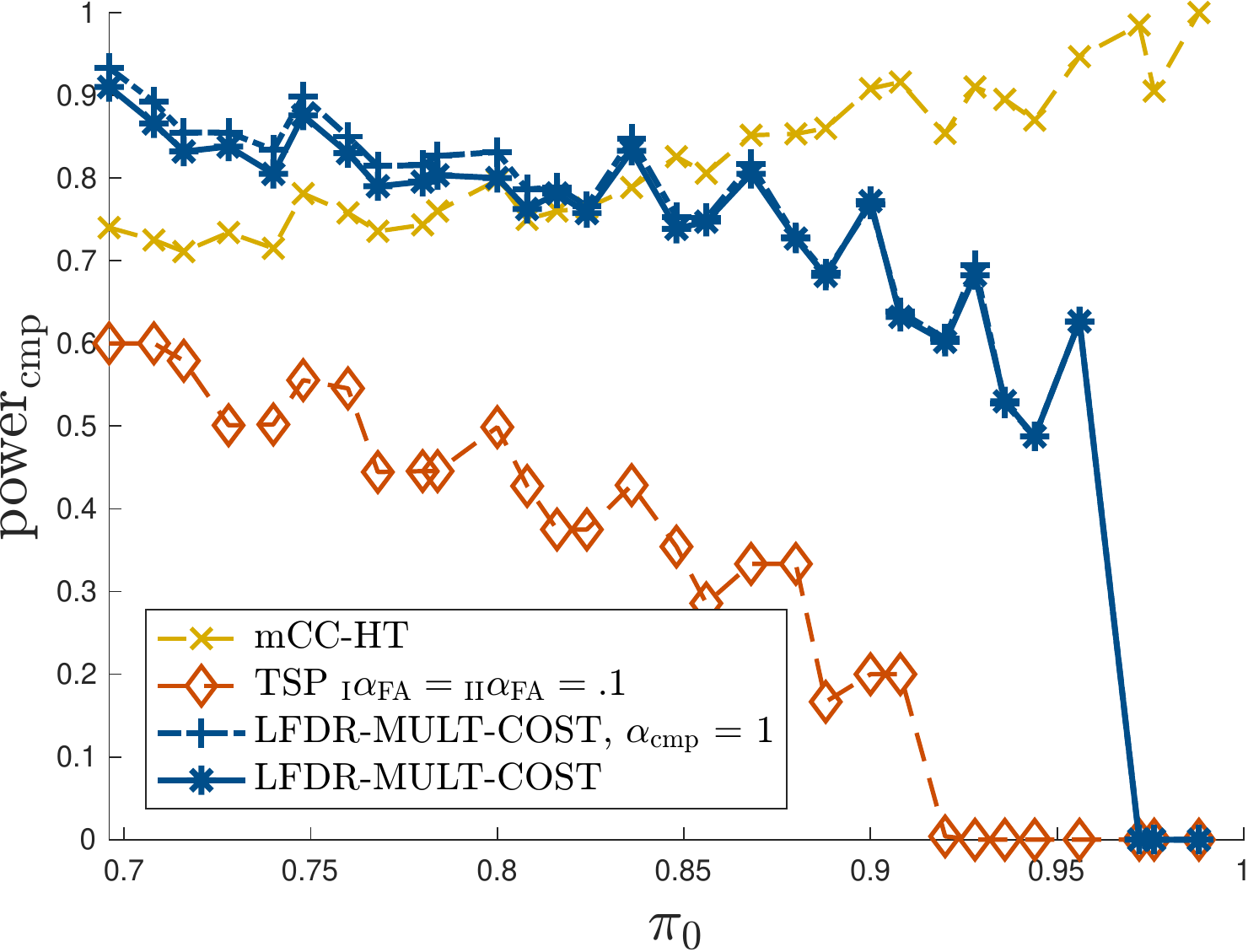}\smallskip
        \caption*{\(\mathrm{Power}_\text{cmp}\)}
        \label{fig:exp-c_cmp-pow}
    \end{subfigure}
	\caption{Experiment 4 - Performance comparison depending on $\pi_0$. As the fraction of true positives declines, maintaining high detection power becomes increasingly challenging. mCCA-HT should not be used due to a very high atom level \gls{fdr}. The detection power of our proposed \gls{lfdrmultcost} declines much slower than that of \gls{ts}. Hence, the proposed method is much more suitable for very challenging applications in which only a few correlations must be identified in large data sets. Again, controlling \(\fdrCmp\) costs close to nothing in terms of detection power. Nevertheless, it keeps \(\fdrCmp\) below the nominal level at all times.}
	\label{fig:exp-C}
\end{figure}


	\FloatBarrier

\section{Conclusion}
\label{sec:con}
The problem of identifying the complete correlation structure in the components of mulitple high-dimensional data sets was considered. A multiple hypothesis testing formulation of this problem was provided. The proposed solution is based on local false discovery rates and fulfills statistical performance guarantees with respect to false positives. In particular, the false discovery rate is controlled on two levels. The test statistic are based in the eigenvectors of the sample coherence matrix, whose probability models are learned from the data using the bootstrap. Our empirical results underline that our method works controls false positives even in the most challenging scenarios when sample size is very small, the number of data sets is very large, the proportion of true alternatives is very small. In addition, it is agnostic to the underlying data and noise probability models, which makes it insensitive to distributional uncertainties and heavy-tailed noise. This makes the proposed method applicable in a large number of practical applications, such as communications engineering, climate science, image processing and biomedicine.

	\clearpage
	\printbibliography[title = {References}]
	\clearpage
	\appendix
\label{app:res}
\FloatBarrier

In this appendix, we provide additional simulation results.

\subsubsection*{Experiment~3-b} In Fig.~\ref{fig:exp-b09}, additional for a variation of Experiment~3 from Section \ref{sec:sim-res} are shown. Here, $\pi_0 = 0.9$. Since the proportion of true atom null hypotheses is only $10\%$, this is a challenging scenario. Indeed, the detection power of \gls{ts} quickly becomes problematic as $\numSet$ increases. Our proposed \gls{lfdrmultcost} proves to be more suitable in high-dimensional, highly challenging scenarios. Its detection power is nearly constant across the entire range of simulated $\numSet$.

\begin{figure}
	\begin{subfigure}{.54\textwidth}
		\begin{subfigure}{.49\textwidth}
			\includegraphics[scale=.28]{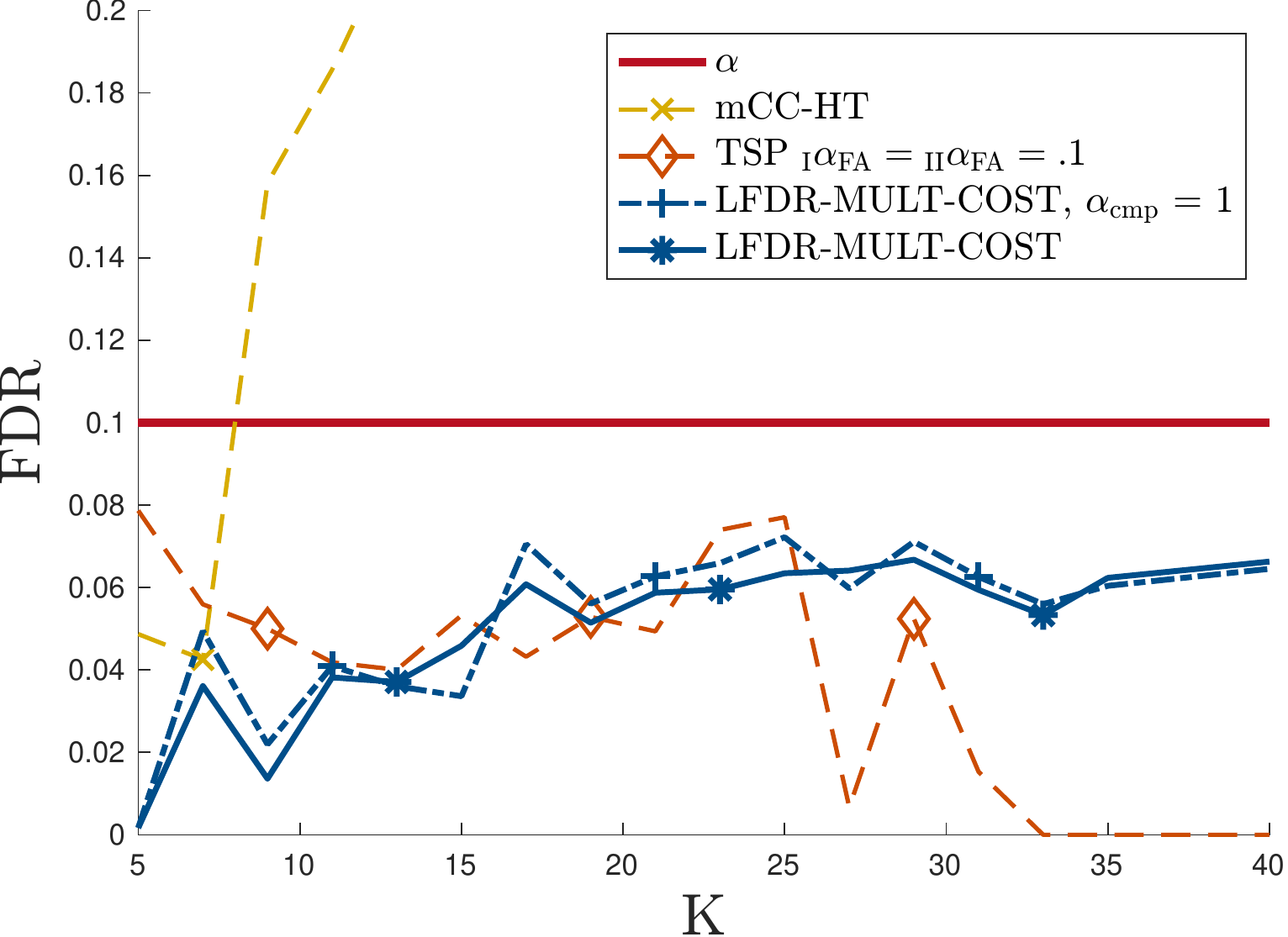}\smallskip
			\caption*{\(\fdr\)}
			\label{fig:exp-b09_atom-fdr}
		\end{subfigure}
		\begin{subfigure}{.49\textwidth}
			\includegraphics[scale=.28]{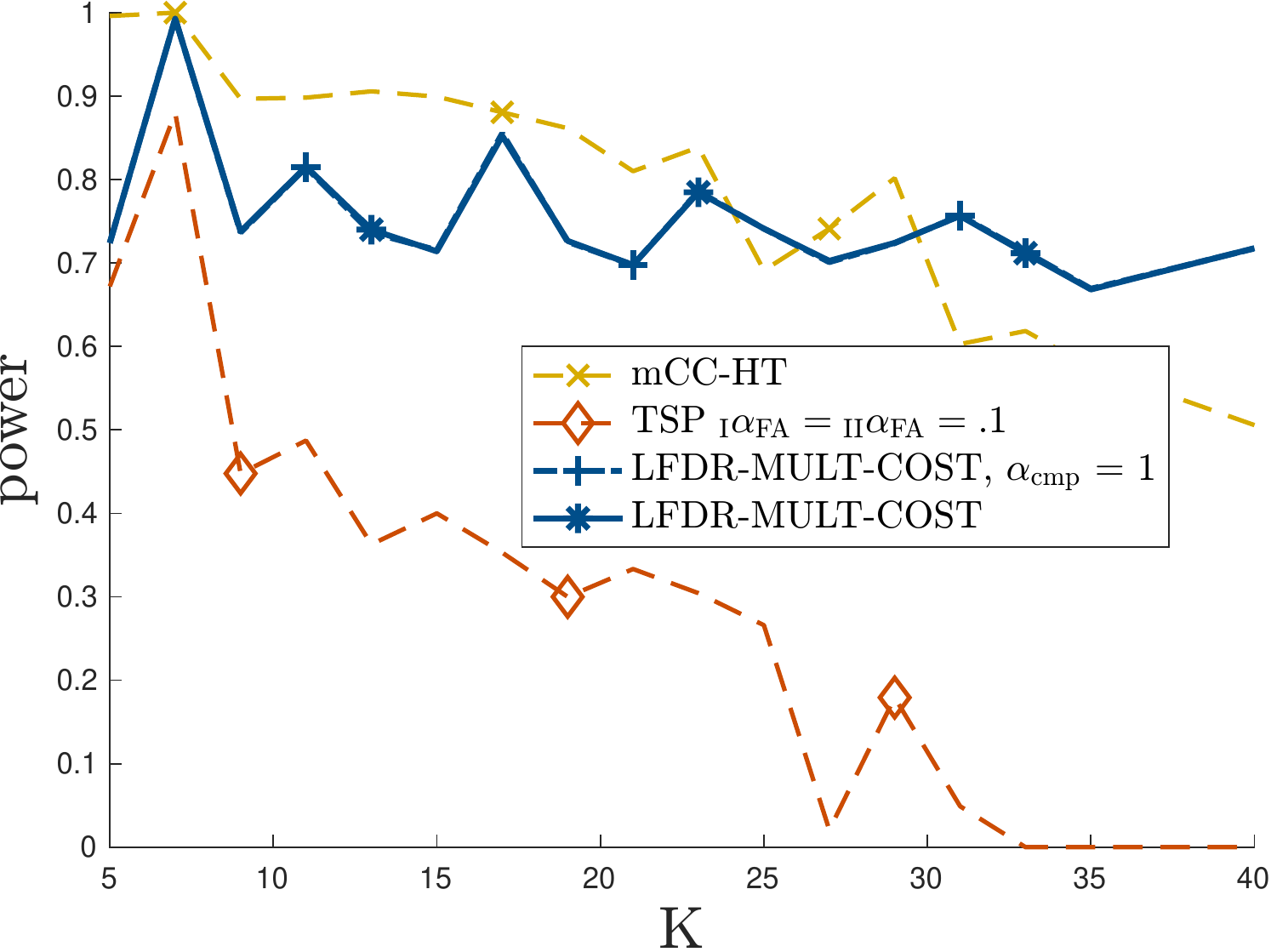}\smallskip
			\caption*{$\mathrm{Power}$}
			\label{fig:exp-b09_atom-pow}
		\end{subfigure}\medskip
		\label{dummy}
		\begin{subfigure}{.49\textwidth}
			\includegraphics[scale=.28]{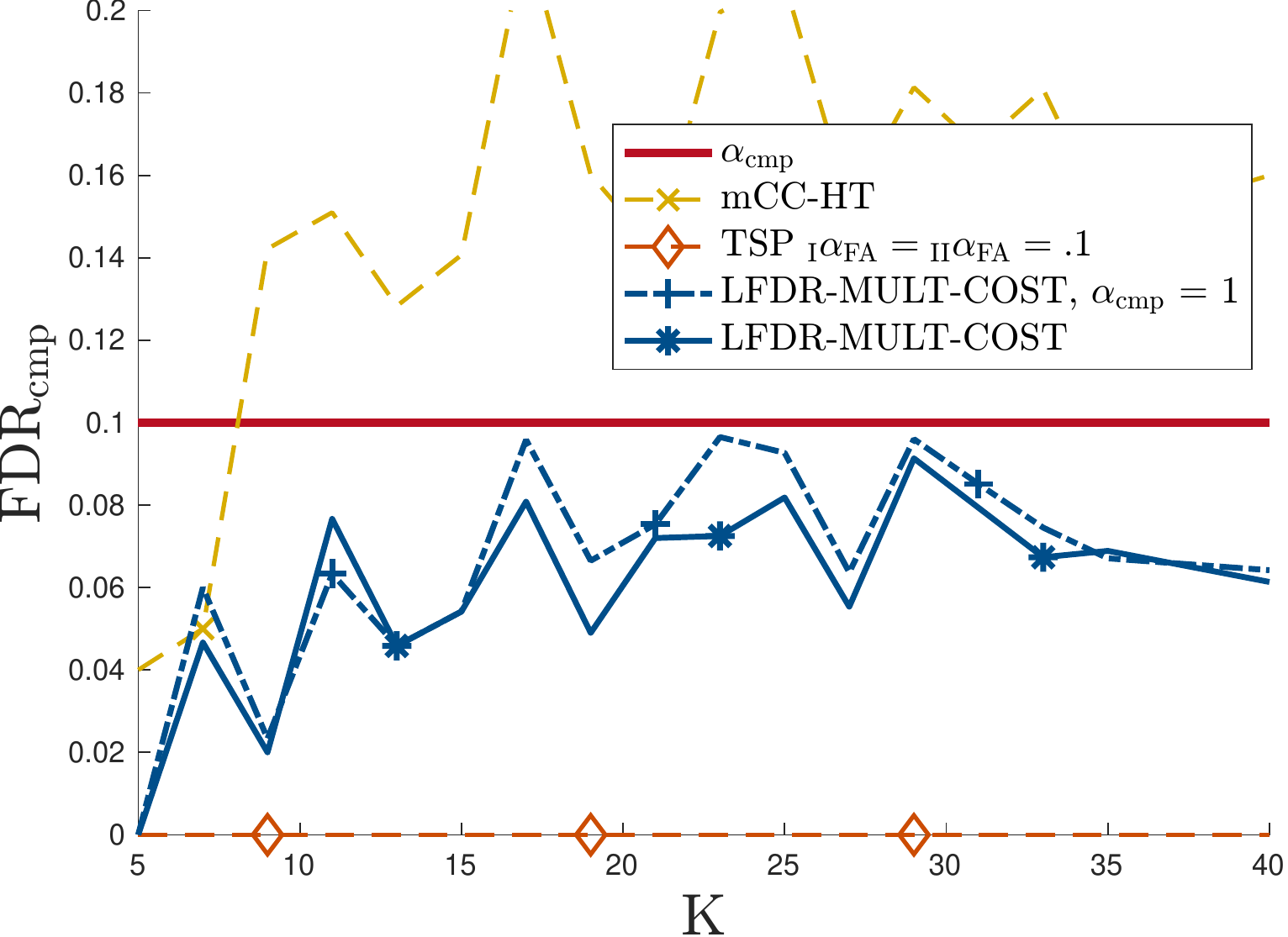}\smallskip
			\caption*{\(\fdrCmp\)}
			\label{fig:exp-b09_cmp-fdr}
		\end{subfigure}
		\begin{subfigure}{.49\textwidth}
			\includegraphics[scale=.28]{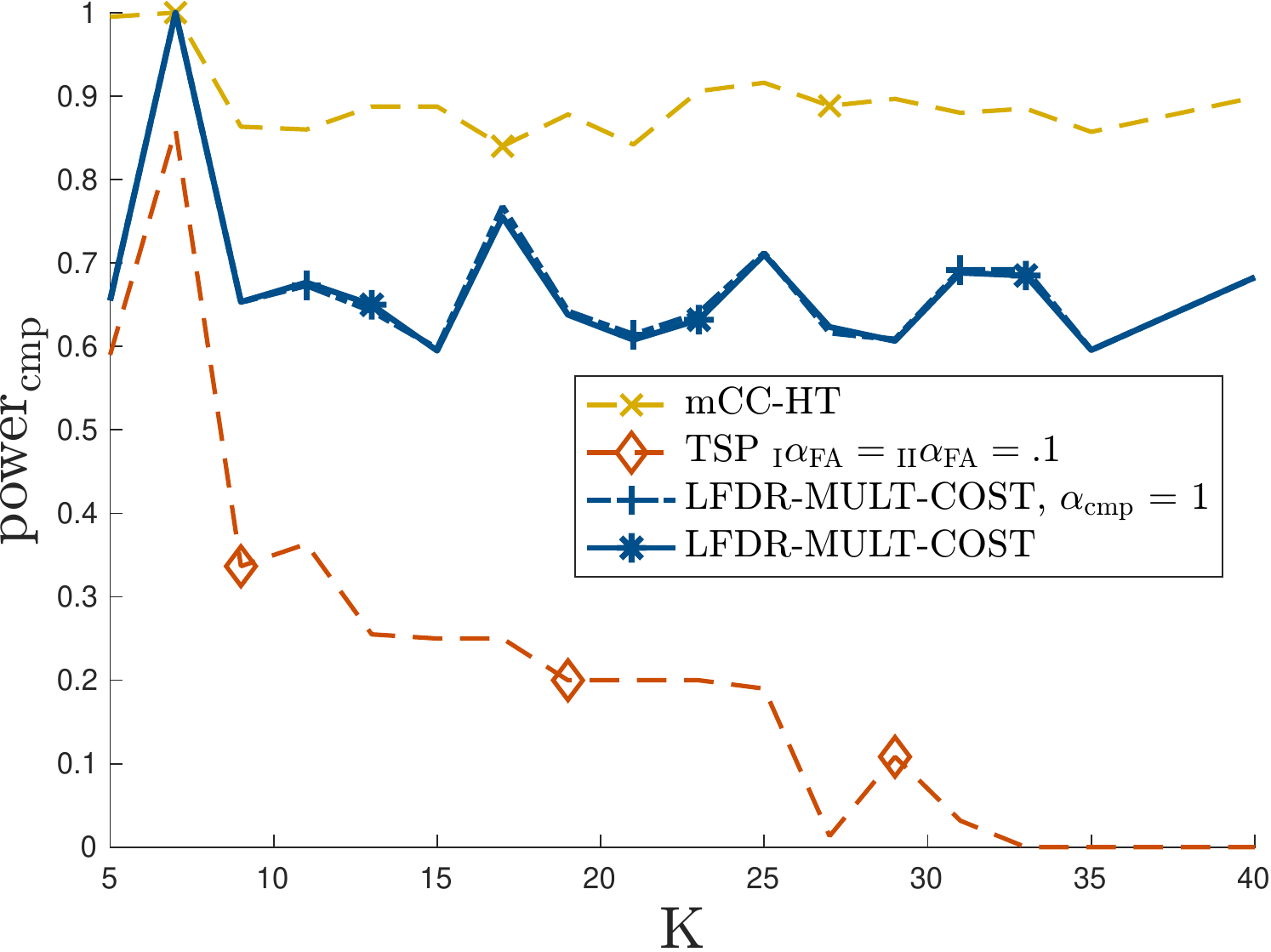}\smallskip
			\caption*{\(\mathrm{Power}_\text{cmp}\)}
			\label{fig:exp-b09_cmp-pow}
		\end{subfigure}
		\medskip
		\caption{performance measures}
		\label{fig:exp-b09_perf}
	\end{subfigure}
	\begin{subfigure}{.44\textwidth}\centering
		\begin{subfigure}{.47\textwidth}
			\centering
			\includegraphics[scale=.2]{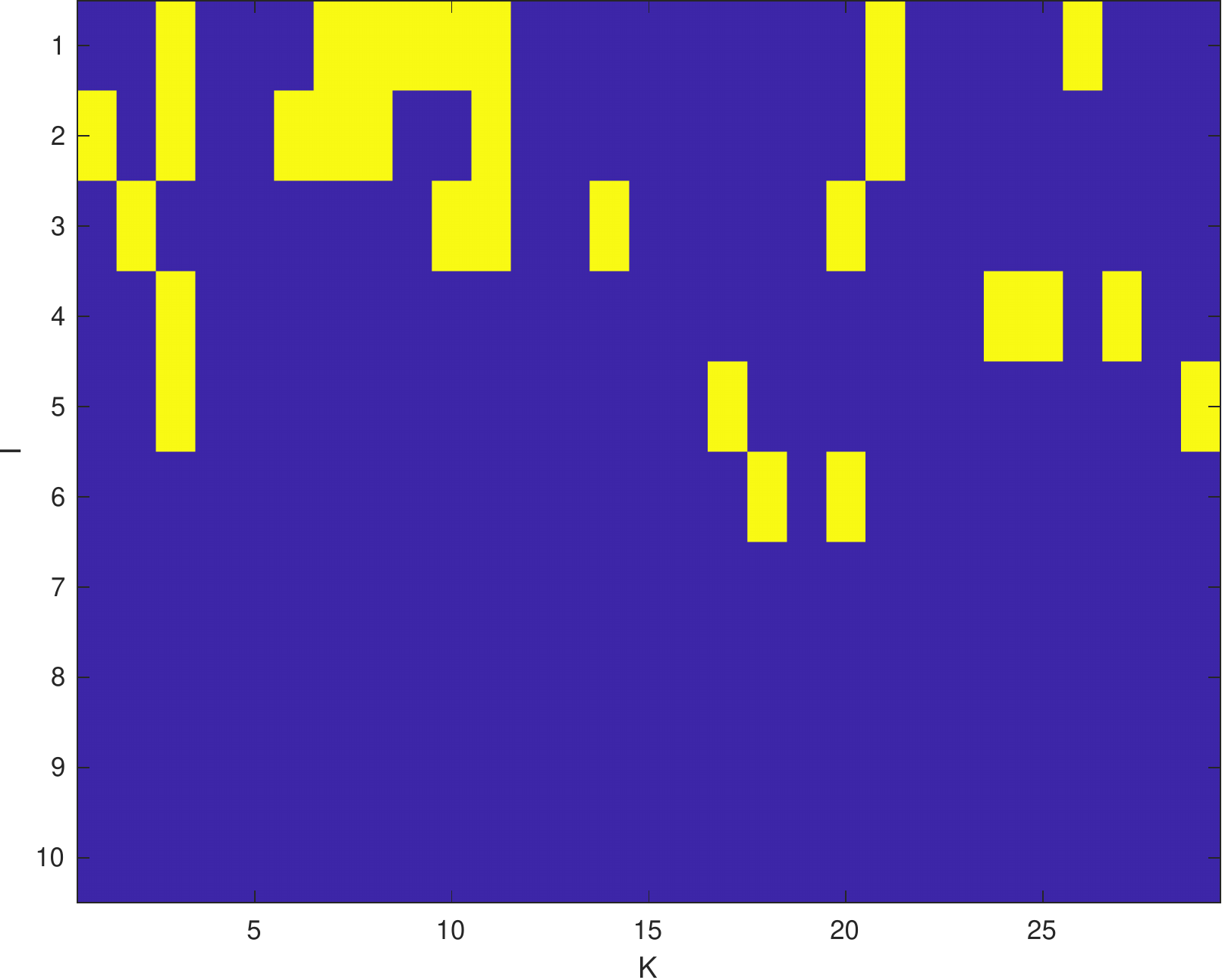}\smallskip
			\caption*{ground truth}
			\label{fig:exp-b09_ex-tru}
		\end{subfigure}
		\begin{subfigure}{.48\textwidth}
			\centering
			\includegraphics[scale=.2]{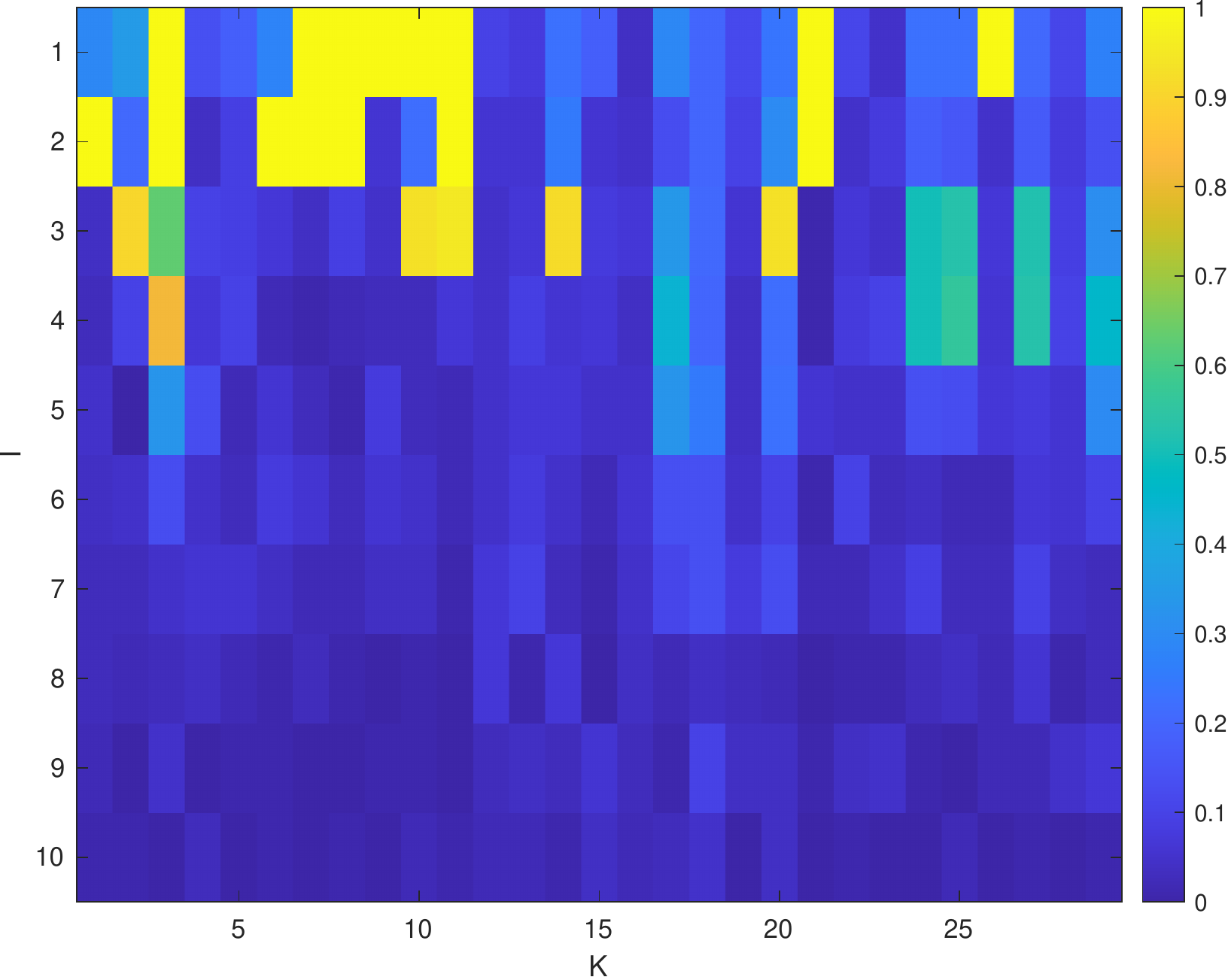}\smallskip
			\caption*{mCCA-HT \cite{Marrinan2018}}
			\label{fig:exp-b09_ex-mar}
		\end{subfigure}\bigskip
		\label{dummy2}
		\begin{subfigure}{.48\textwidth}
			\centering
			\includegraphics[scale=.2]{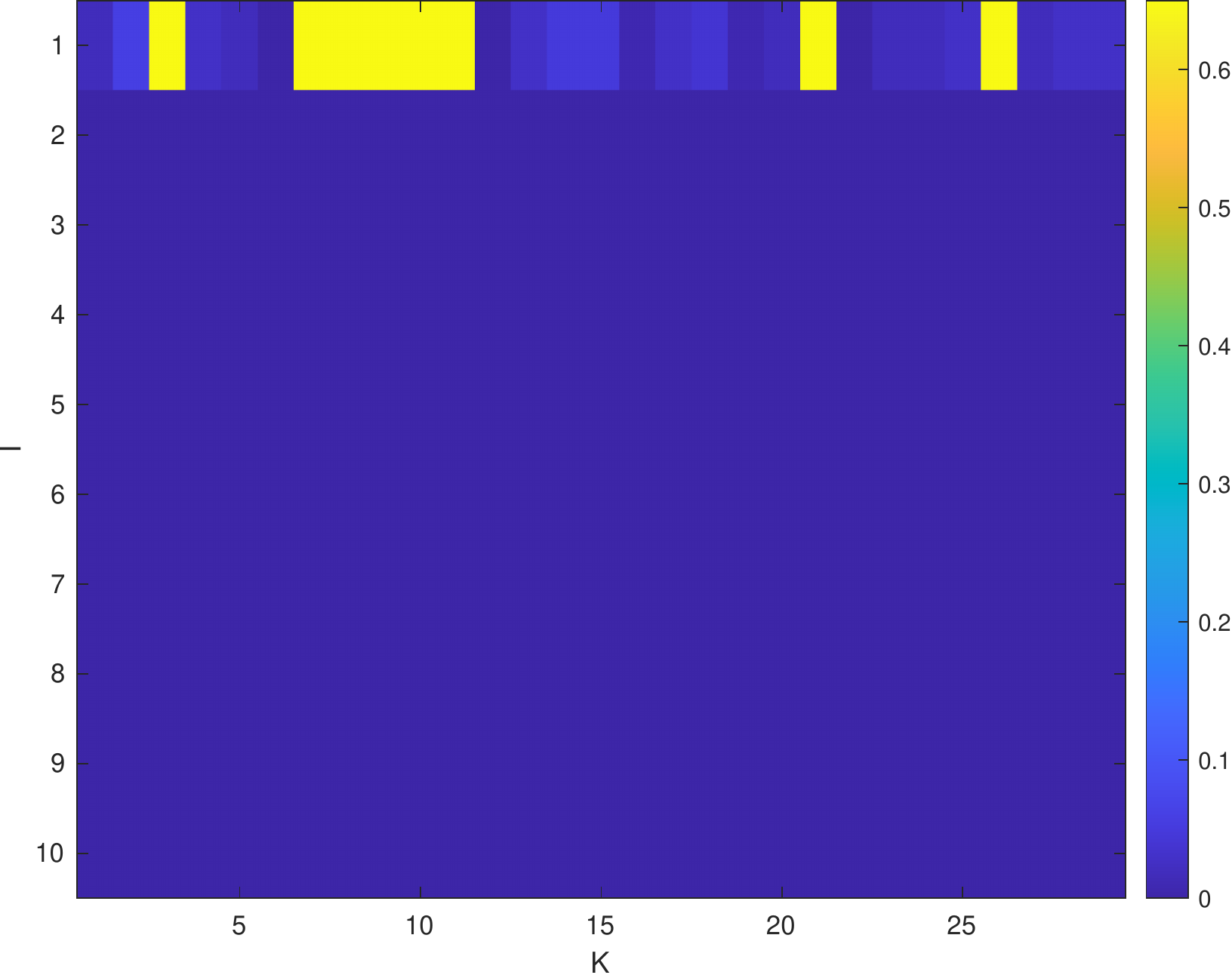}\smallskip
			\caption*{\gls{ts} \cite{Hasija2020}}
			\label{fig:exp-b09_ex-tsp}
		\end{subfigure}
		\begin{subfigure}{.48\textwidth}
			\centering
			\includegraphics[scale=.2]{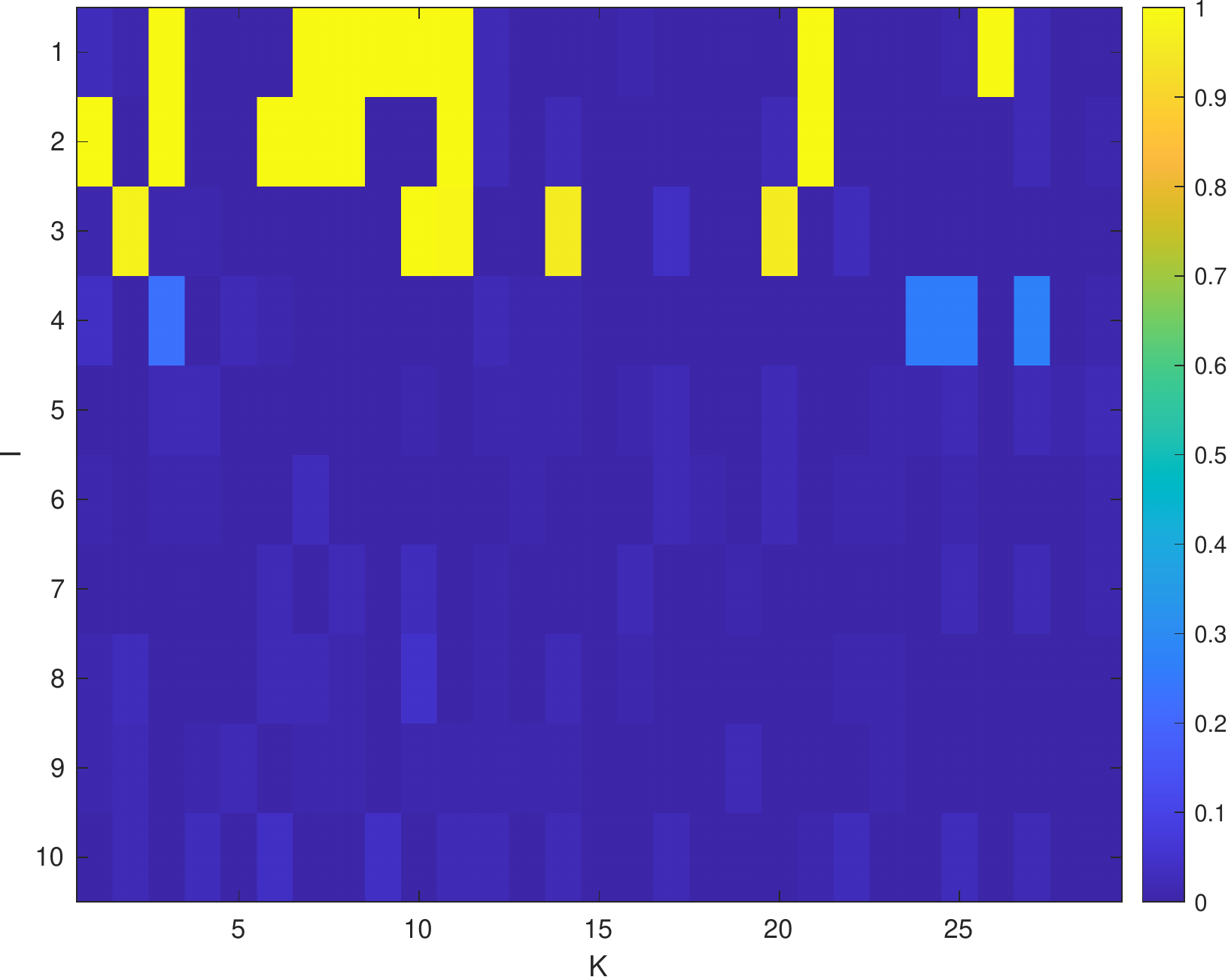}\smallskip
			\caption*{\gls{lfdrmultcost}}
			\label{fig:exp-b09_ex-osp}
		\end{subfigure}
		\bigskip
		\caption{detected correlation structures for $\numSet = 29$}
		\label{fig:exp-b09_examples}
	\end{subfigure}
	\caption{Experiment~3-b - Performance comparison depending on the number of data sets for  $\pi_0 = 0.9$. mCCA-HT yields unreasonably many false positives. \gls{ts} has trouble identifying true correlations as the number of sets grows and as true positives become more and more sparsely distributed across the sets. The detection power with \gls{lfdrmultcost} on the atom level declines very slowly. The decline is due to the decreasing relative sample size. Again, the control of $\fdrCmp\leq 0.1$ comes at no cost regarding detection power while reducing the already well-below the limit residing component FDR marginally in relation to \gls{lfdrmultcost} with nominal level \(\fdrThrCmp = 1\).}
	\label{fig:exp-b09}
\end{figure}

\subsubsection*{Experiment~5} We finally evaluate the robustness of the proposed method to outliers. To this end, we deploy the $\epsilon$-contamination model \cite{Huber2009} where a certain fraction $\epsilon$ of data samples follows a contaminating distribution. Contamination with high-powered Gaussian noise, with a standard deviation of $3$ times the uncontaminated noise standard deviation was proposed in \cite{Huber2009} in Experiment Experiment~5-a. In addition, we also evaluate the much more challenging contamination with a point mass at a fix large value in Experiment~5-b. We generate $\numSam = 1\,000$ samples of $\numSrc = 6$ Gaussian components and noise vectors for $\numSet = 12$ data sets.
\paragraph*{a} All data sets and components get contaminated with a fraction of $\epsilon\in\{0, .25, .5, 1\}$ outliers. For $\epsilon = 0$, there is no contamination. For $\epsilon = 1$, the contaminating distribution has completely replaced the original distribution. The results are shown in Fig.~\ref{fig:exp-R-a}. As we see, both \gls{ts} and \gls{lfdrmultcost} are fairly robust to such contamination, since the detection power and false discovery rates remain close to constant across $\epsilon$.

\begin{figure}
	\begin{subfigure}{.49\textwidth}
		\includegraphics[scale=.5]{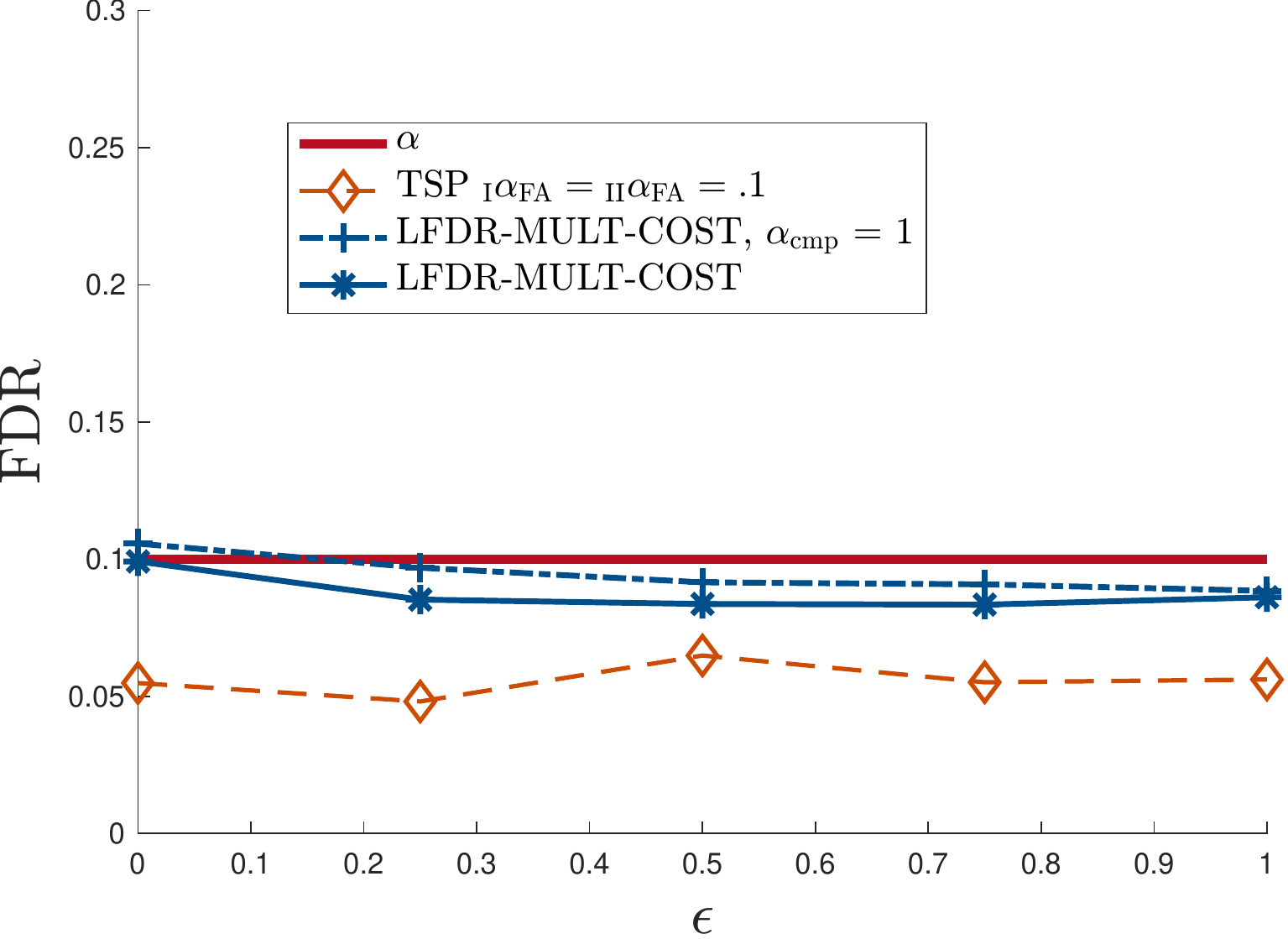}\smallskip
		\caption*{\(\fdr\)}
		\label{fig:exp-R-gauss_atom-fdr}
	\end{subfigure}§
	\begin{subfigure}{.49\textwidth}
		\includegraphics[scale=.5]{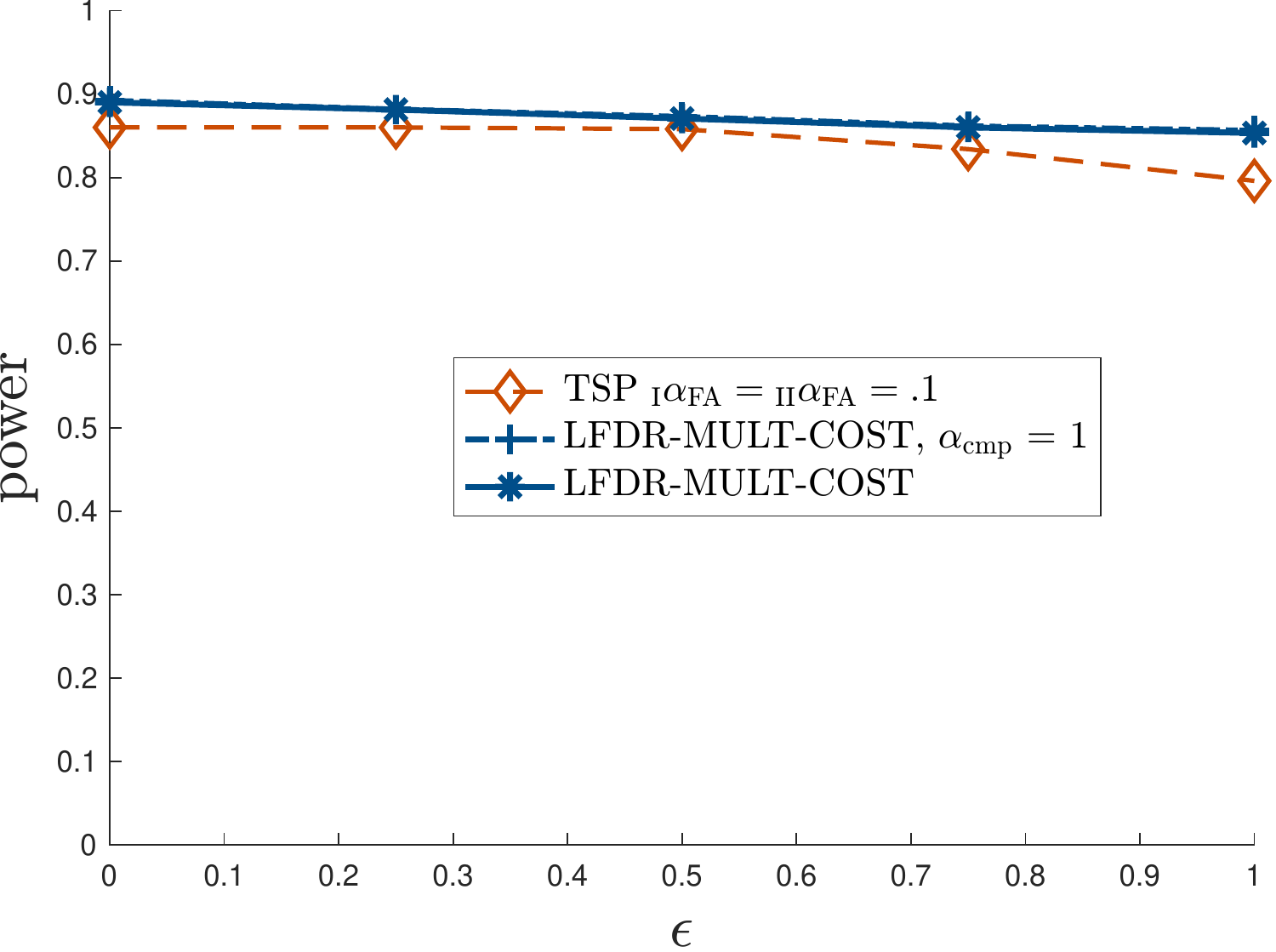}\smallskip
		\caption*{$\mathrm{Power}$}
		\label{fig:exp-R-gauss_atom-pow}
	\end{subfigure}\medskip
	\label{dummy}
	\begin{subfigure}{.49\textwidth}
		\includegraphics[scale=.5]{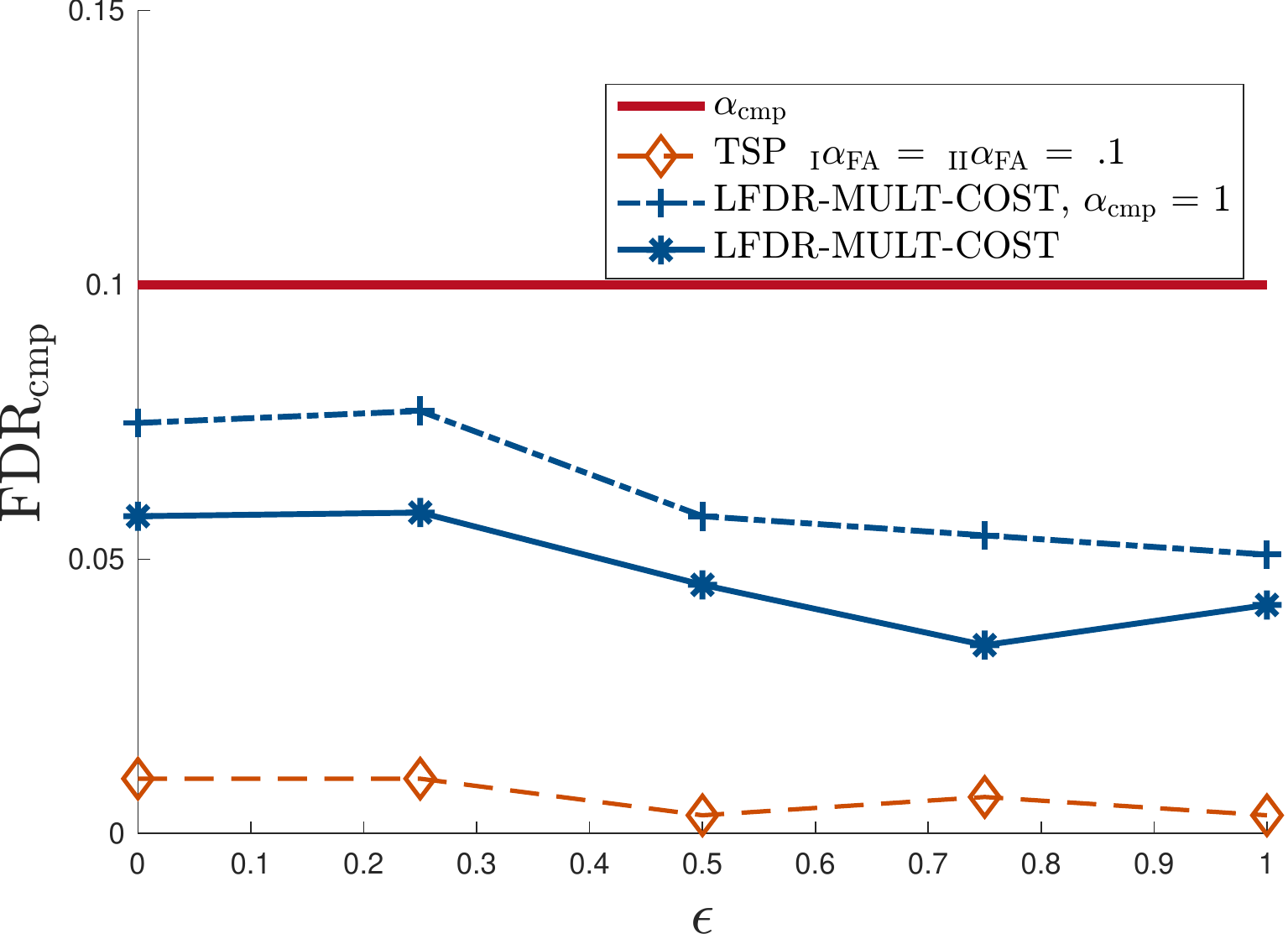}\smallskip
		\caption*{\(\fdrCmp\)}
		\label{fig:exp-R-gauss_cmp-fdr}
	\end{subfigure}
	\begin{subfigure}{.49\textwidth}
		\includegraphics[scale=.5]{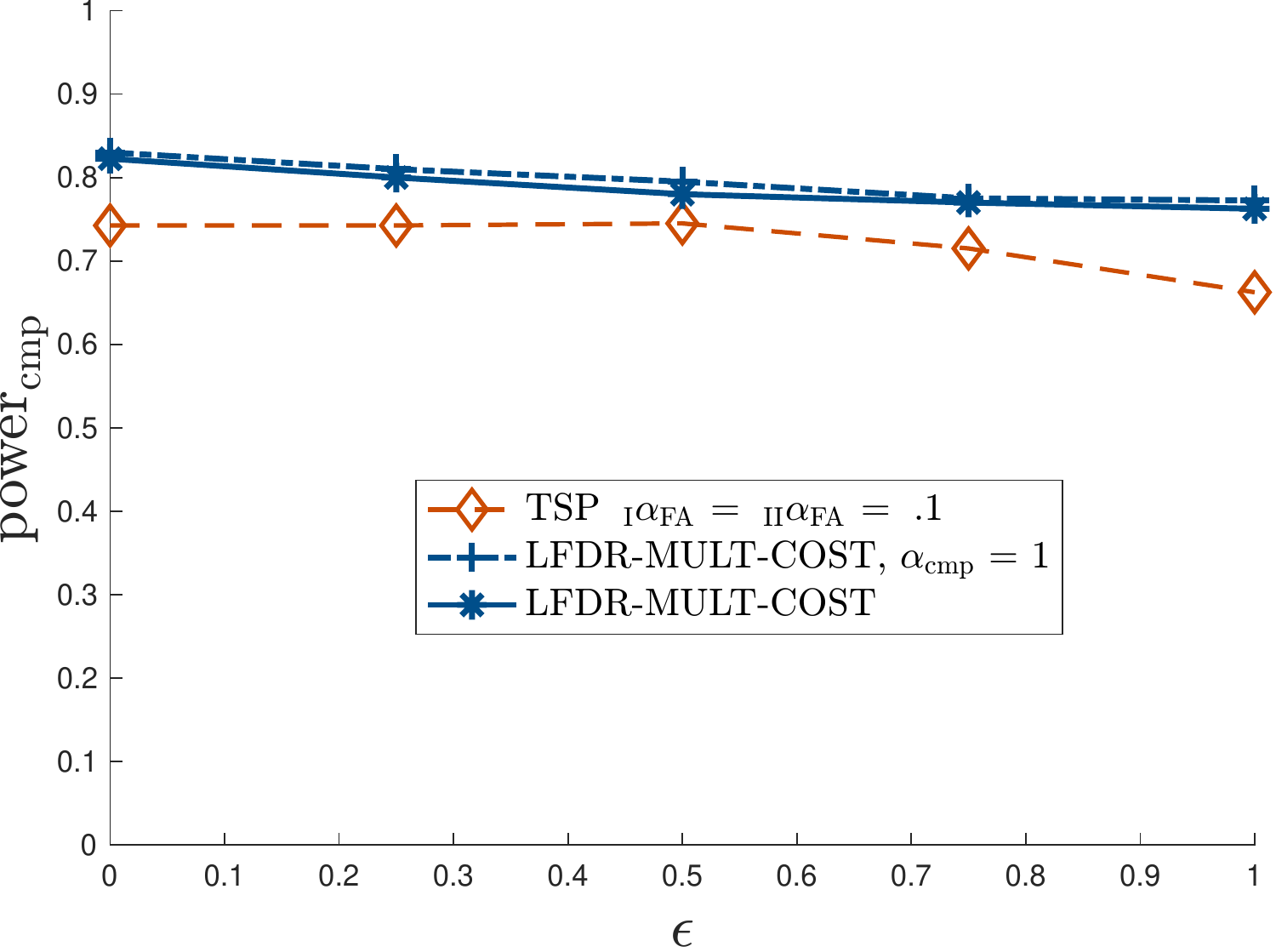}\smallskip
		\caption*{\(\mathrm{Power}_\text{cmp}\)}
		\label{fig:exp-R-gauss_cmp-pow}
	\end{subfigure}
	\caption{Experiment 5-a - Empirical evaluation of the robustness of \gls{ts} \cite{Hasija2020} and the proposed \gls{lfdrmultcost} with and without component-level \gls{fdr} versus outliers. $\epsilon$ is the percentage of contaminated samples by an additive Gaussian contamination with large variance. All methods appear fairly robust to such outliers and yield high detection power and low \gls{fdr}.}
	\label{fig:exp-R-a}
\end{figure}

\paragraph*{b} In this experiment, we contaminate $4$ out of the $6$ rows in the data matrices $\obsMat$ for $8$ out of the $12$ data sets with a point mass of value $10$. This is a very challenging type of outlier, since a point mass creates a strong imbalance in the tails of the contaminated noise distribution. The detection power of both methods is reduced for the contaminated data. Nevertheless, the \gls{fdr} control properties of our proposed \gls{lfdrmultcost} remain intact, as illustrated in Fig.~\ref{fig:exp-R-b}. \gls{ts} produces a higher proportion of false positives under contamination. 

\begin{figure}
	\begin{subfigure}{.49\textwidth}
		\includegraphics[scale=.5]{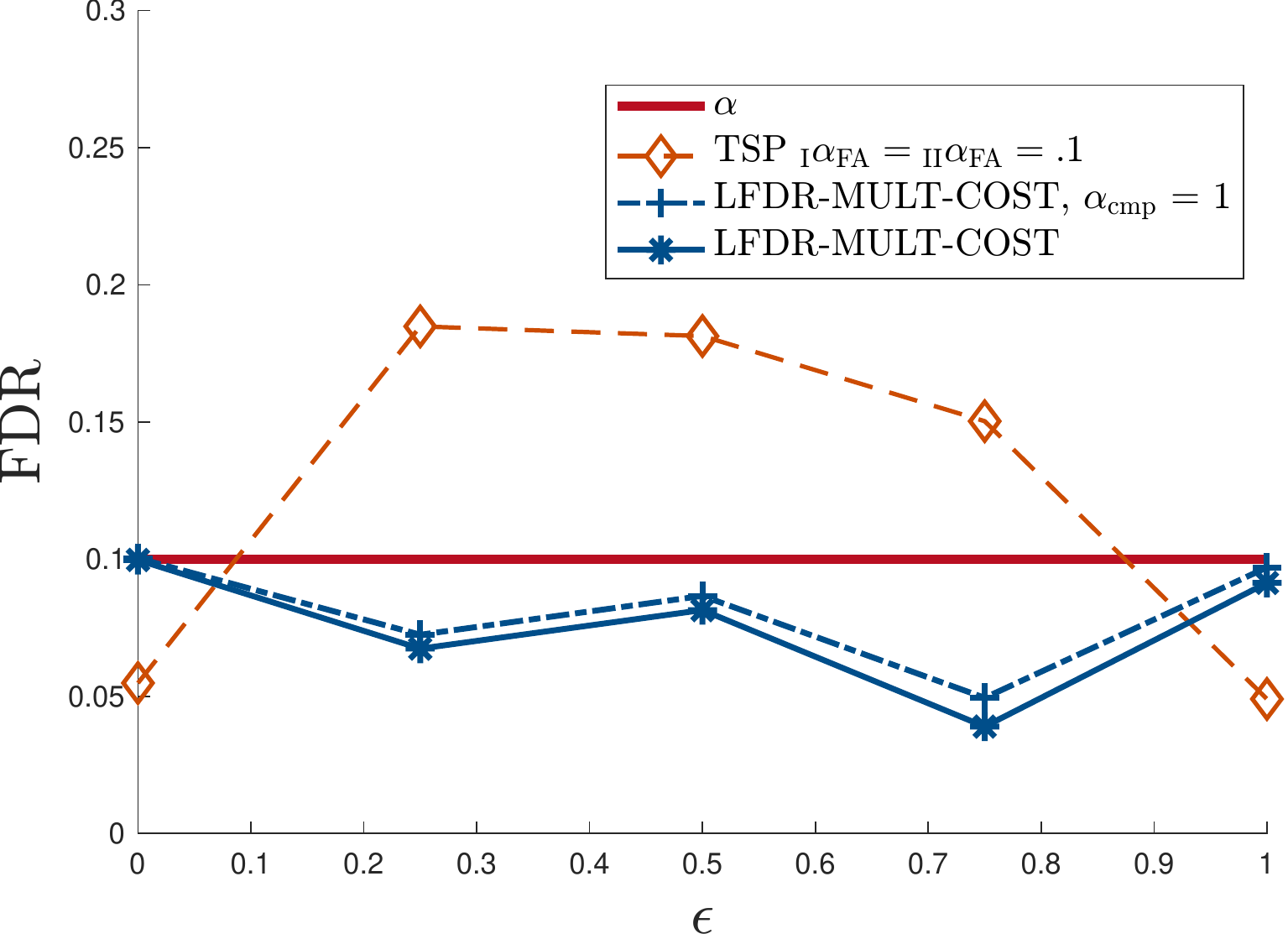}\smallskip
		\caption*{\(\fdr\)}
		\label{fig:exp-R-pm_atom-fdr}
	\end{subfigure}
	\begin{subfigure}{.49\textwidth}
		\includegraphics[scale=.5]{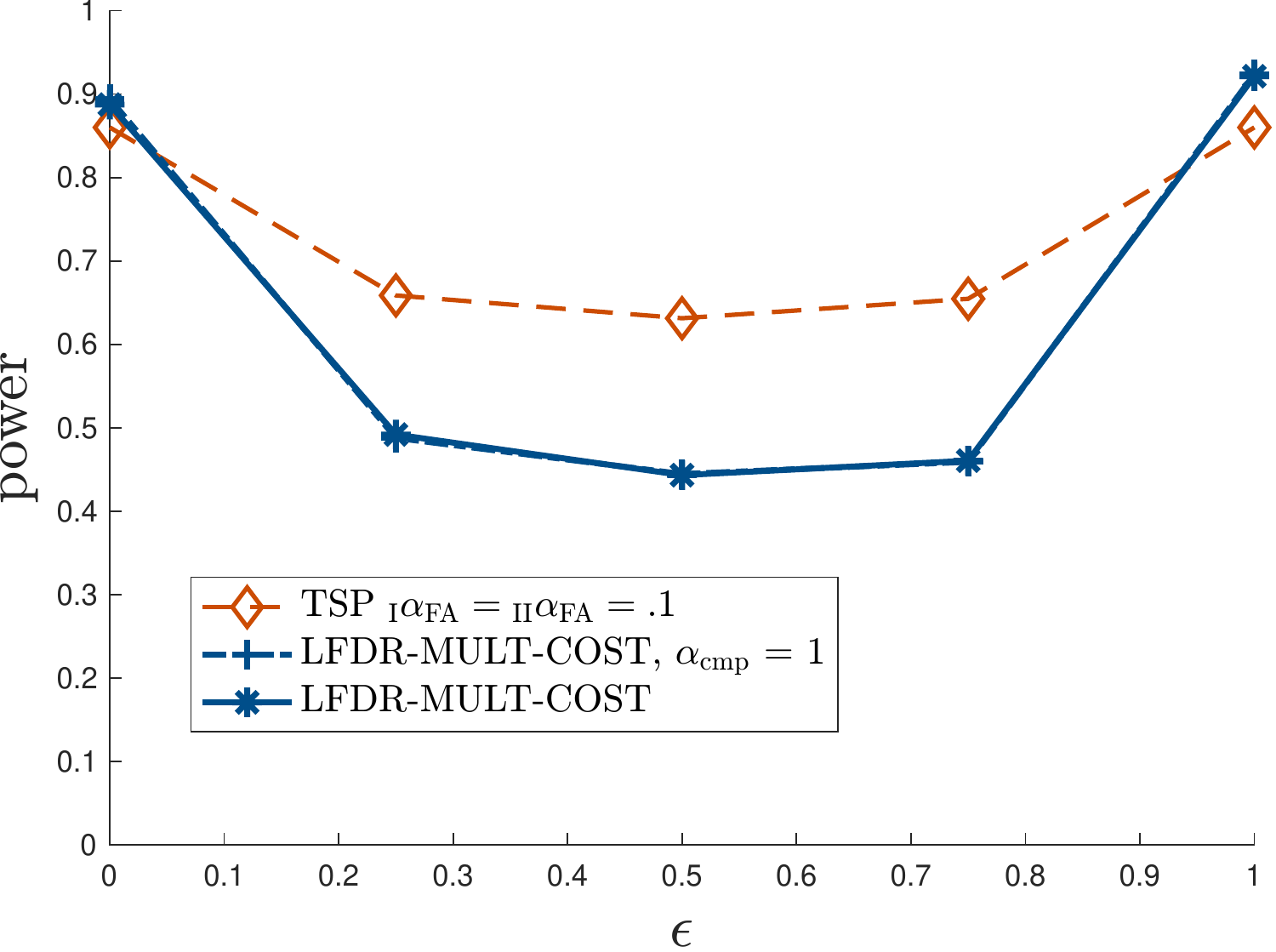}\smallskip
		\caption*{$\mathrm{Power}$}
		\label{fig:exp-R-pm_atom-pow}
	\end{subfigure}\medskip
	\label{dummy}
	\begin{subfigure}{.49\textwidth}
		\includegraphics[scale=.5]{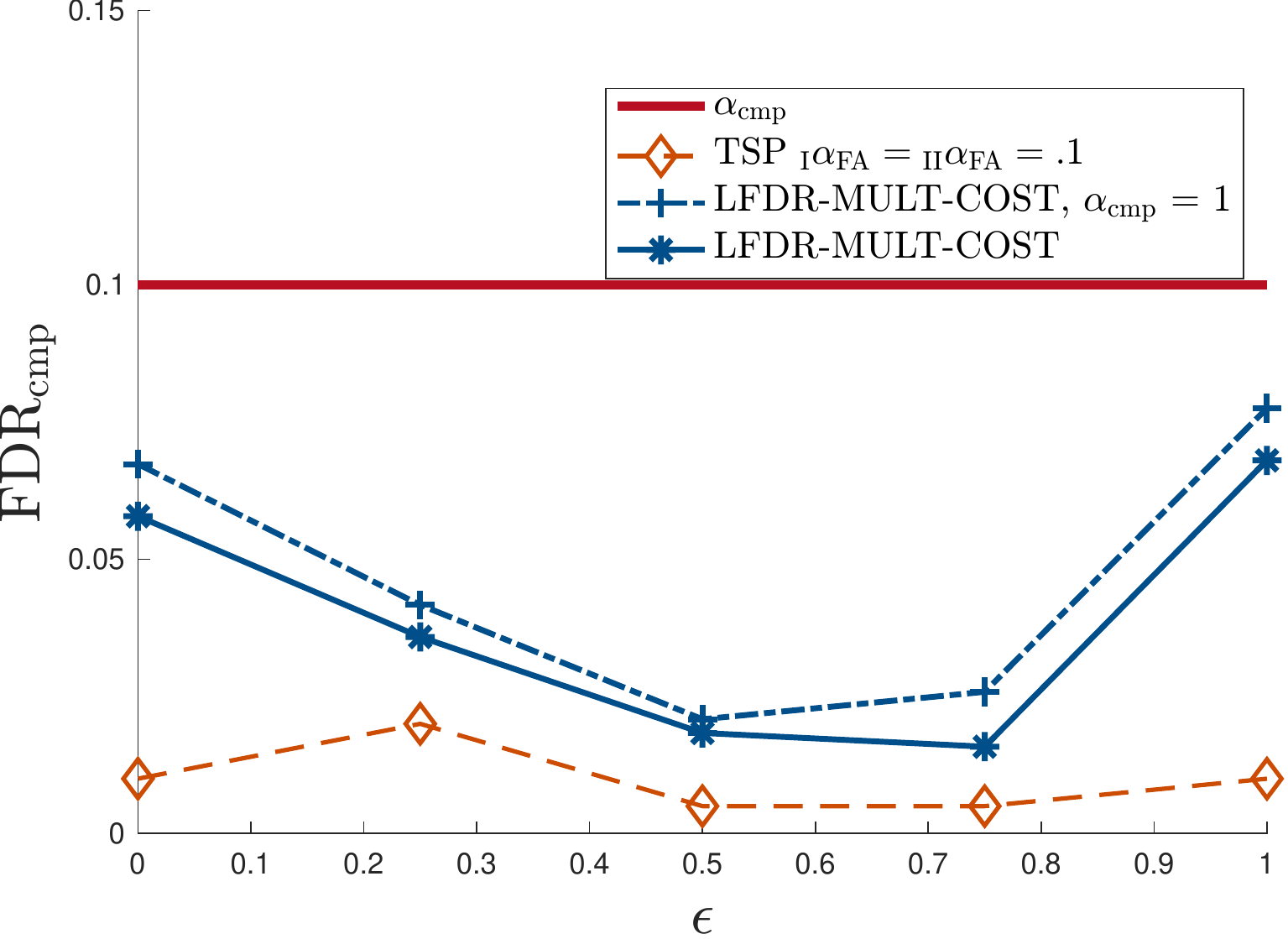}\smallskip
		\caption*{\(\fdrCmp\)}
		\label{fig:exp-R-pm_cmp-fdr}
	\end{subfigure}
	\begin{subfigure}{.49\textwidth}
		\includegraphics[scale=.5]{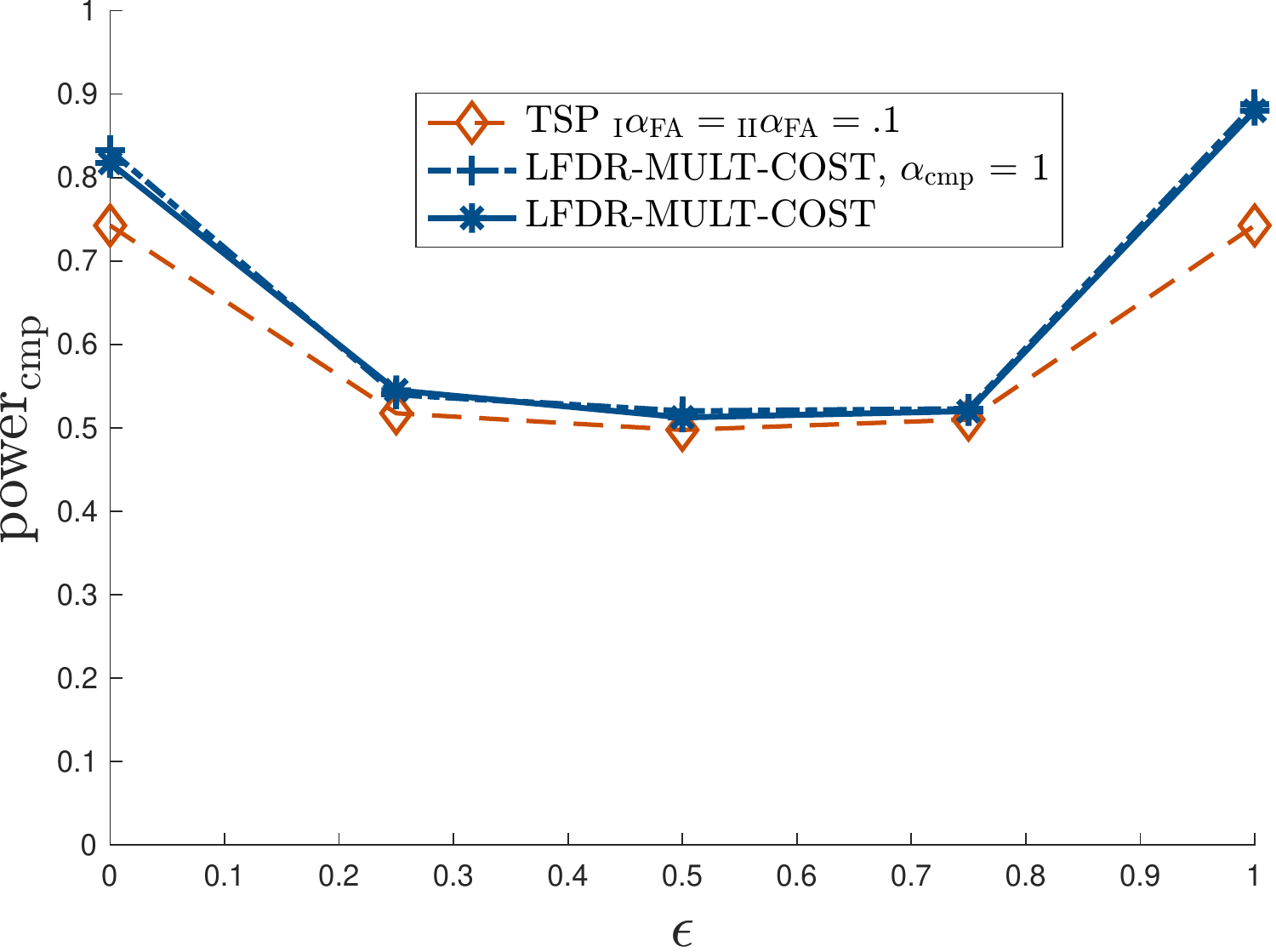}\smallskip
		\caption*{\(\mathrm{Power}_\text{cmp}\)}
		\label{fig:exp-R-pm_cmp-pow}
	\end{subfigure}
	\caption{Experiment 5-b - Empirical evaluation of the robustness of \gls{ts} \cite{Hasija2020} and the proposed \gls{lfdrmultcost} with and without component-level \gls{fdr} versus outliers. $\epsilon$ is the percentage of contaminated samples by an additive point mass contamination at $\delta = 10$. The \gls{fdr} control of our proposed \gls{lfdrmultcost} is robust to this very challenging type of contamination. While the noise distribution is a mixture of the outlier point mass and the uncontaminated noise, the detection power is reduced.}
	\label{fig:exp-R-b}
\end{figure}

\end{document}